  \tikzstyle{vec}=[circle,inner sep=1pt,outer sep=-1pt,fill]
  \tikzstyle{border}=[thick]
  \tikzstyle{favborder}=[border,dotted]
  \tikzstyle{exclborder}=[border,dashed]
\newtheorem{Theorem}{Theorem}[section]
\newtheorem{Proposition}[Theorem]{Proposition}
\newtheorem{Corollary}[Theorem]{Corollary}
\newtheorem{definition}[Theorem]{Definition}
\newtheorem{remark}[Theorem]{Remark}
\newcommand{\CH}{\mathbb{C}_h^{n\times n}} 
\newcommand{\CM}{\mathbb{C}^{n\times n}} 
\newcommand{\CD}{\mathcal{D}_h^{n\times n}} 
\newcommand{\Rho}{R}
\newcommand{\marg}{{\mathsf{marg}}}
\newcommand{\ext}{{\mathsf{ext}}}
\newcommand{\reals}{\mathbb{R}}
\newcommand{\realspace}{\mathbb{R}^n}
\newcommand{\complexs}{\mathbb{C}}
\newcommand{\gambles}{\mathcal{H}}
\newcommand{\domain}{\mathcal{K}}
\newcommand{\mdesirs}{\mathcal{M}}
\renewcommand{\epsilon}{\varepsilon}
 \definecolor{pastelgreen}{rgb}{0.47, 0.87, 0.47}
 \definecolor{lightgreen}{RGB}{222,251,152}
 \definecolor{palegreen}{rgb}{0.6, 0.98, 0.6}
 \definecolor{mossgreen}{rgb}{0.68, 0.87, 0.68}
  	\definecolor{limegreen}{rgb}{0.2, 0.8, 0.2}
  	\definecolor{inchworm}{rgb}{0.7, 0.93, 0.36}
  	\definecolor{grannysmithapple}{rgb}{0.66, 0.89, 0.63}
\definecolor{shadecolor}{rgb}{0.70, 0.89, 0.73} 
  \newenvironment{SnugshadeF}[1][254,239,224]{
   \definecolor{shadecolor}{RGB}{#1}%
  \begin{snugshade}%
}{%
    \end{snugshade}%
}
  \newenvironment{SnugshadeB}[1][254,239,224]{
  \begin{snugshade}%
}{%
    \end{snugshade}%
}
  \newenvironment{SnugshadeE}[1][238,233,233]{
    \definecolor{shadecolor}{RGB}{#1}
  \begin{snugshade}%
}{%
    \end{snugshade}%
}
\newcounter{example}
\newenvironment{example}[1][]{\refstepcounter{example}\par\noindent\textit{~\\Example~\theexample. #1}\small }
{\par}
\newcounter{MethodF}
\newenvironment{MethodF}[1][]{\refstepcounter{MethodF}\par\noindent\textbf{#1}\begin{upshape}}
{\end{upshape} \par}
  \newcounter{MethodB}
\newenvironment{MethodB}[1][]{\refstepcounter{MethodB}\par\noindent\textbf{#1}\begin{upshape}}
{\end{upshape} \par}
\begin{document}

\title{Quantum mechanics: The Bayesian theory generalised to the space of
    Hermitian matrices}

\author{Alessio Benavoli}
\email[]{alessio@idsia.ch}
\author{Alessandro Facchini}
\email[]{alessandro.facchini@idsia.ch}
\author{Marco Zaffalon}
\email[]{zaffalon@idsia.ch}
\affiliation{Istituto Dalle Molle di Studi sull'Intelligenza Artificiale (IDSIA). \\ Lugano (Switzerland)}

\date{\today}

\begin{abstract}
We consider the problem of gambling on a quantum experiment and enforce rational behaviour by a few rules. These rules yield, in the classical case, the Bayesian theory of probability via duality theorems. In our quantum setting, they yield the Bayesian theory generalised to the space of Hermitian matrices. This very theory is quantum mechanics: in fact, we derive all its four postulates from the generalised Bayesian theory. This implies that quantum mechanics is self-consistent. It also leads us to reinterpret the main operations in quantum mechanics as probability rules: Bayes' rule (measurement), marginalisation (partial tracing), independence (tensor product).
To say it with a slogan, we obtain that quantum mechanics is the Bayesian theory in the complex numbers.

\end{abstract}

\pacs{}

\maketitle


\section{Introduction}
Quantum mechanics (QM) is based on four main axioms -- or postulates. These axioms were derived after a long process of trial and error, which involved a considerable amount of guessing by the originators of the theory. The motivation for the axioms is not always clear and even to experts the basic axioms of 
QM often appear counter-intuitive \cite{nielsen2010quantum}.

The aim of this paper is twofold:
\begin{itemize}
\item to derive quantum mechanics  from a single principle of self-consistency;
\item to show that QM is just the Bayesian theory generalised to the complex Hilbert space.
\end{itemize}

To this end, we will base our analysis on a gambling system, where a subject, whom we call Alice, has only to specify which gambles on a certain quantum experiment she is willing to accept. We require a few simple rules of self-consistency on Alice's assessments. If the rules are satisfied, then we say that the assessment are coherent and that Alice is rational. These coherence rules have a long history in the foundational approaches to traditional probability as well as decision theory; imposing them in the classical case allows one, for instance, to derive a very general and powerful version of the Bayesian theory \cite{smith1961,williams1975, walley1991}.

And so essentially do we in this paper: by following the ideas of the classical case while extending them to the quantum one, we derive a theory of probability apt to work in the complex Hilbert space. We show that this very theory is QM. It has probabilities (density matrices), Bayes' rule (measurement), marginalisation (partial tracing), independence (tensor product). Such a formulation of QM is an actual generalisation of the Bayesian theory, which we recover as a special case. But it obviously exhibits also peculiar behaviours that are impossible to obtain in the classical case, such as a strong form of stochastic dependence: entanglement -- and in fact the QM violation of Bell's inequalities is tantamount, in our formulation, to the violation of the probabilistic notion of Fr\'echet bounds. 

The idea of justifying QM from rationality principles on a gambling system is not new. It has been described in pioneering work about the Bayesian interpretation of QM (QBism) \cite{Caves02,Fuchs01,Schack01,Fuchs02,Fuchs03,Schack04,Fuchs04,Caves07,Appleby05a,Appleby05b,Timpson08,longPaper, 
Fuchs&SchackII,mermin2014physics} and  Pitowsky's quantum gambles \cite{pitowsky2003betting,Pitowsky2006}. QBism and Pitowsky aim  at  proving the coherence of an $n$-dimensional quantum system by considering
the probability assignments derived from QM (via Born's rule or, more in general, Gleason's theorem). 

Our approach is different. We do not aim at showing that the probabilities \emph{derived} from QM are Bayesian; we aim at showing that \emph{QM itself is the Bayesian theory}, once it is generalised to work in the complex  space of hermitian matrices, $\mathbb{C}_h^{n\times n}$ -- this also shows in a definite sense that QM is strongly self-consistent. We obtain this result by bringing to light a duality relation between sets of gambles defined on $\mathbb{C}_h^{n\times n}$ and density matrices, in the same way as there is a duality relation between sets of real-valued gambles and probabilities in the classical case.

\subsection{Outline of the paper}

Section~\ref{sec:classicDes} reviews the foundations of de Finetti's Bayesian theory from the point of view of the theory of coherent sets of desirable gambles. It discusses the inherent generalisation of this theory that allows it to deal smoothly also with sets of probabilities (Bayesian robustness/imprecise probability).

Section~\ref{sec:intro-clp} develops the foundations of the desirability theory in the quantum case.

Section~\ref{sec:1staxiom} works out the duality relation between gambles and ``probabilities'' in the quantum case. On this basis it derives from desirability the first axiom of quantum mechanics. It also derives Born's rule as a direct outcome of the notion of expectation in the theory of desirability and discusses the relation to Gleason's theorem.

Section~\ref{sec:2ndaxiom} derives the second axiom of QM from the desirability definition of conditioning, and its implications when it is used as a rule to compute future beliefs.

Section~\ref{sec:3ndaxiom} derives the third axiom of QM from considerations of temporal coherence: that is, the fact that the desirability theory should be self-consistent also through time.

Section~\ref{sec:4ndaxiom} derives the fourth axiom of quantum mechanics as a consequence of the definition of ``stochastic'' independence in desirability. It also shows that entanglement is a strong form of ``stochastic'' independence that is specific to quantum desirability, in that it cannot be achieved in the classical case; it discusses how this leads to the violation of Fr\'echet bounds in the quantum case, that is, Bell's inequalities.

Section \ref{sec:qbysm} presents a discussion of QBism and a comparison with the approach presented in this paper.

Section~\ref{sec:conc} concludes the paper. The proofs of all the main results are reported in the Appendix.


\section{Probability and desirability}\label{sec:classicDes}

\subsection{Axiomatisation  of probability}
Modern probability theory was axiomatised  by  \citet{kolmogorov1950foundations}. 
By the Kolmogorov axiomatisation, probability is defined on a 
probability space, which is a triplet $(\Omega,\mathcal{F},P)$, where $\Omega$ is an arbitrary set (elements  $\omega$  
of  $\Omega$ are said to be elementary events), $\mathcal{F}$ is an arbitrary $\sigma$-algebra of subsets of $\Omega$ 
(elements of $\mathcal{F}$ are said to be events)  and $P$ is a measure that yields 
values in the segment $[0,1]$  of the real line (\textit{first axiom}),  normalised by the condition $P(\Omega)=1$ (\textit{second axiom}) {and satisfying the $\sigma$-additivity condition $P(\cup_{i=1}^{\infty} A_i)=\sum_{i=1}^{\infty }P(A_i)$ for each sequence ($A_i: ~i=1,2,3,\dots$)  of pairwise disjoint members  of  $\mathcal{F}$ (\textit{third 
axiom})}. 

Kolmogorov theory also contains the additional axiomatic definition of conditional
probabilities. By definition, the conditional probability of $B$ given the event $A$ is defined as follows  
(\textit{fourth axiom}):
$$
P(B|A)=\frac{P(A\cap B)}{P(A)} \text{ with } P(A)>0.
$$ 

From these axioms, it is also possible to derive  marginalisation, law of total probability, and so on.

To derive this formulation, Kolgomorov exploited  Borel's measure-theoretic approach to probability 
(hence $\sigma$-additivity). The axioms were instead motivated by Kolmogorov from ideas by von-Mises  about the 
frequency definition of  probability. He used frequency reasons to take the segment $[0,1]$, define the normalisation 
$P(\Omega)=1$ and to 
motivate  additivity  of probability and the  definition of 
conditioning \cite{kolmogorov1950foundations}.

\subsection{Subjective probability and desirability}
\Citet{finetti1937} established a different foundation of probability theory on the notion of ``coherence'' (self-consistency). This means, roughly speaking, that one should assign and manipulate 
probabilities so as not to possibly be made a \textit{sure loser} in a gambling system based on them. 
De Finetti used this approach to give a subjectivistic foundation to the theory of probability,\footnote{{De Finetti actually considered only finitely additive probabilities.
Since in the paper we always assume $\Omega$ to be finite, $\sigma$-additivity and finite additivity coincide.}} by showing that all  probability 
axioms could be derived  by imposing the principle of self-consistency alone on a subject's odds about an uncertain experiment. 
The subject is considered rational if she chooses her odds so that there is no bet that leads her to a sure loss (no Dutch books are possible).
In this way, since mathematically odds are the inverse of probabilities,  de Finetti provided a justification of Kolmogorov's axioms as a rationality criterion
on a gambling system.

\citet{williams1975} and then \citet{walley1991} have shown that it is possible to justify probability in a way that is even simpler, more general and elegant. 
To understand this betting framework, we consider an experiment whose outcome  $\omega$ belongs to a certain  space of possibilities $\Omega$ (e.g., the experiment may be tossing a coin or throwing a dice). We can model Alice's beliefs about $\omega$ by asking her whether she accepts to engage
in certain risky transactions, called \textit{gambles}, whose outcome depends on the actual
outcome of the experiment. Mathematically, a gamble is a bounded real-valued function on $\Omega$, $g:\Omega 
\rightarrow \mathbb{R}$, which is interpreted as an uncertain reward in a linear utility scale. If Alice accepts a gamble $g$, this means that she commits herself to 
receive $g(\omega)$ \emph{utiles}\footnote{Abstract units of utility, indicating the satisfaction derived from an economic transaction; we can approximately identify it with money provided we deal with small amounts of it \cite[Sec.~3.2.5.]{finetti1974}.} if the experiment is performed and the outcome of the experiment eventually happens 
to be the event $\omega \in \Omega$. Since $g(\omega)$ can be negative, Alice can also lose utiles. Therefore Alice's acceptability of a gamble depends on her knowledge about the experiment.

The  set $\domain$ of gambles that Alice accepts is called her set of \emph{desirable} (or \emph{acceptable}) \emph{gambles}. 
One such set is said to be \emph{coherent} when it satisfies a few simple rationality criteria.
These criteria are very easy to understand; we introduce them with an example.

\begin{SnugshadeE}
\begin{example}
\label{ex:coin0}
 Let us assume that the uncertain experiment  is about the outcome of the toss of a fair coin:
 $\Omega=\{Head,Tail\}$.
 Before the experiment, the Bookie asks Alice what are the gambles that she is willing to accept.
 A gamble $g$ in this case has two components $g(Head)=g_1$ and $g(Tail)=g_2$.
 If Alice accepts $g$ then  she commits herself to receive/pay $g_1$ if  the outcome is Head and   $g_2$ if Tail.
Since a gamble is in this case an element of $\mathbb{R}^2$,  $g=(g_1,g_2)$, we can plot the gambles  Alice accepts  in a two-dimensional Cartesian coordinate system with coordinate axes $g_1$ and $g_2$. 
 
 First, the Bookie asks Alice if she is willing to accept $g=(1,1)$  -- this gamble means that she receives $1$ utile, no matter the result of the experiment. Alice accepts this gamble,
given that it increases her wealth without ever decreasing it. Similarly, she accepts $h=(1,0)$, $f=(0,1)$, any positive scaling of these gambles
$\nu f,\gamma h$ (with $\nu,\gamma>0$) and their sum  $ f+ h$.
This means that she is also accepting  any positive combination $ \nu f+ \gamma h$, since
the resulting vector is always non-negative.
Summing up, Alice always accepts the first quadrant, Figure~\ref{fig:coin0}(a).
We have excluded the gamble~$(0,0)$, because we will  only consider gambles that are strictly desirable for Alice, that is, gambles for which  she always gains something (even an epsilon):
$$
\domain_1=\{g \in \mathbb{R}^2 \mid g\neq 0, ~g_i \geq0 \}.
$$

Then, the Bookie asks Alice if she is willing to accept $g=(-1,-1)$. Since she loses $1$ utile no matter the outcome of the experiment,  Alice does not accept this gamble, nor any gamble $\nu f+\gamma h$ with  $h=(-1,0)$, $f=(0,-1)$ and $\nu,\gamma>0$. In other words, Alice always rejects the third quadrant, Figure~\ref{fig:coin0}(b).

Then  the Bookie asks Alice about $g=(-0.1,1)$ -- she loses $0.1$ if Head and wins $1$  if Tail.
Since Alice knows that the coin is fair, she accepts this gamble as well as
all the gambles of the form  $\nu g$ with  $\nu>0$, because this is just a change of scale.
Similarly, she accepts all the gambles $\nu g + \gamma h$ for any $\nu,\gamma>0$ 
and $h \in \domain_1$, since these gambles are even more favourable for her.
Now, the Bookie asks Alice about $g=(1,-0.1)$ and the argument is symmetric to the above case.
We therefore obtain the following set of desirable gambles (see Figure~\ref{fig:coin0}(c)):
$$
\domain_2=\{g \in \mathbb{R}^2 \mid 10g_1+g_2\geq 0 \text{ and } g_1+10g_2\geq 0\}.
$$

Finally, the Bookie asks Alice about $g=(-1,1)$ -- she looses $1$ if Head and wins $1$  if Tail.
Alice could accept this gamble because she knows that the coin is fair.
However, since we are only considering gambles that are strictly desirable for Alice, she only accepts
gambles of the form  $g=(-1,1)+\epsilon$ with $\epsilon>0$ and, by rationality,
all gambles $\nu g$ with $\nu>0$ and $\nu g + \gamma h$ for any $\nu,\gamma>0$ and $h \in \domain_2$.
A similar conclusion can be derived for the symmetric gamble $g=(1,-1)$.
Figure~\ref{fig:coin0}(d) is her final set of desirable gambles about the experiment concerned with the toss of a fair coin, which in a formula becomes
$$
\domain_3=\{g \in \mathbb{R}^2 \mid g_1+g_2> 0\}.
$$

Alice does not accept any other gamble. In fact, if Alice  would also accept for instance $h=(-2,0.5)$ then, 
since she has also accepted $g=(1+\epsilon,-1)$, i.e., $g\in \domain_3$, she must also accept $g+h$ (because this gamble should also be favourable to her).
However, $g+h=(-1+\epsilon,-0.5)$ is always negative, Alice always loses utiles in this case. In other words, by accepting $h=(-2,0.5)$ Alice incurs a sure loss -- she is irrational.

\centering
\begin{minipage}{0.23\textwidth}
\begin{tikzpicture}[scale=1.5]
    \draw[->] (-1,0) coordinate (xl) -- (1,0) coordinate (xu) node[right] {$g_1$};
    \draw[->] (0,-1) coordinate (yl) -- (0,1) coordinate (yu) node[above] {$g_2$};
        \draw (0,0) circle (1.pt);
    \begin{pgfonlayer}{background}
      \draw[border] (0,0) -- (0,1) coordinate (a1away);
      \draw[border] (0,0) -- (1,0) coordinate (a2away);
      \fill[blue!50,  opacity=0.5] (0,0) -- (1,0) -| (1,1) --(0,1);
    \end{pgfonlayer}
  \end{tikzpicture}
   \captionof{subfigure}{}
  \end{minipage}
  \begin{minipage}{0.23\textwidth}
   \begin{tikzpicture}[scale=1.5]
    \draw[->] (-1,0) coordinate (xl) -- (1,0) coordinate (xu) node[right] {$g_1$};
    \draw[->] (0,-1) coordinate (yl) -- (0,1) coordinate (yu) node[above] {$g_2$};
    \draw (0,0) circle (1.pt);
        \begin{pgfonlayer}{background}
      \draw[border] (0,0) -- (0,1) coordinate (a1away);
      \draw[border] (0,0) -- (1,0) coordinate (a2away);
      \fill[blue!50,  opacity=0.5] (0,0) -- (1,0) -| (1,1) --(0,1);
    \end{pgfonlayer}
    \begin{pgfonlayer}{background}
      \fill[pattern=north west lines, pattern color=black] (0,0) -- (-1,0) -| (-1,-1) --(0,-1);
    \end{pgfonlayer}
  \end{tikzpicture}
     \captionof{subfigure}{}
  \end{minipage}
  \begin{minipage}{0.23\textwidth}
        \begin{tikzpicture}[scale=1.5]
    \draw[->] (-1,0) coordinate (xl) -- (1,0) coordinate (xu) node[above] {$g_1$};
    \draw[->] (0,-1) coordinate (yl) -- (0,1) coordinate (yu) node[above] {$g_2$};
    \draw (0,0) circle (1.pt);
    \begin{pgfonlayer}{background}
      \draw[border] (0,0) -- (0,-1) coordinate (a1away);
      \draw[border] (0,0) -- (-1,0) coordinate (a2away);
      \fill[pattern=north west lines, pattern color=black] (0,0) -- (-1,0) -| (-1,-1) --(0,-1);
    \end{pgfonlayer}
    \begin{pgfonlayer}{background}
      \draw[-] (0,0) -- (-0.1,1) coordinate (a1away);
      \draw[-] (0,0) -- (1,-0.1) coordinate (a2away);
      \fill[blue!50,  opacity=0.5] (0,0) -- (a1away) -| (a2away) --(0,0);
    \end{pgfonlayer}
  \end{tikzpicture}
     \captionof{subfigure}{}
  \end{minipage}  
  \begin{minipage}{0.23\textwidth}
        \begin{tikzpicture}[scale=1.5]
    \draw[->] (-1,0) coordinate (xl) -- (1,0) coordinate (xu) node[above] {$g_1$};
    \draw[->] (0,-1) coordinate (yl) -- (0,1) coordinate (yu) node[above] {$g_2$};
    \draw (0,0) circle (1.pt);
    \begin{pgfonlayer}{background}
      \draw[dashed] (0,0) -- (-1,1) coordinate (a1away);
      \draw[dashed] (0,0) -- (1,-1) coordinate (a2away);
      \fill[blue!50,  opacity=0.5] (0,0) -- (a1away) -| (a2away) --(0,0);
    \end{pgfonlayer}
  \end{tikzpicture}
     \captionof{subfigure}{}
  \end{minipage}
\captionof{figure}{\centering Alices' sets of coherent strictly desirable gambles for the experiment of tossing a fair coin.  \label{fig:coin0}}
  
\end{example}
\end{SnugshadeE}

Summing up, the rationality criteria are:
\begin{enumerate}
 \item Any gamble $g\neq0 $ such that $g(\omega)\geq0$ for each $\omega \in \Omega$ must be desirable for Alice, given 
that it may increase Alice's capital without ever decreasing it
 (\textbf{accepting partial gain)}. 
 \item Any gamble $g$ such that $g(\omega)\leq0$ for each $\omega \in \Omega$ must not be desirable for Alice, given 
that it may only decrease Alice's capital without ever increasing it  (\textbf{avoiding partial loss}). 
 \item If Alice finds $g$ to be desirable, that is,
$g \in \domain$, then also $\lambda g$ must be desirable for her for any $0<\lambda \in \mathbb{R}$ (\textbf{positive homogeneity}).
We can think of this rule simply as an invariance to a change of currency.
\item If Alice finds $g_1$ and $g_2$ desirable, that is,
$g_1,g_2 \in \domain$, then she also must accept $g_1+g_2$, i.e., $g_1+g_2 \in \domain$ (\textbf{additivity}). 
\end{enumerate}
If the set of desirable gamble $\domain$ satisfies these property we say that it is \emph{coherent}, or, equivalently, that Alice is rational. 

Note how by these four axioms we express some truly minimal requirements: the first means that Alice likes to increase her wealth; the second that she does not like to decease it; the third and fourth together simply rephrase the assumption that Alice's utility scale is linear. In spite of the simple character of these requirements, these four axioms alone define a very general theory of probability, of which, for instance, the Bayesian theory of probability is a special case \cite{williams1975,walley1991}. In the next sections we will in fact show how we can derive probability from desirability. This points also to the fact that the kind of coherence embodied by the four axioms above is more primitive that the usual Dutch-book coherence; it does subsume it but it is more fundamental. 

Apart from allowing us to derive the traditional axioms and theory of probability, 
the above desirability setting allows us to easily define operations such as marginalisation, 
conditioning, independence, integration, etc., all at the level of gambles
 \cite{miranda2008a,Couso20111034,Vicig2000235,zaffalon2010e,zaffalon2011a}.
 We point the reader to \citet{augustin2014} for a recent survey. Such operations become the 
common operations we are used to when we focus on the probabilistic statements we derive from desirability.

The theory of desirable gambles cannot be used directly for QM, because of the presence of complex numbers. In the next section, we verify  that we can extend such a theory to the space 
of complex $n \times n$ matrices ($n$-dimensional quantum system). In doing so, we will also extend and employ the operations mentioned above, as taken from the original theory of desirability, into the generalised version needed by QM.


\section{Quantum Desirability}
\label{sec:intro-clp}
Our aim in the next sections is to lay the foundations of the theory of desirability in the case of QM. We call it \emph{quantum desirability}.
We will show that quantum desirability includes the theory of desirable gambles presented in the previous section as a particular case. 
We will exploit this fact to present at the same time examples of  quantum desirability and classical desirability (e.g., quantum vs.\ classical coins). This is done in order to help the reader, who is not familiar with desirability, to understand the connection between classical desirability and probability as well as quantum desirability and quantum mechanics.

\subsection{Foundations of desirability}\label{sec:foundations}
The first step for defining quantum desirability is to specify what is the experiment,
what is a gamble on this experiment and how the payoff for the gamble is computed.
{
To this end, we consider an experiment relative to an $n$-dimensional  quantum system and two subjects: the gambler (Alice) and the bookmaker.
The $n$-dimensional quantum system is prepared by the bookmaker in some quantum state.
We assume that Alice has her personal knowledge about the experiment (possibly no knowledge at all).}

\begin{enumerate}
\item   {
The bookmaker 
 announces that he will measure the quantum system along  its $n$ orthogonal directions and so the outcome of the measurement is an element of $\Omega=\{\omega_1,\dots,\omega_n\}$,  with $\omega_i$ denoting the elementary event ``detection along $i$''. 
Mathematically,  it means that the quantum system is measured along its eigenvectors,\footnote{We mean the eigenvectors of the density matrix of the quantum 
system.} i.e., the projectors\footnote{A projector $\Pi$ is a set of $n$ positive semi-definite matrices in $\CH$ such that   $\Pi_i\Pi_k=0$, $(\Pi_i)^2=\Pi_i=(\Pi_i)^\dagger$,  $\sum_{i=1}^n \Pi_i=I$.} $\Pi^*=\{\Pi^*_{1},\dots,\Pi^*_{n}\}$
and $\omega_i$ is the event ``indicated'' by the $i$-th projector. The bookmaker is fair, meaning that he will correctly perform the experiment and report
 the actual results to Alice.} 
  \item {Before the experiment, Alice declares the set of gambles she is willing to accept.  Mathematically, a gamble $G$ on this experiment 
is a Hermitian matrix in $\CM$, the space of all Hermitian  $n \times n$ matrices being denoted by $\CH$.  We will denote the set of gambles Alice is willing to accept by $\domain \subseteq \CH$.}
\item By accepting  a gamble $G$, Alice commits herself to receive  $\gamma_{i}\in \reals$ utiles  if the outcome of the experiment eventually happens to be 
$\omega_i$. The value $\gamma_{i}$ is defined from $G$ and $\Pi^{*}$ as follows:
 \begin{equation}
  \Pi^{*}_{i}G\Pi^{*}_{i}=\gamma_{i}\Pi^{*}_{i} \text{ for } i=1,\dots,n.
 \end{equation} 
It is a real number since $G$ is Hermitian.
\end{enumerate}
This gambling system has been partially inspired by that of \citet{pitowsky2003betting}.\footnote{The important difference is that we define the gambles and payoffs directly in $\CH$.}

We recall that, by accepting a gamble $G$, Alice commits herself to receive $\gamma_{i}$ whatever event,  indicated by  $\Pi^{*}_i$, occurs. Since $\gamma_{i}$ can be negative, Alice can lose utiles and hence the  desirability of a gamble depends on Alice's beliefs about the quantum experiment.
{Note that in the paper we use the starred notation, $\Pi^*=\{\Pi_i^*\}_{i=1}^n$, to denote the $n$ orthogonal directions of the quantum state  prepared by the bookmaker. Conversely, a generic set of projectors is denoted without star, $\Pi=\{\Pi_i\}_{i=1}^n$.}

\begin{SnugshadeE}
 \begin{example}[Classical coin.]
\label{ex:coinA}
Let us consider again the classical coin tossing experiment. The possible outcomes are $\Omega=\{Head,Tail\}$ and they can be associated 
one-to-one
 to the canonical basis on $\mathbb{R}^2$, $\Omega=\{e_1,e_2\}$. The  measurement $\Pi^*$  here can be represented 
by the projectors
  $$
  \Pi^*_1=e_1e_1^T=\left[\begin{array}{cc}
         1 &0 \\ 0 &0 \\
        \end{array}\right], ~~\Pi^*_2=e_2e_2^T=\left[\begin{array}{cc}
         0 &0 \\ 0 &1 \\
        \end{array}\right],
  $$
  and therefore they are completely known by Alice.
  Since  $\Pi^*_iG\Pi^*_i$ depends only on the diagonal elements of $G$, without loss of generality Alice can 
restrict herself to only consider gambles $G$ that are diagonal Hermitian matrices.
  Since the diagonal elements of a diagonal Hermitian matrix are real numbers, the gambles are actually vectors 
$g=(g_1,g_2)$ in $\mathbb{R}^2$, i.e.,
  $G=diag(g)$.  Therefore, the experiment payoffs are:
  $$
   \Pi^*_1G\Pi^*_1=g_1\Pi^*_1, ~~~\Pi^*_2G\Pi^*_2=g_2\Pi^*_2,
  $$
i.e.,  Alice receives $g_1$ if the result of the toss is \textit{Head} and $g_2$ if it is \textit{Tail}.
  For instance,  the gamble $g=(1,1)$ means that no matter the result of the experiment Alice will receive $1$ 
utile, while the gamble   $h=(1,-2)$ means that she will receive $1$ utile if the outcome is \textit{Head} and she will lose $2$ utiles if \textit{Tail}. 
 If Alice accepts the gamble $h$, this means that she believes that \textit{Head} is more probable than \textit{Tail}.
Classical desirability can be regarded to be only about gambles that are diagonal Hermitian matrices.
 \end{example}
\end{SnugshadeE}

 \begin{SnugshadeE}
 \begin{example}[Quantum coin.]
\label{ex:spinA}
Let us consider now a two-level quantum system (qubit) such as for instance the vertical or horizontal polarisation of a photon 
$\Omega=\{V,H\}$.
The gambles $G$ are in this case two-dimensional Hermitian matrices ($n=2$). If Alice accepts the gamble $G$ and 
the experiment is performed, she receives $\gamma_1$ utiles
if the photon is detected along the direction indicated by $\Pi^*_1$ and  $\gamma_2$ if the photon is 
detected along the direction $\Pi_2^{*}$. 
For instance,  the gamble $G_1=I=[1,0;0,1]$ should be desirable for Alice because no matter
$\Pi^*$  and the outcome of the experiment $\Pi^{*}_i$, she will always receive $1$ utile, since
$\Pi^{*}_iI\Pi^{*}_i=\Pi^{*}_i$. 

Consider now the gamble $G_2=[1,-\iota;\iota,-2]$, with $\iota$ being the imaginary unit.
 Alice knows she will be rewarded by the bookie as follows:
 $$
 \Pi^{*}_1G_2\Pi^{*}_1=\gamma_1\Pi^{*}_1, ~~~\Pi^{*}_2G_2\Pi^{*}_2=\gamma_2\Pi^{*}_2,
 $$
 if $ \Pi^{*}$ is performed and the outcome is  $\Pi^{*}_1$ or, respectively, $\Pi^{*}_2$.
 {Alice can decide to accept $G_2$ based on her beliefs on the quantum system.
For instance, if she knows that  the directions of the states} of the 
quantum system are $\Pi^{*}_1=[1,-\iota;\iota,1]/2$ and $\Pi^{*}_2=[1,\iota;-\iota,1]/2$,
 then by accepting the gamble $G_2$ she   commits herself  to receive 
 $0.25$ utile if  $\Pi^{*}_1$  happens and to lose $0.75$ utiles  if $\Pi^{*}_2$ happens.
 \end{example}
 \end{SnugshadeE}
 
We now recall some well-known results from linear algebra. Each  Hermitian matrix  $G$ in $\CH$ can be decomposed 
 as  $G={U}{\Lambda}{U}^{\dagger}$, where $U$ is an $n \times n$ unitary complex matrix, i.e., $U{U}^{\dagger}={U}^{\dagger}U=I$, and $\Lambda$ is diagonal (eigenvalues matrix) with $n$ real-valued diagonal elements.  A matrix $G \in \CH$ is said to be:
\begin{description}
 \item[PSD]   Positive Semi-Definite, if all its eigenvalues are  positive (denoted as $G\geq 0$);
 \item[PSDNZ]  PSD and non-zero,  it is PSD and $G \neq 0$ (denoted as $G\gneq 0$);
  \item[PD]    Positive Definite, if all its eigenvalues are  strictly positive (denoted as $G>0$). 
 \end{description}
Similar definitions hold for negative semi-definite, negative semi-definite and non-zero, and negative definite ones.
A known result  for PSD and PSDNZ matrices is the following. Let $G \in \CH$, it holds that 
 $G\geq 0$  if and only if   $CGC^\dagger \geq0$  for any matrix  $C \in \CM$. 
 Moreover, if $G\gneq 0$ and $\Pi=\{\Pi_1,\dots,\Pi_n\}$ is any projection measurement, then there 
is $j\in\{1,\dots,n\}$ such that $ \Pi_jG\Pi_j \gneq0$.
For the convenience of the reader, the proof of this result is given  in Proposition \ref{pro:CGC>0} in the Appendix.
 
 Let us now go back to our experiment. Denote by $\gambles^+=\{G\in\CH:G\gneq0\}$ the
subset of all PSDNZ matrices  in $\CH$: the set of \emph{positive gambles}.
The set of negative gambles is similarly given by $\gambles^-=\{G\in\CH:G\lneq0\}$.  Alice examines the gambles in $\CH$ and comes up with the subset $\domain$ of
the gambles that she finds desirable. Alice's rationality is characterised as follows.

  \begin{SnugshadeB}
\begin{MethodB}[Alice is rational] if  the following conditions hold:
\begin{enumerate}
 \item Any gamble $G \in \CH$ such that $G \gneq0$ must be desirable for Alice, given that it may increase Alice's 
utiles without ever decreasing them
 (\textbf{accepting partial gain}). This means that $ \gambles^+ \subseteq \domain$.
\item Any gamble $G \in \CH$ such that $G \lneq0$ must not be desirable for Alice, given that it may only decrease 
Alice's utiles without ever increasing them  (\textbf{avoiding partial loss}). This means that $ \gambles^- \cap 
\domain=\emptyset$.
\item If Alice finds $G$ desirable, that is
$G \in \domain$, then also $\nu G$ must be desirable for her for any $0<\nu \in \reals$ (\textbf{positive homogeneity}).
\item If Alice finds $G_1$ and $G_2$ desirable, that is
$G_1,G_2 \in \domain$, then she also must accept $G_1+G_2$, i.e., $G_1+G_2 \in \domain$ (\textbf{additivity}). 
\end{enumerate}
\end{MethodB}
\end{SnugshadeB}
To understand these rationality criteria, we must remember that mathematically the payoff for any gamble $G$
is computed as $\Pi_i^{*} G \Pi_i^{*}$ if the outcome of the experiment is the event indicated by $\Pi_i^{*}$.
Then the first two rationality criteria above hold no matter the experiment $\Pi^{*}$ that 
is eventually performed. 
In fact,  from the properties of PSDNZ matrices discussed before (see also Proposition~\ref{pro:CGC>0}), if 
 $G \gneq0$ then  $\Pi_i^{*} G \Pi_i^{*}=\gamma_{i} \Pi_i^{*}$ with $\gamma^{*}_{i}\geq0$ for any $i$ and 
$\gamma_{j}>0$ for some $j$. Therefore, by accepting $G \gneq0$, Alice can only increase her utiles.
Symmetrically,  if $G \lneq0$ then $\Pi_i^{*} G \Pi_i^{*} = \gamma_{i} 
\Pi_i^{*}$ with $\gamma_{i}\leq 0$ for any $i$. 
Therefore, Alice must avoid the gambles $G \lneq0$ because they can only decrease her utiles.
This justifies  the first two rationality criteria. 
 For the last two, observe that 
 $$
 \Pi_i^{*} (G_1+G_2) \Pi_i^{*}=\Pi_i^{*} G_1 \Pi_i^{*}+\Pi_i^{*} G_2 \Pi_i^{*}=(\gamma_i+\vartheta_i) \Pi_i^{*},
 $$ 
 where we have  exploited the fact that $\Pi_i^{*} G_1 \Pi_i^{*}=\gamma_i  \Pi_i^{*}$ and $\Pi_i^{*} G_2 
\Pi_i^{*}=\vartheta_i \Pi_i^{*}$. Hence, if Alice accepts $G_1,G_2$, she must also accept $G_1+G_2$ because this 
will lead to a reward of $\gamma_i+\vartheta_i$.
 Similarly, if $G$ is desirable for Alice, then also $\Pi_i^{*} (\nu G) \Pi_i^{*}=
 \nu\Pi_i^{*}  G \Pi_i^{*}$ should be desirable for any $\nu>0$. 
 
In other words, as in the case of classical desirability, the four conditions above state only minimal requirements: that Alice would like to increase her wealth and not decrease it (conditions $1$ and $2$); and that Alice's utility scale is linear (conditions $3$ and $4$). The first two conditions should be plainly uncontroversial. The linearity of the utility scale is routinely assumed in the theories of personal, and in particular Bayesian, probability as a way to isolate  considerations of uncertainty from those of value (wealth).

We can characterise $\domain$ also from a geometric point of view. In fact, from the above properties, it follows that a coherent set of desirable gambles $\domain$ is a convex cone\footnote{A subset $\domain$ of a vector space is called 
{\bf convex} if it is closed under convex combinations:
$G,F\in \domain$ implies $wG+(1-w)F\in \domain$, for any $w\in[0,1]$. It is called
 a {\bf cone} if it is closed under non-zero, positive scalar multiplication: $
G\in \domain$ implies $ \nu G\in \domain$, for any $\nu>0$.
It can be verified that a cone $\domain$ is convex if and only if  it is closed under non-zero, positive linear combinations,
 i.e., iff $\alpha G_1 + \beta G_2$ belongs to $\domain$ for any positive scalars $\alpha,\beta>0$ and $G_1,G_2 \in \domain$.
}
 (positive scaling and additivity) in $\CH$ that includes $ \gambles^+$ (accepting partial gains) and excludes $ \gambles^-$ (avoiding partial losses). 
Without loss of generality we can also assume that $\domain$ is not pointed, i.e., $0\notin \domain$: Alice does not accept the null gamble.\footnote{We will show that this allows us to derive the avoiding partial loss criterion from the accepting partial gain one.}
A coherent set of desirable gambles is therefore a \textit{non-pointed convex cone}. 
 
From now onwards, we assume that Alice's set of desirable gambles mathematically satisfies the following properties.
 
\begin{definition}
  \label{def:sdg}
Let $\domain $ be a subset of $\CH$. We say that  $\domain$ is   a {\bf coherent set of strictly desirable gambles (SDG)} if
\begin{description}
 \item[(S1)] $\domain$ is a non-pointed convex-cone (positive homogeneity and additivity);
 \item[(S2)] if $G\gneq0$ then $G \in \domain$ (accepting partial gain);
   \item[(S3)] if $G \in \domain$ then either $G \gneq0$ or $G -\Delta \in  \domain$ for some $0<\Delta \in \CH$ (openness).
\end{description}
\end{definition}

\noindent 
The additional openness property (S3) in Definition \ref{def:sdg} is not necessary for rationality, but it is technically convenient in this paper as it precisely isolates the kind of models we will use and hence it makes the derivations simpler. Geometrically, it amounts to excluding from the set all the gambles that are on the border of the cone, apart from the gambles $G \gneq0$ that are always desirable (thus, by an abuse of terminology, strictly desirable gambles are often referred to as open convex cones). Property~(S3) can be regarded as an \emph{Archimedean} property for desirable gambles. Indeed, when instead of $\CH$ we consider the space $\realspace$, loosely speaking, it implies that a set of strictly desirable gambles (seen as a non-pointed open convex cone of $\realspace$)  has a corresponding (dual) probabilistic representation that is equivalent to the set. In Section \ref{sec:1staxiom} we will show that, \emph{mutatis mutandis}, the same happens in $\CH$: a set of strictly desirable gambles has a corresponding (dual)  representation in term of density matrices that is equivalent to the set.

 {As explained in Example~\ref{ex:coin0}, property (S3) can also be given a  gambling interpretation: it means that we will only consider gambles that are \emph{strictly} desirable for Alice; these are gambles that are either  positive (PSDNZ) or for which Alice is willing to pay a positive amount to have them (an epsilon).
 We will discuss further on the meaning of buying a gamble in Section \ref{sec:prevision}.}

\begin{remark}
The assumption that $0\notin \domain$ implies that SDG also satisfies:
\begin{itemize}
 \item  if $G \lneq 0$, then $G \notin \domain$ (avoiding partial loss).
\end{itemize}
Indeed, assume $G \lneq 0$ is in  $\domain$. Notice that $ -G \gneq 0$, and because of the acceptance of partial gain, $- G \in \domain$. 
Thus, by additivity we have that $G-G=0   \in \domain$, a contradiction. This means that a non-pointed cone that includes $\gambles^+$ automatically avoids partial loss.
\end{remark}
It is clear from the definition that the minimal SDG is the convex cone $\domain$  that includes all  $G\gneq0$ and nothing else, i.e., $\gambles^+$. It characterises a state of full ignorance.
Conversely, an SDG is called \emph{maximal} if there is no other SDG including it. 
In terms of rationality, a maximal SDG is a set of gambles that Alice cannot extend to another SDG by accepting other gambles while keeping at the same time rationality.
 It also represents a situation in which Alice is sure about the state of the system, as we will show in the next examples.  Notice that, while the minimal SDG is unique ($\gambles^+$), there are many distinct maximal SDGs. This means that, in general, a set of desirable gambles can be extended to different maximal SDGs.\footnote{For the extension to be possible, the original set must at least avoid partial loss, though.}

To clarify these and the previous definitions, we now list some examples of SDG using the classical and quantum coin examples.

\begin{SnugshadeE}
 \begin{example}[Classical coin.]
\label{ex:coinB}
 Let us go back to our classical coin example. We consider four possible situations.
 \begin{enumerate}
  \item Alice is in a complete state of ignorance about the coin (the coin may be heavily biased to one side or to the other) and so Alice decides to accept only gambles that are non-negative, i.e., all the gambles $G=diag(g)\gneq 0$ with $g=(g_1,g_2)$ such that 
$g_1,g_2\geq 0$. In this way, she can never lose utiles. We can plot Alice's SDG in a two-dimensional Cartesian coordinate system.  The result is shown in Figure~\ref{fig:coin}(a).
 Hence, a complete state of ignorance corresponds to an SDG that coincides with the first quadrant (excluded the zero), i.e., $\gambles^+$.
 \item This time Alice has some information about the coin and she accepts taking some risk. Based on this 
information, Alice finds the gambles  $f=(0.8,-0.2)$ and $h=(-0.4,0.6)$ desirable but not strictly
(besides the gambles $G=diag(g)\gneq 0$ that are always desirable for her).
This means that Alice's SDG is this time the cone in Figure~\ref{fig:coin}(b). It is an open cone that includes
all the non-zero non-negative gambles and has as border the rays $\lambda f$ for  $\lambda>0$ 
and $\gamma h$ for  $\gamma>0$.
\item We consider a third situation in which Alice has even more information on the coin
and finds the gambles  $f=(1,-1)$ and $h=(-1,1)$ desirable but not strictly, Figure~\ref{fig:coin}(c).
The resulting SDG is in this case a degenerate cone (a \textbf{maximal SDG}). It is clear (also from the figure)
that a degenerate convex cone cannot be further enlarged by including other gambles, as it would not be a cone anymore.
In this sense it is maximal: there are not other SDGs including it. 
We have already seen in Example \ref{ex:coin0} that this is the SDG corresponding to a fair coin.
{We will show in Section \ref{sec:1staxiom} that a maximal SDG always corresponds to a situation in which Alice assigns a single probability to Head and Tail. 
In the previous two cases Alice's SDG is more generically represented by a set of probabilities.} 
\item Finally assume that Alice finds $f=(2,-1)$ and $h=(-2,1)$ desirable, i.e., $f,g \in \domain$. Then we can easily show
that $\domain$ is incoherent. In fact, we have shown that if Alice accepts $f$ and $g$ she must also accept $\alpha f+ \beta g$ with  $\alpha,\beta>0$.
However, for $\alpha=\beta=0.5$ we have that  $(f + h)/2=(-0.5,-0.5) \in \gambles^-$  and, thus, Alice
does not avoid sure (and hence partial) loss; in this sense, she is irrational. This is a form of Dutch-book incoherence.
 \end{enumerate}
In the next section we will give the probabilistic models relative to these three situations.

\centering
\begin{minipage}{0.23\textwidth}
    \begin{tikzpicture}[scale=1.5]
    \draw[->] (-1,0) coordinate (xl) -- (1,0) coordinate (xu) node[right] {$g_1$};
    \draw[->] (0,-1) coordinate (yl) -- (0,1) coordinate (yu) node[above] {$g_2$};
        \draw (0,0) circle (1.pt);
    \begin{pgfonlayer}{background}
      \draw[border] (0,0) -- (0,1) coordinate (a1away);
      \draw[border] (0,0) -- (1,0) coordinate (a2away);
      \fill[blue!50,  opacity=0.5] (0,0) -- (1,0) -| (1,1) --(0,1);
    \end{pgfonlayer}
  \end{tikzpicture}
     \captionof{subfigure}{}
  \end{minipage}\quad
\begin{minipage}{0.23\textwidth}
      \begin{tikzpicture}[scale=1.5]
    \draw[->] (-1,0) coordinate (xl) -- (1,0) coordinate (xu) node[right] {$g_1$};
    \draw[->] (0,-1) coordinate (yl) -- (0,1) coordinate (yu) node[above] {$g_2$};
    \draw (0,0) circle (1.pt);
    \begin{pgfonlayer}{background}
      \draw[dashed] (0,0) -- (-0.66,1) coordinate (a1away);
      \draw[dashed] (0,0) -- (1,-0.25) coordinate (a2away);
      \fill[blue!50,  opacity=0.5] (0,0) -- (a1away) -| (a2away) --(0,0);
    \end{pgfonlayer}
  \end{tikzpicture}
     \captionof{subfigure}{}
  \end{minipage}\quad
\begin{minipage}{0.23\textwidth}
      \begin{tikzpicture}[scale=1.5]
    \draw[->] (-1,0) coordinate (xl) -- (1,0) coordinate (xu) node[above] {$g_1$};
    \draw[->] (0,-1) coordinate (yl) -- (0,1) coordinate (yu) node[above] {$g_2$};
    \draw (0,0) circle (1.pt);
    \begin{pgfonlayer}{background}
      \draw[dashed] (0,0) -- (-1,1) coordinate (a1away);
      \draw[dashed] (0,0) -- (1,-1) coordinate (a2away);
      \fill[blue!50,  opacity=0.5] (0,0) -- (a1away) -| (a2away) --(0,0);
    \end{pgfonlayer}
  \end{tikzpicture}
     \captionof{subfigure}{}
  \end{minipage}
\captionof{figure}{Alices' sets of coherent strictly desirable gambles corresponding to three different 
degrees of beliefs about the  classical coin.  \label{fig:coin}}
\end{example}
\end{SnugshadeE}

\begin{SnugshadeE}
\begin{example}[Quantum coin.]
\label{ex:spinB}
Let us go back to the qubit, Example \ref{ex:spinA}. We consider four possible situations
that are  equivalent to those discussed for the classical coin.
\begin{enumerate}
 \item Alice is completely ignorant about the quantum system. Hence, she decides 
to accept only gambles that cannot decrease her utiles (no matter the outcome of the quantum experiment). This means that her SDG coincides with  $\domain_1=\{G \in \CH \mid G\gneq0\}$ -- she only accepts non-zero non-negative 
gambles. To plot the full  $\domain_1$, now we need four dimensions. We exploit  the fact that any Hermitian matrix can be decomposed as
 \begin{equation}
 \label{eq:pauli}
 G=\left[\begin{matrix}
          v+z & x-\iota y\\
          x+\iota y & v-z
         \end{matrix}\right]=
  vI+x\sigma_x+y\sigma_y+z\sigma_z, 
  \end{equation} 
 where $(x,y,z,v)\in \mathbb{R}^4$ and $\sigma_i$ are the Pauli matrices (a basis for $\mathbb{C}_h^{2 \times 2}$):
 \begin{equation}
 \label{eq:paulimat}
 \sigma_x=\left[\begin{matrix}
          0 & 1\\
          1 & 0
         \end{matrix}\right],~~~ \sigma_y=\left[\begin{matrix}
          0 & -\iota\\
          \iota & 0
         \end{matrix}\right],~~~\sigma_z=\left[\begin{matrix}
          1 & 0\\
          0 & -1
         \end{matrix}\right].
  \end{equation} 
  The matrix $G$ is PSDNZ provided
 that $v > 0$ and $x^2+y^2+z^2\leq v^2$ and therefore
  $$
  \domain_1=\{(x,y,z,v)\in \mathbb{R}^4 \mid v \neq 0, ~\sqrt{x^2+y^2+z^2}\leq v\},
  $$
  which is the four-dimensional version of the ``ice cream cone''. Figure~\ref{fig:cones}(a)
  shows a three-dimensional slice of this cone.
\item Assume now Alice has more information about the quantum system, she is ready to take some risk. We assume 
that her SDG $\domain_2$ coincides with 
$$
\begin{array}{rcl}
\domain_2&=&\{G \in \CH \mid G \gneq0\} \cup \{G \in \CH \mid Tr(G^\dagger D_1)>0 \text{ and } Tr(G^\dagger 
D_2)>0\},
\end{array}
$$
where $Tr$ denotes the trace and $D_1=diag(0.2,0.8)$, $D_2=diag(0.6,0.4)$.
It can be verified that $\domain_2$ is a valid SDG. From $\domain_2$ it follows for instance that she finds the gambles
$$
G_1=\left[\begin{array}{cc}
     3 & 0\\0 &  -0.5
    \end{array}\right], ~~~G_2=\left[\begin{array}{ll}
     10 & 1-\iota\\ 1+\iota &  -2
    \end{array}\right]
$$
to be desirable.
We can plot  Alice's SDG also in this case exploiting Pauli's matrix decomposition~\eqref{eq:pauli} of $G$.
We obtain that
$$
\begin{array}{l}
Tr(G^\dagger D_1)=0.2(v+z)+0.8(v-z)>0, ~~Tr(G^\dagger D_2)=0.6(v+z)+0.4(v-z)>0,\end{array}
$$
which define two open hyper-spaces in $\mathbb{R}^4$; whence 
$$
\begin{array}{rcl}
\domain_2&=&\{(x,y,z,v) \in \mathbb{R}^4 \mid  v \neq 0, ~\sqrt{x^2+y^2+z^2}\leq v\}\\
&\cup& \{(x,y,z,v) \in \mathbb{R}^4 \mid 0.2(v+z)+0.8(v-z)>0,~~0.6(v+z)+0.4(v-z)>0\}.
\end{array}
$$
The projection of $\domain_2$ on the two-dimensional plane of coordinates $(v+z,v-z)$ is equal to the convex cone in Figure~\ref{fig:coin}(b).
\item Alice's SDG $\domain_3$ this time coincides with 
$$
\domain_3=\{G \in \CH \mid G \gneq0\} \cup\{G \in \CH \mid Tr(G^\dagger D)>0 \},~\textit{ with }D=\frac{1}{2}\left[\begin{array}{cc}
    1 & -\iota\\\iota &  1
    \end{array}\right].
$$
 An example of gamble that is desirable is $G=[1,-\iota;\iota,-2]$, since 
$Tr(G^\dagger D)=0.5>0$. It can be verified that $\domain_3$ is a valid SDG and is also a degenerate cone (\textbf{maximal} SDG).
To show that, let us consider $H=F-\epsilon I \notin \domain_3$ such that  $Tr(F^\dagger D) \leq 0$
and so $Tr(F^\dagger D)-\epsilon Tr( D)=Tr(F^\dagger D)-\epsilon  < 0$  for any $\epsilon>0 $ (we have exploited $Tr( D)=1$).
Then consider $G=-H$. $G$  is in $\domain_3$ because  $Tr(G^\dagger D) =-Tr(H^\dagger D)> 0$ 
 for any  $\epsilon>0$.
Since $H-H=0$, $\domain_3$ cannot include any further gamble while remaining SDG (for any  $\epsilon>0$, openness).

We will show that a {maximal SDG denotes a situation in which Alice uses a single density matrix to represent her beliefs about the state of the quantum system.}
By exploiting Pauli's matrix decomposition~\eqref{eq:pauli} of $G$, we obtain that
$Tr(G^\dagger D)=v+y>0$. The projection of $\domain_3$ on $(v,y)$ is shown in Figure~\ref{fig:cones}(c).
\item Finally, assume that  Alice's   SDG $\domain_4$ includes the gambles
$$
G_1=\left[\begin{array}{cc}
     1 & -\iota\\\iota &  -2
    \end{array}\right], ~~~G_2=\left[\begin{array}{ll}
     -2 & \iota\\ -\iota &  1
    \end{array}\right];
$$
then she is irrational. In fact,  desirability of $G_1,G_2$ implies desirability of
$\alpha G_1+ \beta G_2$. Considered  $\alpha=\beta=0.5$, we have that
$(G_1+G_2)/2<0$; Alice
does not avoid sure loss. Again this is a form of Dutch-book incoherence.
\end{enumerate}

\centering
\begin{minipage}{0.30\textwidth}
\includegraphics[width=4cm]{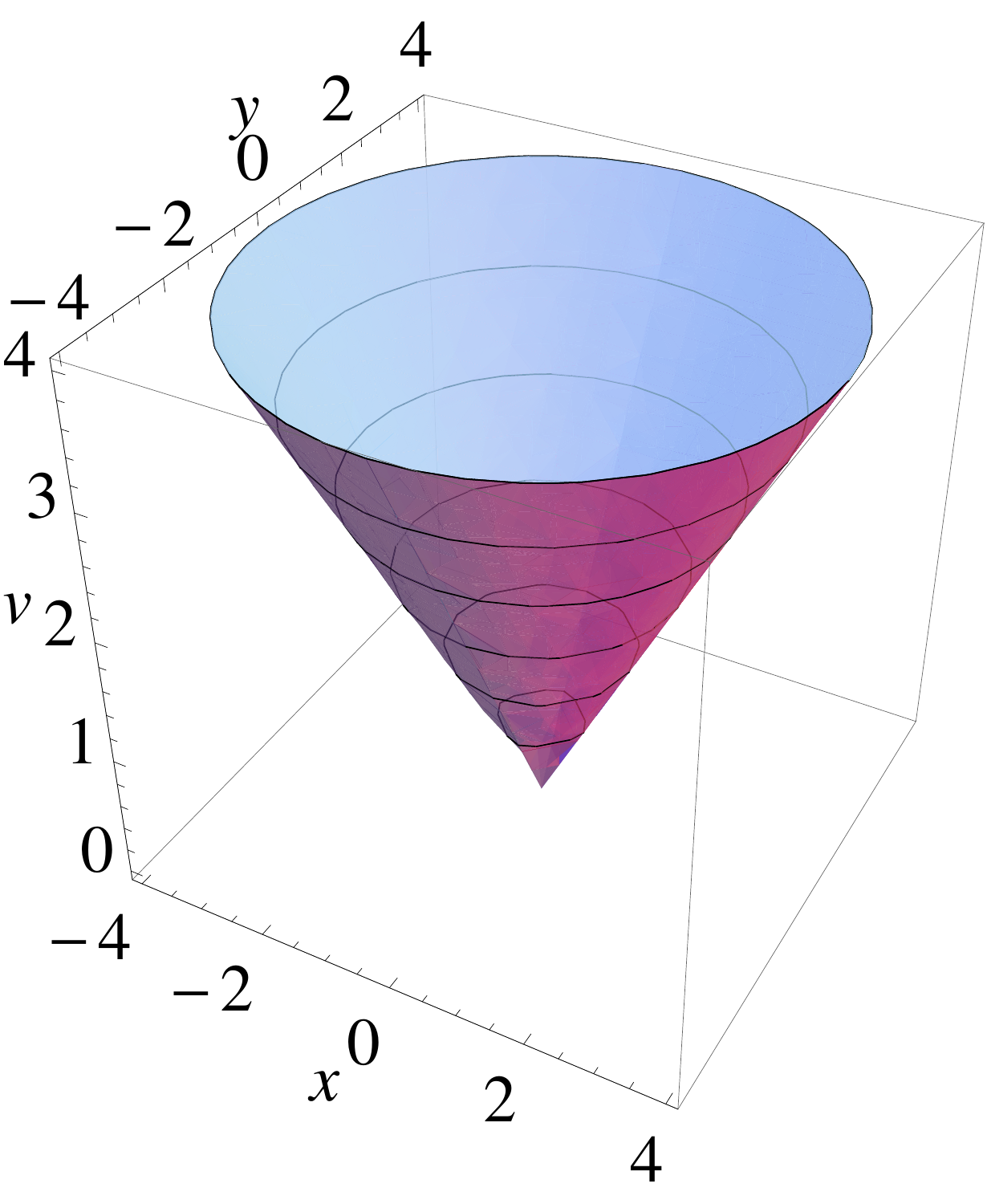}
   \captionof{subfigure}{}
  \end{minipage}
\begin{minipage}{0.30\textwidth}
      \begin{tikzpicture}[scale=1.5]
    \draw[->] (-1,0) coordinate (xl) -- (1,0) coordinate (xu) node[right] {$v+z$};
    \draw[->] (0,-1) coordinate (yl) -- (0,1) coordinate (yu) node[above] {$v-z$};
    \draw (0,0) circle (1.pt);
    \begin{pgfonlayer}{background}
      \draw[dashed] (0,0) -- (-0.66,1) coordinate (a1away);
      \draw[dashed] (0,0) -- (1,-0.25) coordinate (a2away);
      \fill[blue!50,  opacity=0.5] (0,0) -- (a1away) -| (a2away) --(0,0);
    \end{pgfonlayer}
  \end{tikzpicture}
     \captionof{subfigure}{}
  \end{minipage}
    \begin{minipage}{0.30\textwidth}
      \begin{tikzpicture}[scale=1.5]
    \draw[->] (-1,0) coordinate (xl) -- (1,0) coordinate (xu) node[above] {$v$};
    \draw[->] (0,-1) coordinate (yl) -- (0,1) coordinate (yu) node[above] {$y$};
    \draw (0,0) circle (1.pt);
    \begin{pgfonlayer}{background}
      \draw[dashed] (0,0) -- (-1,1) coordinate (a1away);
      \draw[dashed] (0,0) -- (1,-1) coordinate (a2away);
      \fill[blue!50,  opacity=0.5] (0,0) -- (a1away) -| (a2away) --(0,0);
    \end{pgfonlayer}
  \end{tikzpicture}
     \captionof{subfigure}{}
  \end{minipage}
  \captionof{figure}{Projection of Alice's sets of coherent strictly desirable gambles corresponding to three different 
degrees of beliefs about the  quantum system.}
  \label{fig:cones}
\end{example}
\end{SnugshadeE}


\section{First  axiom of QM}
\label{sec:1staxiom}
The aim of this section is to derive the first axiom of QM as a direct consequence of Alice's rationality.
The first axiom states:
   \begin{SnugshadeF}
\begin{MethodF}[Axiom I:]
associated to any isolated physical system is a complex vector space
with inner product (that is, a Hilbert space) known as the state space of the
system. The system is completely described by its density operator, which is a
positive operator $\rho$ with trace one, acting on the state space of the system.
\end{MethodF}
\end{SnugshadeF}
We will proceed in two steps. First, by exploiting the natural duality in  Hilbert spaces defined by the inner product, we will show that the dual of a coherent set of strictly desirable gambles is always included in the set of positive operators (PSD matrices). 
Second, by imposing a further property to the duality transformation, i.e.,  that it has to transform constants to constants -- or, in other words, that it has to preserve the utility scale --, we will prove that the dual of a coherent set of strictly desirable gambles is  a closed and convex set of 
 \textit{trace one positive operators}: i.e., a closed and convex set of density matrices $\rho$. Imposing the last property can be regarded as a ``normalisation'' in the space of positive operators; for this reason there is no loss of generality in doing so.

\subsection{Duality of coherence}

In mathematics, any vector space  $V$ has a corresponding dual vector space  $V^\bullet$ consisting of all linear 
functionals on $V$.
The space $\CH$ is a vector space and, therefore, we can define  its dual by means of the standard inner product 
on $\CH$:
$$
 G \cdot R = Tr(G^\dagger R),
$$
with $R,G\in \CH$. 
By using the inner product, we start by defining a general -- and standard, see, e.g., \citet[Sec.~2.2]{aliprantis2007cones} -- notion of dual for any subset of $\CH$.

\begin{definition}\label{def:dual-set}
Let $C$ be a subset of $\CH$.  Its  {\bf dual set} is defined as
\begin{equation}
\label{eq:polarset}
 C^\bullet=\{ \Rho \in \CH \mid G\cdot \Rho \geq  0~ \forall G \in C\}.
\end{equation} 
\end{definition}

A set of strictly desirable gambles $\domain$ is a subset of $\CH$ and, therefore, we can define its dual
through the transformation~\eqref{eq:polarset}. We can easily show that the dual is
\begin{equation}
\label{eq:polarcone}
\domain^\bullet=\{ \Rho \in \CH \mid \Rho\geq0,~~ G\cdot \Rho \geq  0~ \forall G \in \domain\}.
\end{equation} 
This means that the dual set of an SDG includes only \textbf{positive operators}, i.e., Hermitian matrices
that are PSD.
This result follows by the fact that, as stated by condition (S2) in Definition \ref{def:sdg},  $ \domain$ includes all PSDNZ matrices.  
In fact, take the PSDNZ gamble $ G=u_iu^\dagger_i $  with $0\neq u_i \in \complexs^n$  and so $G \in \domain$ (since $G\gneq0$) then:
 $$
 G\cdot \Rho=Tr(G^\dagger \Rho)=Tr(u_iu^\dagger_i \Rho)=Tr(u^\dagger_i \Rho u_i)=u^\dagger_i \Rho u_i;
 $$
 but $u^\dagger_i \Rho u_i$ must be greater than zero for $\Rho$ to be in  $\domain^\bullet$.
 Since $u^\dagger_i \Rho u_i\geq 0$ must hold for any $u_i$, this  implies that $\Rho\geq 0$
 by definition of PSD matrix.
It follows immediately from the definitions that 
the set $\domain^\bullet$ is a closed convex cone. This is proven in the Appendix, Proposition \ref{pro:rho}. We  therefore know that the dual of an SDG only includes PSD matrices ($\Rho\geq0$), and  it is a closed convex cone.
It can also be verified that when Alice is in a full state of ignorance, i.e., $\domain=\gambles^+$, then
 $\domain^\bullet=\{ \Rho \in \CH \mid \Rho\geq0\}$, i.e., the dual includes all PSD matrices.
 
 We now impose a further property to the duality transformation $(\cdot)^\bullet$.
 \begin{definition}\label{def:constant} 
We say that a matrix $\Rho \in \CH$ {\bf preserves constants} if
\begin{align}
\label{eq:mapconstants}
G\cdot \Rho=
Tr(G^\dagger \Rho)=c,
\end{align}
for each constant matrix\footnote{Here the term constant means matrices that represents constant payoffs. It can be verified that all  constant Hermitian matrices  are of the form $cI$ with $c \in \mathbb{R}$.} $G=cI$  with $c \in 
\mathbb{R}$.
\end{definition}
We then ask that members of the dual of a set of gambles  preserves constants. The meaning of adding 
condition~\eqref{eq:mapconstants} to the definition of the dual set  is simply that  we aim at preserving the scale in which utility is measured.\footnote{Stated in terms of money it means that
one euro in the primal remains one euro in the dual.}
In such case, the following holds.
\begin{Proposition}\label{prop:dualconstant}
Let $C$ be a subset of $\CH$. If we impose that members of
  $C^\bullet$ preserve constants, then we obtain the following set: 
\begin{align}
\label{eq:credalll}
\mdesirs&=\{ R \in \CH \mid R\geq0,~~Tr(R)=1,~~ G\cdot R \geq  0~ \forall G \in C\}.
\end{align} 
\end{Proposition}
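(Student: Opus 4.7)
The proof is essentially a direct computation that unpacks the two conditions: membership in $C^\bullet$ and preservation of constants. The plan is to start from $R \in C^\bullet$, which already delivers the inequality $G \cdot R \geq 0$ for all $G \in C$, and then extract what extra content the constants-preservation requirement imposes on $R$.

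The main step is to observe that, for every constant matrix $G = cI$ with $c \in \mathbb{R}$, one has $Tr(G^\dagger R) = Tr(cI \cdot R) = c \, Tr(R)$. Demanding that $R$ preserve constants in the sense of Definition~\ref{def:constant} therefore forces $c \, Tr(R) = c$ for every $c \in \mathbb{R}$. Specializing to $c = 1$ yields the normalization $Tr(R) = 1$, and conversely if $Tr(R) = 1$ then $c \, Tr(R) = c$ holds for all $c$, so the two conditions are in fact equivalent. This gives exactly the trace-one clause appearing in the definition of $\mdesirs$.

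The remaining clause, $R \geq 0$, is obtained from the argument already made in the paragraph preceding the Proposition: for any rank-one PSDNZ gamble $G = u_i u_i^\dagger \in C$, the dual inequality $G \cdot R \geq 0$ reads $u_i^\dagger R \, u_i \geq 0$; because this must hold for every $u_i \in \complexs^n$, the Hermitian matrix $R$ is positive semi-definite by definition. To make this step rigorous in the present generality I would either assume (as is the natural setting in the paper) that $C$ contains $\gambles^+$, or equivalently replace $C$ in the statement by its superset $C \cup \gambles^+$, so that the rank-one test vectors are available. The main obstacle is therefore not the algebra, which is immediate, but clearly identifying where $R \geq 0$ enters: it comes from the partial-gain content of $C$, not from constant preservation, whereas $Tr(R) = 1$ comes solely from constant preservation. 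Combining the three resulting conditions gives precisely the set $\mdesirs$ stated in~\eqref{eq:credalll}.
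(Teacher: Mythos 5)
Your proof is correct and follows essentially the same route as the paper: the trace-one condition is extracted from constant preservation via $Tr(cI\cdot R)=c\,Tr(R)=c$, and the positivity $R\geq 0$ is inherited from the rank-one PSDNZ gambles $u u^\dagger$ in $C$, exactly as in the discussion preceding the Proposition. Your explicit remark that the $R\geq 0$ clause (and the availability of $cI\in C$) requires $C\supseteq\gambles^+$ is a fair and slightly more careful reading of a hypothesis the paper leaves implicit in stating the result for an arbitrary subset $C$.
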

The proof of Proposition \ref{prop:dualconstant} is immediate. Clearly $\mdesirs \subseteq C^\bullet$. Let us now consider $G=cI$ with $c>0$ that is in $C$ (since $G \gneq0$).
Then for any $\Rho \in C^\bullet$, we have that  $Tr(G^{\dagger} R)=Tr(c\Rho)=c Tr(\Rho)$.
To satisfy~\eqref{eq:mapconstants}, we therefore must have  $c Tr(\Rho)=c$, which implies that $Tr(\Rho)=1$. 

{
Geometrically, it can be observed that  $\mdesirs$ is a bounded section of the closed convex cone $C^\bullet$ (it is a conic section)
and, therefore, is a closed convex set -- it is not a cone anymore. Since $\mdesirs$ is a conic section of the dual, it completely characterises $C^\bullet$. Hence, by an abuse of terminology, we will also refer to $\mdesirs$ as the dual of $C$.}

{When $\mdesirs$ is (the section of) the dual of a SDG $\domain$ we call it a \textbf{quantum credal set}.\footnote{In the classical theory of desirable gambles, a credal set is a closed convex set of
probabilities \cite{levi1980enterprise,walley1991,augustin2014}.} 
The term \textit{credal} is used because $\mdesirs$ is  the dual of Alice's set of desirable gambles 
$\domain$. In fact, since $\domain$ reflects Alice's beliefs about the quantum system  then, by duality, also
$\mdesirs$ reflects her beliefs about the quantum system. }

A Hermitian matrix that is PSD and with trace one is by definition a \textbf{density matrix} (positive operator with trace one).

  \begin{SnugshadeB}
\begin{MethodB}[Subjective formulation of the first axiom of QM]\\
The dual of Alice's set of coherent strictly desirables gambles is the quantum credal set:
\begin{align}
\label{eq:credaldef}
\mdesirs&=\{ \rho \in \CD \mid G\cdot \rho \geq  0~ \forall G \in \domain\},
\end{align} 
where  $\CD=\{ R \in \CH \mid R\geq0,~~Tr(R)=1\}$ is the set of all density matrices.
\end{MethodB}
\end{SnugshadeB}

Note that the above formulation is more general then the first axiom of QM, in the sense that it gives us more flexibility in modelling uncertainty about a certain experiment. Indeed, it allows us to consider any situation in between the state of full ignorance $\domain=\gambles^+$, and thus $\mdesirs=\CD$,\footnote{In QM, the state of maximum ignorance is usually defined as $\rho=I/n$ ($I$ is the identity matrix), but strictly speaking this is not correct. {The single density matrix $\rho=I/n$ means that for Alice all the directions are equally probable. If we were in the classical coin example this would in fact mean that Alice is assuming that the coin is fair.  This will be clarified in the examples.}}
and a situation  in which Alice is sure about the state of the system, that is the situation where  her coherent sets of strictly desirable gambles is a degenerate cone, i.e., a maximal SDG. Notice that in the latter case, the dual $\mdesirs$ includes a single density matrix, and we thus obtain exactly the \textit{first axiom of QM}.

  \begin{SnugshadeF}
\begin{MethodF}[Desirability and classical probability]\\
In Example \ref{ex:coinA}, we have seen that  a gamble relative to  a classical experiment is a diagonal $G=diag(g)$ with $g \in \reals^n$.
The inner product with $\rho$ gives:
 $$
 G \cdot \rho=Tr(G^\dagger \rho)=\sum_{i=1}^n g_i \rho_{ii},
 $$
 where $\rho_{ii}$ denote the diagonal elements of $\rho$, which are real numbers.
Therefore, without loss of generality, we can also assume that $\rho$ is diagonal.
 From~\eqref{eq:credaldef}, we can derive that the dual of the SDG  $\domain$ is:
$$
\mdesirs=\left\{ \rho \in \mathcal{D}_h^{n \times n} \text{ diagonal } \Big| \sum_{i=1}^n g_i \rho_{ii} \geq 0~ \forall diag(g) \in \domain\right\}.
$$
Since $\rho$ is positive, with trace one and diagonal, this can also be rewritten as
$$
\mdesirs=\left\{ p \in \reals^n \Big|   \sum_{i=1}^n g_ip_i \geq 0~ \forall g 
\in \domain\right\},
$$
where we have used the substitution $p_i=\rho_{ii}$, so that $p_i\geq0$ and $\sum_{i=1}^np_i=1$.  So the dual $\mdesirs$ of $\domain$ in the classical case is a set 
of probability mass functions.  This is exactly standard duality in probability and proves the correspondence
between SDG and sets of probabilities in the classical case.
This is just a particular case of the theory we have developed here for quantum mechanics.
\end{MethodF}
\end{SnugshadeF}

\begin{SnugshadeE}
 \begin{example}[Classical coin.]
 \label{ex:coinC}
The three cones discussed in Example~\ref{ex:coinB} are shown in Figure~\ref{fig:coindual} (in blue) together with their dual cones (in red) $\domain^\bullet$.  The 
(quantum) credal sets $\mdesirs$  are equal to the intersection of $\domain^\bullet$ with the line $\{p\geq 0 \mid \sum_{i=1}^2 p_i=1\}$ (normalisation, black line). Note that the dual of the set of non-zero non-negative gambles is the first quadrant  Figure~
\ref{fig:coindual}(a), while the dual of a maximal cone is a line  -- see Figure~\ref{fig:coindual}(c) --
whose intersection with $\{p\geq 0 \mid \sum_{i=1}^2 p_i=1\}$ gives the point $p=(0.5,0.5)$, i.e.,
the probability mass function of a fair coin. 

Let us discuss the subjective interpretation of the three cases. The (quantum) credal set of the state of full ignorance is the set of all probability mass functions: for Alice everything is possible.
In the second case, the (quantum) credal set is equal to the closed convex set of probability mass functions:
$$
\mdesirs_2=\{(0.2,0.8)w+(1-w)(0.6,0.4) \text{ with } w\in [0,1]\}.
$$
Alice believes now that the probability of head is in the interval  $[0.2,0.6]$ and, thus,
the probability of tail in  $[0.4,0.8]$.
In the third case Alice is  sure that the coin is fair. A maximal SDG corresponds to a single probability mass function:
$\mdesirs_3=\{(0.5,0.5)\}$.

\centering
\begin{minipage}{0.30\textwidth}
    \begin{tikzpicture}[scale=1.5]
    \draw[dashed,->] (-1,0) coordinate (xl) -- (1.5,0) coordinate (xu)  node[right] {$g_1$};
    \draw[dashed,->] (0,-1) coordinate (yl) -- (0,1.5) coordinate (yu) node[above] {$g_2$};
    \draw[-] (1,0) -- (0,1);
        \draw (0,0) circle (1.pt);
        \draw[fill] (1,0) circle (0.7pt) node[below] {(1,0)};
         \draw[fill] (0,1) circle (0.7pt) node[left] {(0,1)};
    \begin{pgfonlayer}{background}
      \draw[dashed] (0,0) -- (0,1.2) coordinate (a1away);
      \draw[dashed] (0,0) -- (1.2,0) coordinate (a2away);
      \fill[blue!50,opacity=0.5] (0,0) -- (1.2,0) -| (1.2,0) --(0,1.2);
    \end{pgfonlayer}
        \begin{pgfonlayer}{background}
      \draw[border] (0,0) -- (0,1.5) coordinate (a1away);
      \draw[border] (0,0) -- (1.5,0) coordinate (a2away);
      \fill[red!50,opacity=0.5] (0,0) -- (1.5,0) -| (1.5,0) --(0,1.5);
          \end{pgfonlayer}
  \end{tikzpicture}   \captionof{subfigure}{}
  \end{minipage}  
  \begin{minipage}{0.30\textwidth}
          \begin{tikzpicture}[scale=1.5]
    \draw[->] (-1,0) coordinate (xl) -- (1.5,0) coordinate (xu)  node[right] {$g_1$};
    \draw[->] (0,-1) coordinate (yl) -- (0,1.5) coordinate (yu) node[above] {$g_2$};
          \draw[fill] (1,0) circle (0.7pt) node[below] {(1,0)};
         \draw[fill] (0,1) circle (0.7pt) node[left] {(0,1)};
    \draw[-] (1,0) -- (0,1);
        \draw (0,0) circle (1.pt);
        \begin{pgfonlayer}{background}
      \draw[dashed] (0,0) -- (-1.122,1.7) coordinate (a1away);
      \draw[dashed] (0,0) -- (1.7,-0.425) coordinate (a2away);
      \fill[blue!50,  opacity=0.5] (0,0) -- (-1.122,1.7) -| (-1.122,1.7)  --(1.7,-0.425);
    \end{pgfonlayer}
        \begin{pgfonlayer}{background}
      \draw[border] (0,0) -- (1.2,0.8) coordinate (a1away);
      \draw[border] (0,0) -- (0.4,1.6) coordinate (a2away);
      \fill[red!50,  opacity=0.7] (0,0) -- (1.2,0.8) -| (1.2,0.8) --(0.4,1.6);
    \end{pgfonlayer}
  \end{tikzpicture}
   \captionof{subfigure}{}
  \end{minipage}
  \begin{minipage}{0.30\textwidth}
   \begin{tikzpicture}[scale=1.5]
    \draw[->] (-1,0) coordinate (xl) -- (1.1,0) coordinate (xu)  node[right] {$g_1$};
    \draw[->] (0,-1) coordinate (yl) -- (0,1.1) coordinate (yu) node[above] {$g_2$};
           \draw[fill,red] (0.5,0.5) circle (0.7pt);
        \draw[fill] (1,0) circle (0.7pt) node[below] {(1,0)};
         \draw[fill] (0,1) circle (0.7pt) node[left] {(0,1)};
    \draw[-] (1,0) -- (0,1);
        \draw (0,0) circle (1.pt);
        \begin{pgfonlayer}{background}
      \draw[dashed] (0,0) -- (-1.1,1.1) coordinate (a1away);
      \draw[dashed] (0,0) -- (1.1,-1.1) coordinate (a2away);
      \fill[blue!50,  opacity=0.5] (-1.1,1.1) -- (1.1,1.1) -| (1.1,1.1)  --(1.1,-1.1);
    \end{pgfonlayer}
        \begin{pgfonlayer}{background}
      \draw[border,red!50] (0,0) -- (1.1,1.1) coordinate (a1away);
    \end{pgfonlayer}
  \end{tikzpicture}
     \captionof{subfigure}{}
  \end{minipage}
    \captionof{figure}{Dual sets for the classical coin.}
    \label{fig:coindual}
\end{example}
\end{SnugshadeE}

\begin{SnugshadeE}
\begin{example}[Quantum Coin.]
\label{ex:spinC}
Let us go back to the qubit of Example~\ref{ex:spinB}.
 
From~\eqref{eq:credal1}, the dual of the first SDG $\domain_1=\{G \in \CH \mid G\gneq0\}$ is
$\mdesirs=\CD$; it is the set of all valid density matrices, which is clearly a state of complete ignorance.
Also in this case we can plot $\mdesirs$ exploiting Pauli's matrix decomposition~\eqref{eq:pauli}:
$$
 \rho=\left[\begin{matrix}
          0.5+z & x-\iota y\\
          x+\iota y & 0.5-z
         \end{matrix}\right]=
  \frac{1}{2}I+x\sigma_x+y\sigma_y+z\sigma_z. 
 $$
  Note that $Tr( \rho)=1$, but it must also be PSD:  this means that  $x^2+y^2+z^2\leq \frac{1}{4}$, which is the 
equation of a sphere (the Bloch sphere) --  see Figure~\ref{fig:sphera}(a).

For the second cone, 
$$
\begin{array}{rcl}
\domain_2&=&\{G \in \CH \mid G \gneq0\}\\
&\cup& \{G \in \CH \mid Tr(G^\dagger D_1)>0 \text{ and } Tr(G^\dagger 
D_2)>0\},
\end{array}
$$
with  $D_1=diag(0.2,0.8)$ and $D_2=diag(0.6,0.4)$; we have that the quantum credal set coincides with
$$
\mdesirs_2=\{ \rho=w D_1+ (1-w) D_2 \mid w\in[0,1]\}.
$$
This is the set of all density matrices obtained as convex combinations of $D_1,D_2$.
In fact, first note that $Tr(w D_1+ (1-w) D_2 )=1$ and $w D_1+ (1-w) D_2  \geq0$ and so $\mdesirs_2$ is a set of 
valid density matrices. Moreover, it is clear that for any $G$ such that $Tr(G^\dagger D_1)>0$ and $Tr(G^\dagger D_2)>0$, $Tr(G^\dagger (w D_1+ (1-w) D_2))>0$ for any $w  \in[0,1]$. 
The set $\mdesirs_2$ is shown in Figure~\ref{fig:sphera}(b) (red line), inside the Bloch sphere.
This is equivalent to the second case described in Example \ref{ex:coinC}. Alice believes
that the density matrix of the quantum system belongs to the closed convex set
 $w D_1+ (1-w) D_2 $ with  $w\in[0,1]$.
 Note that this convex combination \textbf{does not indicate a mixing state}, it means we are considering
 all density matrices that can be obtained as  $w D_1+ (1-w) D_2 $.
 They are all mixed density matrices for $w\in[0,1]$, since $D_1,D_2$ are mixed states
 and any convex combination of mixed states is mixed too.

In the third case, the dual of Alice's SDG is
$$
\domain_3=\{G \in \CH \mid G \gneq0\} \cup\{G \in \CH \mid Tr(G^\dagger D)>0 \}, \textit{ with } D=\frac{1}{2}\left[\begin{array}{cc}
    1 & -\iota\\\iota &  1
    \end{array}\right],
$$
and $\mdesirs_3=\{ \rho=D\}$ is made of a single density matrix. It is a pure state, because it is in the border of the Bloch sphere.   The set $\mdesirs_3$ is shown in Figure~\ref{fig:sphera}(b) (red point), inside  the Bloch sphere. Alice believes  that the state of the quantum system is $ \rho=D$.
Maximal SDGs  correspond to single density matrices (non necessarily ``pure states'').

\centering
  \begin{minipage}{0.30\textwidth}
\includegraphics[width=4cm]{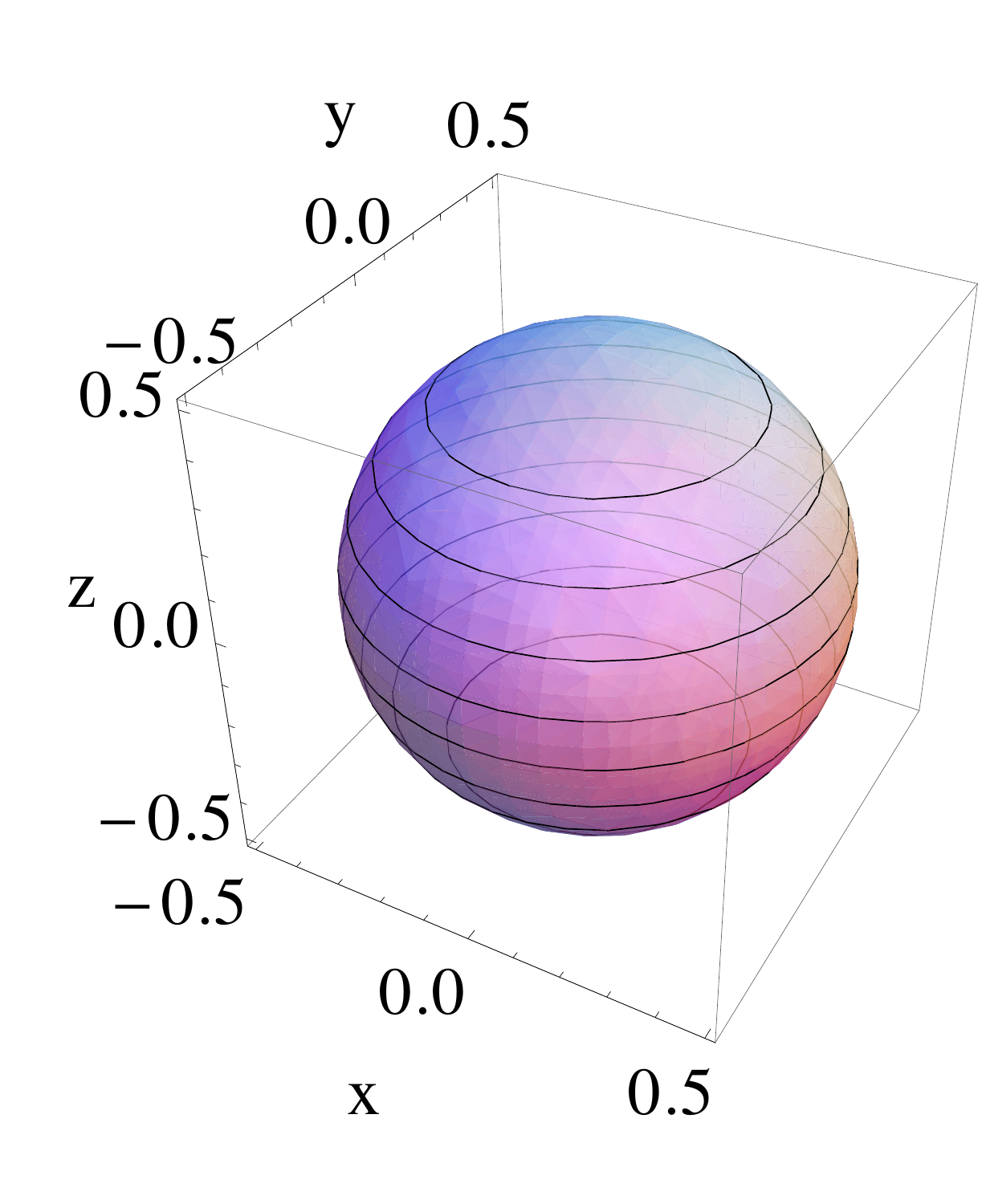}
     \captionof{subfigure}{}
  \end{minipage}
  \begin{minipage}{0.30\textwidth}
\includegraphics[width=4cm]{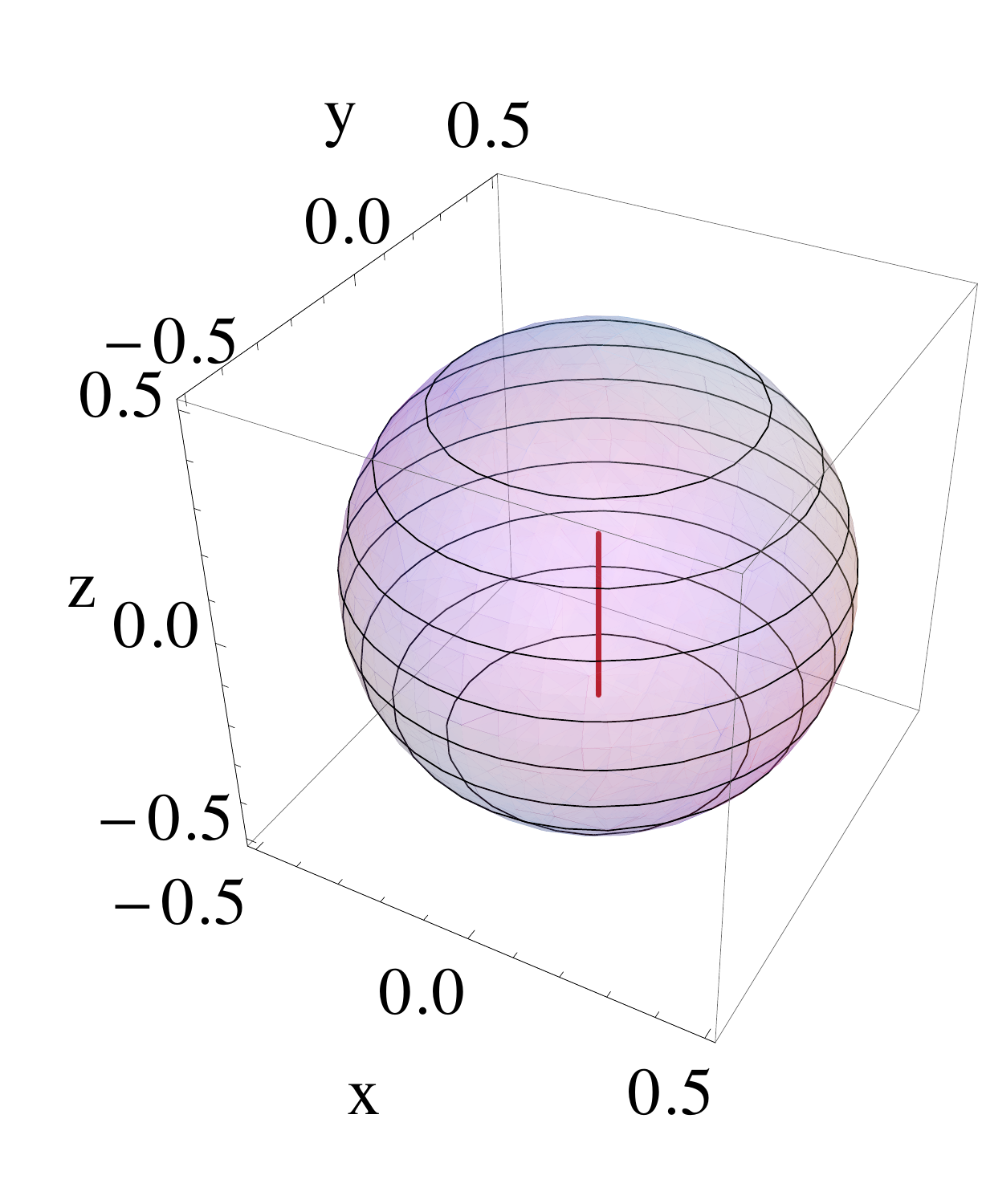}
     \captionof{subfigure}{}
  \end{minipage}
  \begin{minipage}{0.30\textwidth}
  \vspace{7mm}
\includegraphics[width=4cm]{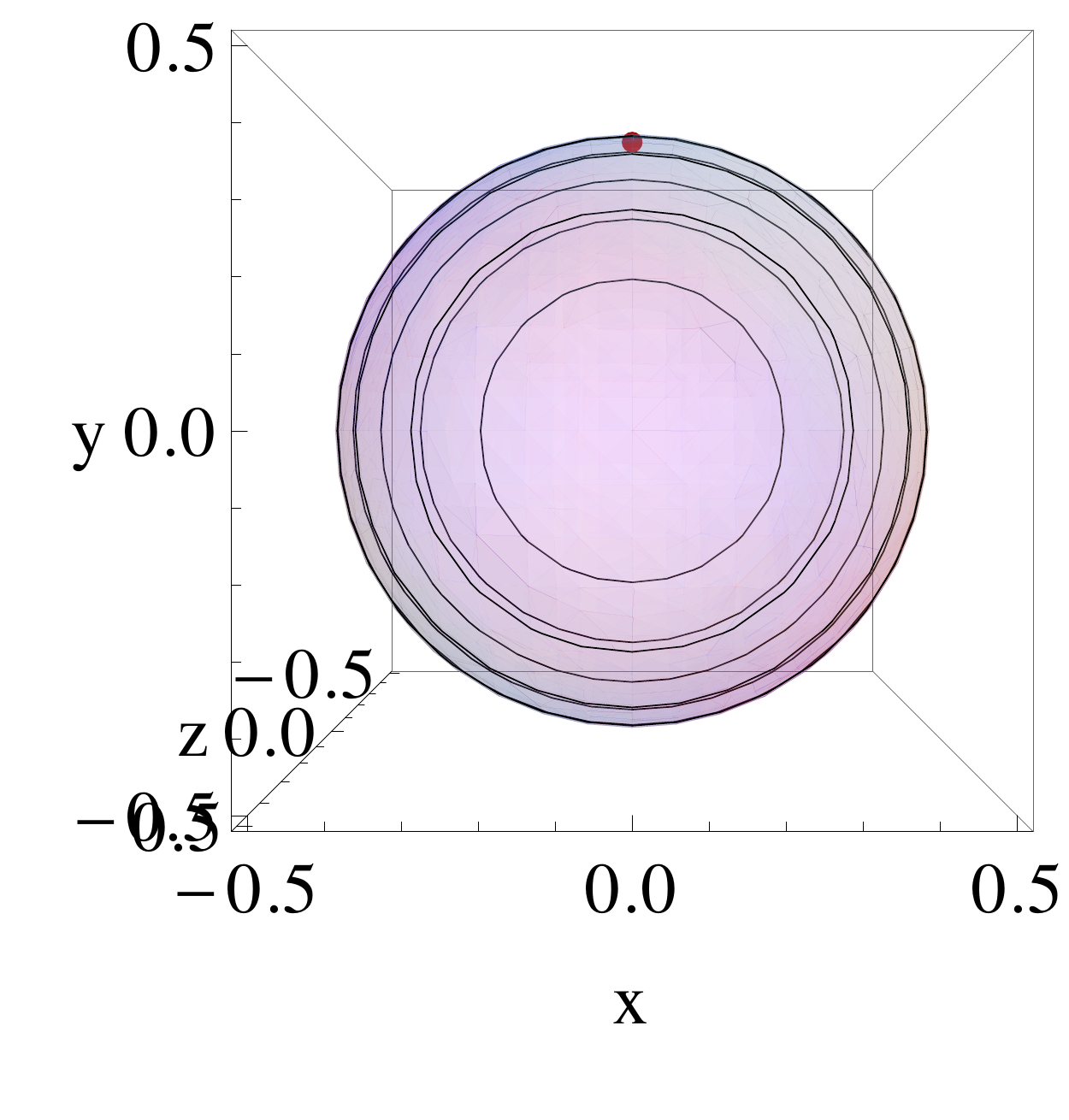}
     \captionof{subfigure}{}
  \end{minipage}
    \captionof{figure}{Credal quantum sets  for the quantum coin in the three cases.\label{fig:sphera}}    
\end{example}
\end{SnugshadeE}

\subsection{The representation theorem}

To fully deduce the first axiom of QM as a consequence of Alice's rationality, we must also prove that the duality map $(\cdot)^\bullet:\domain \mapsto\mdesirs$ establishes a bijection between strictly desirable sets of gambles $\domain$ and quantum credal sets $\mdesirs$.
This can be done by firstly verifying that~\eqref{eq:credaldef} can also be rewritten as (Appendix, Proposition \ref{prop:credal1equiv}):
\begin{equation}
\label{eq:credal1}
\mdesirs=\{ \rho \in \CD \mid   G\gneq0 \text{ or } G\cdot \rho > 0~ \forall G \in \domain\},
\end{equation} 
which means that $\mdesirs$ includes all  $\rho\in \CD$ such that $ G\cdot \rho > 0$ for all $G \in \domain$ that are not  $G\gneq0 $.
Note that $\mdesirs=\CD$ whenever $\domain$  includes only PSDNZ matrices ($G\gneq0$).
The subjective interpretation of~\eqref{eq:credal1} is straightforward:  Alice always accepts the gambles $G\gneq0 $ and, thus, they are not informative on Alice's beliefs
about the quantum system. The latter are indeed determined by the gambles that are not $G\gneq0 $.
From  these considerations, it can easily be derived that  the inverse of $(\cdot)^\bullet$, i.e. $(\cdot)^\circ :\mdesirs \mapsto \domain$,  is 
\begin{equation}
\label{eq:credal2}
\mdesirs^\circ=\{ G \in \CH \mid G  \gneq0\}\cup\{G \in \CH \mid  G\cdot \rho > 0~ \forall \rho \in \mdesirs\}.
\end{equation} 
We observe that the dual of a quantum credal set $\mdesirs$ is the union of the set of all the gambles $G\gneq 0$
with the gambles $G \cdot \rho> 0, ~~\rho\in \mdesirs$. The fact that we always include the gambles $G\gneq 0$
can be read as a consequence of rationality: those gambles are always desirable for Alice.

The proof that~\eqref{eq:credal2} is the inverse of~\eqref{eq:credal1} is a consequence of the following equivalences:\footnote{The proof of Theorem  \ref{thm:dualityopen} can be found in the Appendix.}
\begin{Theorem}[Representation theorem]
\label{thm:dualityopen} 
It holds that
\begin{itemize}
\item $(\domain^\bullet)^\circ=\domain$ for every SDG $\domain$ and 
\item $( \mdesirs^\circ)^\bullet= \mdesirs$ for every quantum credal set $ \mdesirs$. 
\end{itemize}
Any SDG is in correspondence with a closed convex set of density matrices
and vice versa. 
\end{Theorem}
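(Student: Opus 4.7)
The plan is to prove the two identities separately; each splits into an easy inclusion and a harder one that rests on a Hahn--Banach style separation in the finite-dimensional real vector space $\CH$ equipped with the inner product $G \cdot R = \mathrm{Tr}(G^{\dagger} R)$.

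\emph{First identity, $(\domain^{\bullet})^{\circ} = \domain$.} The inclusion $\domain \subseteq (\domain^{\bullet})^{\circ}$ uses the openness axiom (S3): for $G \in \domain$ with $G \not\gneq 0$, pick $\Delta > 0$ in $\CH$ with $G - \Delta \in \domain$; then for every $\rho \in \domain^{\bullet}$ one has $G \cdot \rho \geq \Delta \cdot \rho > 0$, the last strict inequality holding because $\Delta$ is positive definite and $\rho$ is a non-zero PSD matrix with $\mathrm{Tr}(\rho)=1$. For the converse, suppose $G \notin \domain$. By (S2) we may assume $G$ is not PSDNZ. Since $\domain$ is a convex cone, a separation argument (applied either to $G$ and $\overline{\domain}$ when $G$ lies outside the closure, or via a supporting hyperplane when $G \in \overline{\domain}\setminus \domain$) produces a non-zero $\rho \in \CH$ with $\rho \cdot H \geq 0$ for all $H \in \overline{\domain}$ and $\rho \cdot G \leq 0$. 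Specialising $H$ to arbitrary PSDNZ matrices forces $\rho \geq 0$ by the characterisation recalled before Definition~\ref{def:sdg}; since $\rho \neq 0$, normalising by $\mathrm{Tr}(\rho) > 0$ yields an element of $\domain^{\bullet}$. Because $G$ is not PSDNZ, membership in $(\domain^{\bullet})^{\circ}$ would require $\rho \cdot G > 0$, contradicting $\rho \cdot G \leq 0$.

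\emph{Second identity, $(\mdesirs^{\circ})^{\bullet} = \mdesirs$.} The inclusion $\mdesirs \subseteq (\mdesirs^{\circ})^{\bullet}$ is immediate from the two clauses defining $\mdesirs^{\circ}$ together with $\rho \geq 0$ for $\rho \in \mdesirs$. For the converse, take $\rho \in \CD \setminus \mdesirs$. Because $\mdesirs$ is closed and convex, strict separation in $\CH$ yields $G_{0} \in \CH$ and $\epsilon > 0$ with $G_{0} \cdot \sigma \geq G_{0} \cdot \rho + \epsilon$ for every $\sigma \in \mdesirs$. I then exploit the trace-one constraint: setting $G := G_{0} - (G_{0} \cdot \rho + \epsilon/2)\,I$ gives $G \cdot \sigma > 0$ on all of $\mdesirs$ and $G \cdot \rho = -\epsilon/2 < 0$. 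This $G$ cannot be PSDNZ, for then $G \cdot \rho \geq 0$ would follow from $\rho \geq 0$; hence $G \in \mdesirs^{\circ}$ witnesses $\rho \notin (\mdesirs^{\circ})^{\bullet}$. The stated bijection between SDGs and closed convex sets of density matrices follows at once from the two identities.

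\emph{Main obstacle.} The delicate point is the hard direction of the first identity: the separating functional produced by Hahn--Banach must be normalisable to a density matrix \emph{and} deliver strict positivity against every non-PSDNZ gamble that was excluded from $\domain$. The openness axiom (S3) is precisely what forces this strictness on the primal side, while in the second identity the shifting trick $G \mapsto G - cI$ (available because $\mathrm{Tr}(\rho)=1$ on $\mdesirs$) is what converts the merely non-strict separation coming from convex analysis into the strict positivity on $\mdesirs$ required for membership in $\mdesirs^{\circ}$.
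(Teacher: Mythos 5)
Your proof is correct and follows essentially the same route as the paper's: Hahn--Banach separation of a point from the (closure of the) cone, with the case split between exterior and boundary points, positivity of the separating functional extracted from $\gambles^+\subseteq\domain$, and the openness axiom (S3) supplying the strict inequalities. If anything, you are more complete than the paper, which relegates the inclusion $\domain\subseteq(\domain^\bullet)^\circ$ to Proposition~\ref{prop:credal1equiv} and dispatches the second identity with ``similarly'', whereas you spell out both, including the shift $G_0\mapsto G_0-(G_0\cdot\rho+\epsilon/2)I$ that exploits the trace-one normalisation.
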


Assume that $\mdesirs=\{\hat{\rho}\}$, then from~\eqref{eq:credal2} it follows that
$$
\mdesirs^\circ=\{ G \in \CH \mid G  \gneq0\}\cup\{G \in \CH \mid  G\cdot \hat{\rho}=Tr(G^\dagger \hat{\rho}) > 0\},
$$
and in Example~\ref{ex:spinB}, case~3, we have shown that  this defines a maximal SDG.
Therefore,  maximal SDGs are in a one-to-one correspondence with singleton quantum credal sets, i.e.,
quantum credal sets including a single density matrix. 

\subsection{Coherent previsions, Born's rule and Gleason theorem}
\label{sec:prevision}

A coherent set of desirable gambles implicitly defines a probabilistic model for the quantum system. The way to see this is to
consider gambles of the form $G-cI$, where $c$ is a real value so that $cI$ is used as a constant gamble here, and $G \in \CH$ 
is any gamble.
Say that Alice is willing to accept the gamble  $G-cI$. From a behavioural point of view, we can reinterpret this by saying that she is willing to buy 
gamble $G$ at price $c$, since she is giving away $c$ utiles while gaining $G$. Focusing on the supremum price for which the transaction is desirable for Alice, leads us to the following:

\begin{definition}\label{def:lower}
 Let $\domain$ be a SDG. For all $G \in \domain$, 
 \begin{equation}
 \label{eq:lower}
  \underline{P}(G)= \sup \left\{c \in \mathbb{R} \mid G-cI \in  \domain\right\}
 \end{equation} 
is called the {\bf coherent lower prevision} of $G$.   \\ A {\bf coherent upper prevision} is instead defined as
 \begin{equation} \label{eq:upper}
  \overline{P}(G)= \inf \left\{c \in \mathbb{R} \mid cI-G \in  \domain\right\}.
 \end{equation}
 \end{definition}
 
Condition~\eqref{eq:upper} makes it clear that the coherent upper prevision for a gamble $G$ is Alice's infimum selling price for it. It should also be observed that lower and upper previsions are conjugate: $\overline{P}(G)=-  \underline{P}(-G)$.

The terminology ``prevision" is borrowed from de Finetti. It should be clear, however, that  previsions are nothing else but \textbf{expectations}.\footnote{In QM  the expected value of an operator $H$ (Hermitian matrix) is defined as $Tr(H^\dagger \rho)$, see for instance  \citet{nielsen2010quantum}.}

 \begin{Proposition}\label{prop:previsions-and-cqs}
 Consider SDG $\domain$ and its dual $\mdesirs$.
Then we have that, for every $G \in \CH$, 
 \begin{equation}
 \label{eq:lowupprev}
  \underline{P}(G)= \inf_{\rho \in\mdesirs } Tr(G^\dagger \rho), ~~~\;\;~~  \overline{P}(G)= \sup_{\rho 
\in\mdesirs } Tr(G^\dagger \rho).
 \end{equation}
 This shows that coherent lower (resp. upper) previsions are just the lower (resp. upper) envelope of the expectations of $G$ computed w.r.t.\ the set of density matrices 
$\mdesirs $.
\end{Proposition}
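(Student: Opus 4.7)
The plan is to invoke the representation theorem (Theorem~\ref{thm:dualityopen}) to convert the supremum defining $\underline{P}(G)$ into an infimum over the credal set. Recall from equation~\eqref{eq:credal2} that for any $H \in \CH$, one has $H \in \domain$ if and only if either $H \gneq 0$ or $H \cdot \rho > 0$ for every $\rho \in \mdesirs$. Applying this with $H = G - cI$ and using $Tr((G-cI)^\dagger \rho) = Tr(G^\dagger \rho) - c\,Tr(\rho) = Tr(G^\dagger \rho) - c$ (since $\rho \in \CD$ has $Tr(\rho) = 1$), we get that $G - cI \in \domain$ iff $G - cI \gneq 0$ or $Tr(G^\dagger \rho) > c$ for all $\rho \in \mdesirs$.

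First I would handle the ``generic'' branch: the set of $c$ such that $Tr(G^\dagger \rho) > c$ holds for every $\rho \in \mdesirs$ is exactly $\{c : c < \inf_{\rho \in \mdesirs} Tr(G^\dagger \rho)\}$, whose supremum is $\inf_{\rho \in \mdesirs} Tr(G^\dagger \rho)$. This already gives the inequality $\underline{P}(G) \geq \inf_{\rho \in \mdesirs} Tr(G^\dagger \rho)$.

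For the reverse inequality, I would argue that the ``PSDNZ branch'' cannot push $\underline{P}(G)$ any higher. Indeed, whenever $G - cI \gneq 0$, then for any $\rho \in \mdesirs \subseteq \CD$ we have $Tr((G-cI)^\dagger \rho) \geq 0$ (since the trace of the product of a PSD matrix and a density matrix is non-negative), i.e.\ $c \leq Tr(G^\dagger \rho)$. Taking the infimum over $\rho$ gives $c \leq \inf_{\rho \in \mdesirs} Tr(G^\dagger \rho)$, so the extra values of $c$ contributed by this branch never exceed the bound above. Combining both directions yields the identity $\underline{P}(G) = \inf_{\rho \in \mdesirs} Tr(G^\dagger \rho)$.

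The upper prevision formula follows immediately from the conjugacy relation $\overline{P}(G) = -\underline{P}(-G)$ together with the identity just established, since $\inf_{\rho}Tr((-G)^\dagger \rho) = -\sup_{\rho} Tr(G^\dagger \rho)$. The only delicate point to watch is the strict-versus-nonstrict inequality arising from the openness property~(S3): the ``$> 0$'' in~\eqref{eq:credal2} translates into a strict inequality $c < Tr(G^\dagger \rho)$, but this is precisely why the supremum in~\eqref{eq:lower} recovers the infimum over $\mdesirs$ without being attained in general; this is the only subtlety in the otherwise routine computation.
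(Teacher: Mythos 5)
Your proof is correct and follows essentially the same route as the paper's: apply the dual characterisation~\eqref{eq:credal2} to $G-cI$, use $Tr(\rho)=1$ to reduce the membership condition to $Tr(G^\dagger\rho)-c>0$, and identify the supremum of admissible $c$ with $\inf_{\rho\in\mdesirs}Tr(G^\dagger\rho)$, obtaining the upper prevision by conjugacy. Your explicit check that the PSDNZ branch of~\eqref{eq:credal2} only contributes values $c\leq\inf_{\rho\in\mdesirs}Tr(G^\dagger\rho)$ is a small point the paper glosses over, and it is handled correctly.
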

\noindent 

Indeed, first notice that, for any $\rho \in \CD$, $(G-cI)\cdot \rho=Tr((G-cI)^\dagger \rho)=Tr(G^\dagger \rho)-cTr(\rho)$. But since $Tr(\rho)=1$, we get that
\begin{equation}
\label{eq:linearityc}
(G-cI)\cdot \rho=Tr(G^\dagger \rho)-c.
\end{equation} 
Since we are looking for the supremum $c$ such that $G-cI$ is still in $\domain$, by characterisation~\eqref{eq:credal2} and~\eqref{eq:linearityc}  we obtain that   $G-cI  \in  \domain$, provided that $Tr(G^\dagger \rho)-c>0$ for all $\rho \in \mdesirs$ ($\mdesirs$ being the dual of $\domain$).
Since $Tr(G^\dagger \rho)-c>0$ for all $\rho \in \mdesirs$, it follows that 
\begin{equation}
c\leq \inf_{\rho \in\mdesirs } Tr(G^\dagger \rho) 
\end{equation} 
holds for any $c \in \mathbb{R}$ such that $ G-cI \in  \domain$. 
Since $c < \inf_{\rho \in\mdesirs } Tr(G^\dagger \rho)$ implies that $ G-cI \in  \domain$, necessarily  $\underline{P}(G)=\inf_{\rho \in\mdesirs } Tr(G^\dagger \rho)$.

Equation~\eqref{eq:lowupprev} is important and gives us a way to compute coherent lower and upper previsions in terms of a set of density matrices $\rho$. Recall that we may give rise to a set of density matrices as a consequence of Alice's total or partial ignorance about a quantum system. When Alice's beliefs correspond to a maximal SDG instead, we get that  $\mdesirs $ includes only a single density matrix, and from~\eqref{eq:lowupprev} we  get the standard expectation in QM: $\overline{P}(G)=\underline{P}(G)=Tr(G^\dagger \rho)$. The converse, that $\overline{P}(G)=\underline{P}(G)$ for any $G\in \domain$ implies 
$\mdesirs=\{\rho\} $, is proven in the Appendix, Proposition \ref{prop:linearprev}.

In the case that  $\overline{P}(G)=\underline{P}(G)$, their common value is denoted by $P(G)$ and is called the {\bf  linear prevision} of $G$. A  linear prevision is just an expectation w.r.t.\ a density matrix $\rho$; it is coherent because we have shown that so is $\rho$.

   \begin{SnugshadeB}
\begin{MethodB}[Born's rule]\\
Born's rule is a law of QM that gives the probability that a measurement on a quantum system will yield a given result.
We will show how we can derive it. Assume that $  \mdesirs=\{\hat{\rho}\}$ includes a single density matrix  so that, for $G=\Pi_i$, $\underline{P}(\Pi_i)=\overline{P}(\Pi_i)=P(\Pi_i)$.
We then have from~\eqref{eq:lowupprev}:
$$
P(\Pi_i)= \inf\limits_{\rho \in \mdesirs}Tr(\Pi_i^\dagger \rho)=\sup\limits_{\rho \in \mdesirs}Tr(\Pi_i^\dagger \rho)=Tr(\Pi_i^\dagger \hat{\rho})=Tr(\Pi_i \hat{\rho} \Pi_i),
$$
where we have exploited $\Pi_i=\Pi_i^\dagger=(\Pi_i)^2$.
${P}(\Pi_i)$ is Alice's fair buying and selling price for the event $\Pi_i$ and   is therefore 
related to her probability that such an event happens:
$$
p_i:={P}(\Pi_i)=Tr(\Pi_i \hat{\rho} \Pi_i).
$$
We have therefore derived Born's rule. More, in general, when $  \mdesirs$ is not a singleton, 
$\underline{P}(\Pi_i),\overline{P}(\Pi_i)$ respectively determine the lower and upper probability
of the event indicated by $\Pi_i$.
\end{MethodB}
   \end{SnugshadeB}

\begin{SnugshadeE}
 \begin{example}[Classical coin.]
 \label{ex:coinD}
 Let us go back to our coin in Example \ref{ex:coinC} and compute the coherent lower previsions.
 
In the state of full ignorance, for any $diag(g) \in \CH$
  $$
  \underline{P}(g)= \inf_{p } \sum_{i=1}^2 g_ip_i =\min_i g_i.
  $$
  The infimum is taken w.r.t.\ all probability mass functions  and, therefore,
  the worst case is a probability mass function that puts all the mass in 
  the minimum element of the vector $g \in \reals^2$.
  Clearly, we have that  $  \overline{P}(G)=\max_i g_i$.
  In a state of full ignorance, Alice's expected values for  the payoff of any gamble $g$
  is included in $[\min g,\max g]$.  

  In the  second case, for any $diag(g) \in \CH$
  $$
  \underline{P}(G)= \inf_{p=(0.2,0.8)w+(1-w)(0.6,0.4) \text{ with } w\in [0,1] } \sum_{i=1}^2 g_ip_i=\min\left(g_10.2+g_20.8,g_10.6+g_20.4\right),
  $$
  which means that the lower prevision (expectation) of $G$ is the minimum between the expectations
  of $g_10.2+g_20.8$ and $g_10.6+g_20.4$.  $\overline{P}(G)$ can be obtained by computing the maximum.
  
  In the  third case, for any $diag(g) \in \CH$
  $$
  \underline{P}(G)=\overline{P}(G)={P}(G)=  \frac{1}{2}g_1+\frac{1}{2}g_2,
  $$
  the expectation of  $g$ w.r.t.\ a fair coin.
 \end{example}
\end{SnugshadeE}
In the above example, we have assessed that when $\mdesirs$ includes all probability mass functions, Alice is in a state of ignorance about the probability of Head and Tail of the coin; whence we have that $  \underline{P}(diag(g))=\min_i g_i$ and $  \overline{P}(diag(g))=\max_i g_i$.
This means that Alice only knows that she will receive/pay a quantity belonging to  the interval $[\min_i g_i,\max_i g_i]$. That is why
she only accepts positive gambles in a state of complete ignorance. A similar result holds for  QM.
 Indeed, assume that Alice is in a state of complete ignorance about the density matrix associated to the quantum system, i.e., $\mdesirs=\CD$.
 Then it holds that
   $$
  \underline{P}(G)= \inf_{\rho \in \CD } Tr(G^\dagger \rho)=\lambda_{min}(G), ~~  \overline{P}(G)= \sup_{\rho \in \CD } Tr(G^\dagger \rho)=\lambda_{max}(G),
  $$
  where $\lambda_{min}(G)$ (resp. $\lambda_{max}(G)$) is the minimum (resp. maximum) eigenvalue of $G$.
This is proven in the Appendix, Proposition \ref{pro:eigenvalue}. 

\begin{SnugshadeE}
 \begin{example}[Quantum coin.]
\label{ex:spinD}
Let us go back to the qubit of Example \ref{ex:spinC} and compute the coherent lower and upper previsions.

In the state of full ignorance, for any $G \in \CH$
  $$
  \underline{P}(G)= \inf_{\rho \in \CD } Tr(G^\dagger \rho)=\lambda_{min}(G),
  $$
  where $\lambda_{min}(G)$ is the minimum eigenvalue of $G$. The infimum is taken w.r.t.\ all density matrices   and, therefore,
  the worst case is a density matrix that  puts ``all the mass'' in 
  the minimum eigenvalue of $G$ (the density matrix that achieves the minimum is equal to the projector corresponding
  to the eigenvector relative to $\lambda_{min}(G)$).
  Clearly, we have that $ \overline{P}(G)=\lambda_{max}(G)$.
  In a state of full ignorance, Alice's expected values for  the payoff of any gamble $G$
  is included in $[\lambda_{min}(G),\lambda_{max}(G)]$, as proven in Proposition~\ref{pro:eigenvalue}.

  In the  second case, for any $G \in \CH$
  $$
  \underline{P}(G)= \inf_{\rho \in \mdesirs_2 } Tr(G^\dagger \rho)=\min\left(Tr(G^\dagger D_1),Tr(G^\dagger D_2)\right),
  $$
  while $\overline{P}(G)$ can be obtained  computing the maximum.
  
  In the  third case, for any $G \in \CH$,
  $$
  \underline{P}(G)=\overline{P}(G)={P}(G)= Tr(G^\dagger D),
  $$
  which is the expectation of $G$ w.r.t.\ the density matrix $D$.
  In this case, we have a that $\underline{P}(G)=\overline{P}(G)$, because Alice is sure about the density matrix.
 \end{example}

\end{SnugshadeE}

Finally we will mention that it is possible to re-derive the first axiom of QM without explicitly using duality. 
This can be obtained by defining lower and upper coherent previsions first and then using Gleason's theorem \cite{gleason1957measures}
to relate coherent  previsions to density matrices. 

  \begin{SnugshadeB}
\begin{MethodB}[Gleason's theorem:] suppose $\mathcal{H}$ is a separable Hilbert space of complex dimension at least $n\geq 3$. Then for any quantum 
probability measure on the lattice $Q$ of self-adjoint projection operators on $\mathcal{H}$ there exists a unique trace class 
operator $\rho$ such that $p_i=P(\Pi_i) = Tr(\Pi_i \rho)$ for any self-adjoint projection $\Pi_i$ in $Q$. 
\end{MethodB}
\end{SnugshadeB}

Now, let us assume that Alice's SDG $\mathcal{K}$ is maximal, given that Gleason's theorem holds only for a single density matrix. Then we compute the lower (and upper) prevision induced by $\mathcal{K}$ through~\eqref{eq:lower}.
Since $\mathcal{K}$ is maximal, it holds that $  \underline{P}(G)=\overline{P}(G)=P(G)$.
Since $P$ is linear, $P(G)$ is completely defined by its value on the projectors: writing $G=\sum_{i=1}^n \lambda_i \Pi_i$,  then $P(G)=\sum_{i=1}^n \lambda_i P(\Pi_i)$. Then for $n\geq 3$, we can invoke Gleason's theorem and conclude that there exists a unique trace class 
operator $\hat{\rho}$ such that $p_i=P(\Pi_i) = Tr(\Pi_i \hat{\rho})$. Therefore, $\hat{\rho}$ is the dual
of Alice's maximal SDG. Since $\hat{\rho}$ is unique, it  uniquely defines $\mathcal{K}$ ($n\geq 3$).

This approach is similar to that developed by \citet{pitowsky2003betting,Pitowsky2006}.
The main difference is that the probabilities $P(\Pi_i)$ for us are derived quantities, defined from $\domain$.
This means that we start from a coherent set of desirable gambles $\domain$
and we derive the probabilities $P(\Pi_i)$ that it induces on the  projectors $\Pi_i$.
Moreover, we can immediately state that these probabilities are coherent (self-consistent) because they are induced by a coherent set of desirable gambles.

\subsection{Interpretation of the first axiom in terms of desirability and consequences}
In this section we have derived the first axiom of QM as a direct consequence of rationality, via duality.
By duality, we can then conclude that:
 \begin{enumerate}
  \item A closed convex set of density matrices represents Alice's personal knowledge about the quantum experiment, that is, her beliefs  about the possible results of the experiment. A state of full ignorance is represented by the set of all density matrices; a state
 of maximal personal knowledge is represented by a single density matrix and any intermediate degree of beliefs is represented by a closed convex set
 of density matrices. Quantum states (density matrices) are therefore personal judgements of a subject 
 about the outcome of a quantum experiment (as standard probabilities are
 personal judgements of a subject about the outcome of a classical uncertain experiment).
 \item Closed convex sets of density matrices are coherent -- in the same way as standard probabilities are coherent. This means that a subject who states her knowledge about the quantum experiment in terms of a closed convex set of density matrices cannot be made a \textbf{partial loser} in a gambling system based on them: the dual set is a coherent set of desirable gambles and so it avoids partial loss.
 \item We have shown that coherent sets of desirable gambles over $\mathbb{C}_h^{n \times n}$ stand to \textbf{density matrices}
as coherent sets of desirable gambles over $\mathbb{R}^n$ stand to \textbf{probabilities}.
The consequence is that the \textbf{first axiom of QM }on density matrices on $\mathbb{C}_h^{n \times n}$ is structurally and formally equivalent to 
\textbf{Kolmogorov's first and second axioms} about probabilities on $\mathbb{R}^n$. In fact, they can be both derived via duality
from a coherent set of desirable gambles.
 \end{enumerate}
 
 
\section{Second axiom of QM}
\label{sec:2ndaxiom}
This section shows how to derive a coherent rule for updating beliefs. In particular our aim is to answer this question: how should Alice change her gambling assessments in the prospect of obtaining new information in the form of an event? Answering this question will lead us to derive the second axiom of QM. The second axiom states:
   \begin{SnugshadeF}
\begin{MethodF}[Axiom II:]
quantum projection measurements are described by a collection  
$\{\Pi_i\}_{i=1}^n$  of
projection operators that satisfy the completeness equation $\sum_{i=1}^n \Pi_i =I$. These are operators acting on the 
state space of the
system being measured. If the state of the quantum system is $\rho$ immediately
before the measurement then  the state after the measurement is
$$
\hat{\rho}=\dfrac{\Pi_i \rho \Pi_i}{Tr(\Pi_i \rho \Pi_i)},
$$
provided that $Tr(\Pi_i \rho \Pi_i)>0$ and the probability that result $i$ occurs is given by
$p_i=Tr(\Pi_i \rho \Pi_i)$.
\end{MethodF}
\end{SnugshadeF}

\subsection{Coherent updating}
\label{sec:updating}

We initially assume that Alice considers an event ``indicated'' by a certain projector $\Pi_i$
in $\Pi=\{\Pi_i\}_{i=1}^n$. The information it represents is: an experiment $\Pi$ is performed and the event indicated by $\Pi_i$ happens.\footnote{{
We assume  that the quantum measurement device is a ``perfect meter'' (an ideal common  assumption in QM), i.e., there are not observational errors -- Alice can trust the
received information.}} Under this assumption, Alice can focus on gambles that are contingent on the event $\Pi_i$: these are the gambles such that ``outside'' $\Pi_i$ no utile is received or due -- status quo is maintained --; in other words, they represent gambles that are called off if the outcome of the experiment is not $\Pi_i$. 
Mathematically, these  gambles are of the form
$$
G=\left\{\begin{array}{ll}
  H &  \text{if } \Pi_i \text{ occurs},\\
  0 &  \text{if } \Pi_j \text{ occurs, with}  ~j\neq i.\\
  \end{array}\right.
$$
 It is then clear that $H=\alpha\Pi_i$ with $\alpha \in \reals$ since  $\Pi_i\Pi_j=0$ for each $j\neq i$.

In this light, we should define Alice's conditional set of strictly desirable gambles by simply restricting the attention to gambles of the form $\Pi_i G \Pi_i=\alpha\Pi_i$ with $G \in \CH$.

\begin{definition}
 let  $\domain$ be an SDG, the set obtained as
\begin{equation}
\label{eq:condition}
\domain_{\Pi_i}=\left\{G \in \CH \mid  G \gneq0 \textit{ or }\Pi_i G \Pi_i \in \domain \right\}
\end{equation} 
is therefore called the {\bf set of desirable gambles conditional} on  $\Pi_i$.\footnote{This is structurally the way conditioning is defined also in the classical theory of desirable gambles, see \citet[Ch.~1]{augustin2014}.} \end{definition}

The conditional set includes  all
the PSDNZ matrices (they are always desirable for Alice) and all the gambles $G$ such that $\Pi_i G \Pi_i \in \domain$.
 
In the Appendix, Proposition \ref{prop:conditioning}, we show that Alice's set of desirable gambles conditional on  $\Pi_i$,
that is $\domain_{\Pi_i}$, is still a
non-pointed convex cone that accepts partial gains and satisfies the openness property. In other words,
the conditioning operation transforms SDGs into SDGs and, therefore, preserves coherence.

We can also compute the dual of  $\domain_{\Pi_i}$, i.e.,
$\mdesirs_{\Pi_i}$ -- we call it a \textbf{conditional quantum credal set} --, by applying the duality
transformation~\eqref{eq:credal1}. It is useful
to investigate what is the relationship between $\mdesirs_{\Pi_i}$
and the unconditional quantum credal set $\mdesirs$ that is the dual of $\domain$.
The following diagram gives the relationships among  $\domain,\mdesirs, \domain_{\Pi_i},\mdesirs_{\Pi_i}$.
$$
\begin{tikzpicture}
  \matrix (m) [matrix of math nodes,row sep=3em,column sep=6em,minimum width=2em]
  {
     \domain & \,\domain_{\Pi_i} \\
     \mdesirs & \mdesirs_{\Pi_i} \\};
  \path[-stealth]
    (m-1-1) edge [<->] node [left] {dual} (m-2-1)
            edge [double] node [below] {conditioning} (m-1-2)
    (m-2-1.east|-m-2-2) edge [double]  node [below] {conditioning}
          (m-2-2)
    (m-2-2) edge [<->] node [right] {dual} (m-1-2);
\end{tikzpicture}
$$
Let us show how $\mdesirs_{\Pi_i}$ can be derived by $\mdesirs$.

 \begin{SnugshadeB}
\begin{MethodB}[Subjective formulation of the second axiom of QM]\\
Given a quantum credal set $\mdesirs$, the corresponding quantum credal set conditional on $\Pi_i$ is obtained as
\begin{equation}
\label{eq:rhobayes}
 \mdesirs_{\Pi_i}=\left\{\dfrac{\Pi_i \rho \Pi_i}{Tr(\Pi_i \rho \Pi_i)} \Big|  \rho \in 
\mdesirs\right\},
\end{equation} 
 provided that $Tr(\Pi_i \rho \Pi_i)>0$  for every $\rho \in \mdesirs$. Note that the latter  condition  implies that $\Pi_j \notin \mdesirs$
 for any $j\neq i$.
\end{MethodB}
\end{SnugshadeB}
In fact, let us consider the set $\left\{ \frac{\Pi_i \rho \Pi_i}{Tr(\Pi_i \rho \Pi_i)} \mid  \rho \in \mdesirs \right\}$, 
which is a well-defined closed convex set of density matrices.
From~\eqref{eq:credal2}, its dual SDG can be obtained as
$$
\begin{array}{l}
\left\{G \in \CH \mid G\gneq 0\} \cup\{ G \in \CH \mid G \cdot \frac{\Pi_i \rho \Pi_i}{Tr(\Pi_i \rho \Pi_i)}>0 \forall \rho \in \mdesirs \right\}\\
=\left\{G \in \CH \mid G\gneq 0\} \cup\{ G \in \CH \mid G \cdot (\Pi_i \rho \Pi_i)>0, \forall \rho \in \mdesirs \right\},\end{array}
$$
where the denominator $Tr(\Pi_i \rho \Pi_i)$ has been neglected because it is always strictly positive by hypothesis.
Since
$$
 G \cdot (\Pi_i \rho 
\Pi_i)= Tr(G^\dagger \Pi_i \rho \Pi_i)= Tr((\Pi_i G \Pi_i)^\dagger  \rho)= (\Pi_i G \Pi_i) \cdot \rho ,
$$
we can rewrite the dual as
 $$
 \left\{G \in \CH \mid G\gneq 0\right\} \cup \left\{G \in \CH \mid   (\Pi_i G \Pi_i) \cdot \rho>0, \forall \rho \in \mdesirs \right\}.
 $$
By definition of $\mdesirs$, this set coincides with~\eqref{eq:condition}, which is the  definition of $\domain_{\Pi_i}$. Showing the inverse (from $\domain_{\Pi_i}$ to $\mdesirs_{\Pi_i}$) is analogous.

In summary, we have derived the second axiom of QM from the usual definition of conditioning in desirability, once it is extended to the quantum case.

\begin{remark}
 We have called ${\domain}_{\Pi_i}$ a conditional set of desirable gambles because it is the set of desirable gambles conditional on $\Pi_i$. Note also that the last part of the axiom, that is,
$p_i=Tr(\Pi_i \rho \Pi_i)$ -- the probability that $\Pi_i$ occurs -- has been derived in Section \ref{sec:prevision}.
\end{remark}

\begin{SnugshadeE}
 \begin{example}[Classical coin.]
 \label{ex:coinE}
 Let us going back to the classical coin example.  Alice considers the case that the coin has landed Head up and wonders how she should consequently change her beliefs about the experiment. Even without doing any calculation, it is clear that Alice should only accept the gambles that give a positive payoff for Head, no matter the payoff for Tail.
 
Let us show that we can reach the same conclusion by applying the calculus descrived before. The conditional event is \textit{Head}, represented
 by the projector $\Pi_1=e_1e_1^T$. To see the gambles that  Alice is willing to  accept based on this information, we compute the conditional SDG.

For the first SDG, $ \domain^{(1)}=\left\{G \in \CH \text{ diagonal } \mid G \gneq0\right\}$,
    the conditional SDG is
  $$
  \begin{array}{l}
  \domain^{(1)}_{\Pi_1}=\left\{G \in \CH \text{ diag. } \mid G \gneq0 \textit{ or }\Pi_1 G\Pi_1 \in    \domain^{(1)}\right\}=\left\{G \in \CH \text{ diag. } \mid g\gneq0 \textit{ or } g_{1}>0 \right\}.
  \end{array}
  $$
The interpretation of this result is straightforward. Given the event ``the coin has landed head up'', Alice accepts all gambles $G\gneq0 $ as well as all the gambles $G=diag(g_1,g_2)$ such that   $g_1>0$ (the gambles with positive payoff for Head, no matter the value of $g_2$).

For the second and third SDGs, we derive again that 
  $$
  \begin{array}{rcl}
  \domain^{(2)}_{\Pi_1}=\domain^{(3)}_{\Pi_1}=\left\{G \in \CH \text{ diagonal } \mid g\gneq0 \textit{ or } g_{1}>0 \right\},
  \end{array}
  $$
  consistently to what discussed before.
 
Therefore, in all the cases, the conditional SDG is the green region in Figure~\ref{fig:coincondition}.
By applying~\eqref{eq:credal1} to Alice's conditional SDG, it can easily  be derived that the corresponding conditional set of probabilities
$\mdesirs_{\Pi_1}$  includes only the probability mass function $p=(1,0)$.
In fact, let us consider the third case  $\domain^{(3)}_{\Pi_1}$ (SDG relative to the fair coin). In the unconditional case, Alice's set of density matrices includes only $\rho_3=diag(0.5,0.5)$, i.e., she believes that the coin is fair.
From~\eqref{eq:rhobayes}, we derive that  her conditional set of density matrices is
$$
\dfrac{\Pi_1 \rho_3 \Pi_1}{Tr(\Pi_i \rho_3 \Pi_i)}=\dfrac{\left[\begin{matrix}
                                                           0.5 & 0\\
                                                           0 & 0
                                                          \end{matrix}\right]
}{0.5}=\left[\begin{matrix}
                                                           1 & 0\\
                                                           0 & 0
                                                          \end{matrix}\right],
$$
whose diagonal is $p=(1,0)$. This is just Bayes' rule. We are simply applying Bayes' rule to the density matrices
(in this case probability mass functions) in $\mdesirs$.
Under the assumption that ``the coin has landed head up'', Alice's knowledge about the coin experiment ``has collapsed'' to  $p=[1,0]$ -- she knows
that the result of the experiment is Head. 
Moreover, this conclusion is the same in the three cases. If we think about the three cases as associated to three different people, Alice A, Alice B and Alice C, then all Alices' conditional SDGs
coincide. This shows that, in spite of the fact that we are considering subjective information, the \textbf{different subjects may reach the same conclusion conditional on some evidence}.

\centering
\begin{minipage}{0.30\textwidth}
\begin{tikzpicture}[scale=1.5]
    \draw[->] (-1,0) coordinate (xl) -- (1,0) coordinate (xu) node[right] {$g_1$};
    \draw[->] (0,-1) coordinate (yl) -- (0,1) coordinate (yu) node[above] {$g_2$};
        \draw (0,0) circle (1.pt);
                        \draw[->,black!30!green,thick] (0,0) coordinate (xl) -- (0,1) coordinate (xu);
    \begin{pgfonlayer}{background}
      \draw[border] (0,0) -- (0,1) coordinate (a1away);
      \draw[border] (0,0) -- (1,0) coordinate (a2away);
      \fill[blue!50,  opacity=0.5] (0,0) -- (1,0) -| (1,1) --(0,1);
    \end{pgfonlayer}
        \begin{pgfonlayer}{background}
            \fill[green!50,  opacity=0.5] (0,-1) -- (0,1) -| (1,1) --(1,-1);
    \end{pgfonlayer}
  \end{tikzpicture}\captionof{subfigure}{}
  \end{minipage}   
  \begin{minipage}{0.30\textwidth}
      \begin{tikzpicture}[scale=1.5]
    \draw[->] (-1,0) coordinate (xl) -- (1,0) coordinate (xu) node[right] {$g_1$};
    \draw[->] (0,-1) coordinate (yl) -- (0,1) coordinate (yu) node[above] {$g_2$};
  \draw[->,black!30!green,thick] (0,0) coordinate (xl) -- (0,1) coordinate (xu);
    \draw (0,0) circle (1.pt);
    \begin{pgfonlayer}{background}
      \draw[dashed] (0,0) -- (-0.66,1) coordinate (a1away);
      \draw[dashed] (0,0) -- (1,-0.25) coordinate (a2away);
      \fill[blue!50,  opacity=0.5] (0,0) -- (a1away) -| (a2away) --(0,0);
    \end{pgfonlayer}
            \begin{pgfonlayer}{background}
            \fill[green!50,  opacity=0.5] (0,-1) -- (0,1) -| (1,1) --(1,-1);
    \end{pgfonlayer}
  \end{tikzpicture}\captionof{subfigure}{}
  \end{minipage} 
    \begin{minipage}{0.30\textwidth}
      \begin{tikzpicture}[scale=1.5]
    \draw[->] (-1,0) coordinate (xl) -- (1,0) coordinate (xu) node[above] {$g_1$};
    \draw[->] (0,-1) coordinate (yl) -- (0,1) coordinate (yu) node[above] {$g_2$};
  \draw[->,black!30!green,thick] (0,0) coordinate (xl) -- (0,1) coordinate (xu);
    \draw (0,0) circle (1.pt);
    \begin{pgfonlayer}{background}
      \draw[dashed] (0,0) -- (-1,1) coordinate (a1away);
      \draw[dashed] (0,0) -- (1,-1) coordinate (a2away);
      \fill[blue!50,  opacity=0.5] (0,0) -- (a1away) -| (a2away) --(0,0);
    \end{pgfonlayer}
            \begin{pgfonlayer}{background}
            \fill[green!50,  opacity=0.5] (0,-1) -- (0,1) -| (1,1) --(1,-1);
    \end{pgfonlayer}
  \end{tikzpicture}\captionof{subfigure}{}
  \end{minipage}   
\captionof{figure}{Alices' conditional sets of coherent strictly desirable gambles corresponding to three different 
degrees of beliefs about the  classical coin.  \label{fig:coincondition}}
 \end{example}
\end{SnugshadeE}

\begin{SnugshadeE}
\begin{example}[Quantum coin.]
\label{ex:spinE}
Let us go back to the qubit of Example \ref{ex:spinC}. In this case, Alice considers the information ``the measurement $\Pi$ has been performed and the event
indicated by the projector 
$$
\Pi_1=\frac{1}{2}\left[\begin{array}{cc}
    1 & -\iota\\\iota &  1
    \end{array}\right]
$$
has  been observed''.  In this case, we can immediately conclude that Alice knows the outcome of the experiment (detection along the direction $\Pi_1$) -- there is not uncertainty. As a consequence, Alice will only accept the gambles that give a positive payoff
 w.r.t.\ $\Pi_1$, no matter the payoff for $\Pi_2$. Our goal is to compute Alice's conditional SDG in the three cases.

In the state of full ignorance, Alice's SDG is $\domain^{(1)}=\{G \in \CH \mid G\gneq0\}$ and after conditioning it becomes
$$
\domain^{(1)}_{\Pi_1}=\{G \in \CH \mid G \gneq0 \textit{ or } \Pi_1G\Pi_1=\gamma_1\Pi_1 \in \domain^{(1)}\}.
$$
The last condition $\gamma_1\Pi_1 \in \domain^{(1)}$ can also be written as $\gamma_1>0$: apart from the positive gambles, Alice only accepts those gambles that give positive payoffs along $\Pi_1$.
By Pauli's matrix decomposition of $G$, we obtain that $\Pi_1G\Pi_1=\gamma_1\Pi_1 $, with $\gamma_1=v + y$, whence the resulting SDG is
$$
\domain^{(1)}_{\Pi_1}=\{(x,y,z,v) \in \reals^4 \mid v + y>0  \}.
$$
A slice of this set is shown in Figure~\ref{fig:conditionalcones}(a) together with the  ice-cream cone -- Alice's
SDG before conditioning. Alice's SDG is now the hyperspace $v + y>0$ (the oblique plane is $v + y=0$).

For the second SDG, 
$$
\begin{array}{rcl}
\domain^{(2)}=\{G \in \CH \mid G \gneq0\} \cup \{G \in \CH \mid Tr(G^\dagger D_1)>0 \text{ and } Tr(G^\dagger D_2)>0\}
\end{array}
$$
with $D_1=diag(0.2,0.8)$ and $D_2=diag(0.6,0.4)$; the conditional SDG is
$$
\domain^{(2)}_{\Pi_1}=\{G \in \CH \mid G \gneq0 \textit{ or }  \Pi_1G\Pi_1=\gamma_1\Pi_1 \in \domain^{(2)}\}. 
$$
Since $Tr(\Pi_1G^\dagger\Pi_1 D_i)=Tr(G^\dagger\Pi_1 D_i\Pi_1)$, and based on Pauli's matrix decomposition of $G$, these constraints become:
$$
\begin{array}{rcl}
 \domain^{(2)}_{\Pi_1}&=&\{(x,y,z,v) \in \reals^4 \mid v \neq 0 \text{ and } \sqrt{x^2+y^2+z^2}\leq v\}\cup \{(x,y,z,v) \in \reals^4 \mid v + y>0 \}\\
&=&\{(x,y,z,v) \in \reals^4 \mid v + y>0 \},
\end{array}
$$
where we have exploited that $Tr(G^\dagger\Pi_1 D_i\Pi_1)\propto v+y$.

In the third case, Alice's SDG is
$$
\domain^{(3)}=\{G \in \CH \mid G \gneq0\} \cup\{G \in \CH \mid Tr(G^\dagger D)>0 \}, \textit{ with } D=\frac{1}{2}\left[\begin{array}{cc}
    1 & -\iota\\\iota &  1
    \end{array}\right],
$$
and after conditioning $\domain^{(3)}_{\Pi_1}=\{G \in \CH \mid G \gneq0 \textit{ or }  \Pi_1G\Pi_1=\gamma_1\Pi_1 \in \domain^{(3)}\}$,
which is equivalent to
$$
\domain^{(3)}_{\Pi_1}=\{G \in \CH \mid G \gneq 0\} \cup\ \{G \in \CH \mid Tr(\Pi_1G^\dagger\Pi_1 D)>0\}.
$$
Note that $Tr(\Pi_1G^\dagger\Pi_1 D)=Tr(G^\dagger\Pi_1 D\Pi_1)=Tr(G^\dagger D)$, where we have exploited that $D=\Pi_1$, whence $\domain^{(3)}_{\Pi_1}=\{G \in \CH \mid G \gneq0\} \cup \{G \in \CH \mid Tr(G^\dagger D)>0\}$.
Again by Pauli's matrix decomposition of $G$, these constraints become:
$$
\begin{array}{rcl}
\domain^{(3)}_{\Pi_1}&=&\{(x,y,z,v) \in \reals^4 \mid v\neq 0 \text{ and }\sqrt{x^2+y^2+z^2}\leq v\}\cup \{(x,y,z,v) \in \reals^4 \mid v + y>0 \}\\
&=& \{(x,y,z,v) \in \reals^4 \mid v + y>0 \}. \end{array}
$$

In other words, Alice's state of knowledge about the quantum coin experiment ``has collapsed'' to  $\Pi_1$ -- she knows
that the result of the experiment is ``Vertical'' along $\Pi_1$.
Therefore Alice is going to only accept gambles that give positive payoffs contingently to this information. 
Moreover, this conclusion is the same in the three cases, as in the case of the classical coin.

\centering
\includegraphics[width=4cm]{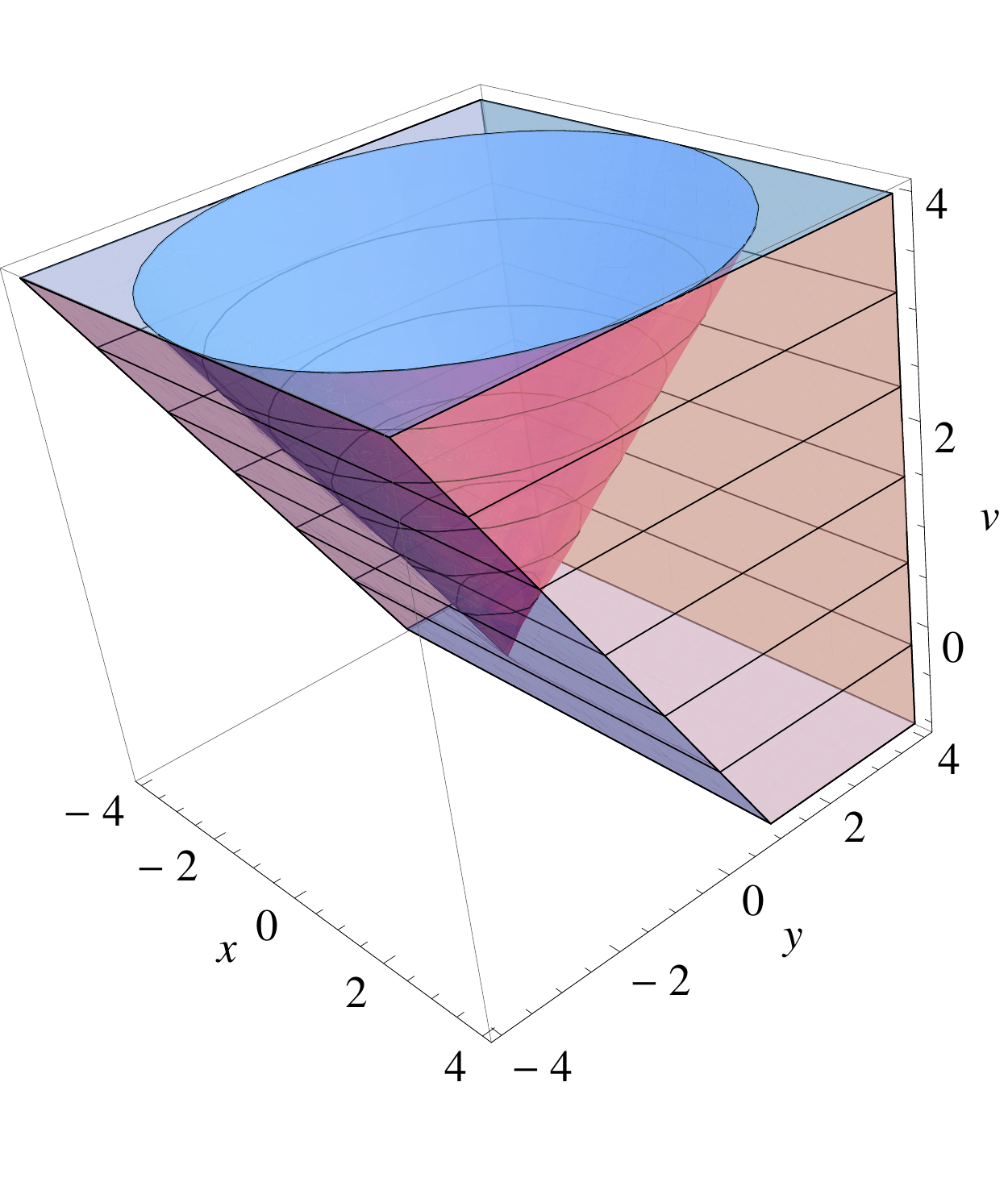}
\captionof{figure}{Alices'  set of coherent conditional SDG  (the hyperspace bounded by the plane that is excluded from Alice's conditional SDG) corresponding to three different 
degrees of beliefs about the  quantum  state.  \label{fig:conditionalcones}}

\end{example}
\end{SnugshadeE}

\subsection{Conditioning for non-elementary events}

So far we have defined conditioning only based on $\Pi_i$, that is, when Alice considers
that the event indicated by $\Pi_i$ has happened. 
In analogy with probability
we call the event indicated by $\Pi_i$ an \textit{elementary event}.
In QM, it is called a \textit{selective measurement}.
We are also interested in non-elementary events, such as ``the event indicated
by $\Pi_i$ or $\Pi_k$ has happened'' or, more in general,
``the event indicated by one of the projectors in the set $\{\Pi_j\}_{j\in J}$
has happened with $J\subseteq \{1,\dots,n\}$''. In the classical case, these events
would for instance be: ``the dice has landed $1$ or $2$'' or ``the outcome of the dice is even''.
Now we set out to define a conditioning rule for Alice's SDG for non-elementary events.

Similarly to before, Alice should focus on gambles that are contingent on one of the events in $\{\Pi_j\}_{j\in J}$.
Conditioning for non-elementary events is then defined as
$$
\domain_{\Pi_J}=\left\{G \in \CH \Bigg| G \gneq 0 \textit{ or }\sum_{j\in J} \Pi_j G  \Pi_j\in \domain 
\right\}
$$
for some subset $J$ of the indexes  $\{1,\dots,n\}$. 
The set
 $\domain_{\Pi_J}$ is again an SDG
(Appendix, Corollary
\ref{cor:nonelementary}).

Note that in general
$$
\sum_{j=1}^n \Pi_j G  \Pi_j \neq G.
$$
This means that in case Alice considers that one of the events $\Pi_i$ has happened without knowing which one in particular (this is called
a \textit{non-selective measurement} in QM), her SDG after taking into account this information is different from $\domain$.

This marks an important difference between the classical and the quantum theory of desirable gambles. 
In the classical case, we always have that $\sum_{j=1}^n \Pi_j G  \Pi_j =G$, taking into account that $G$ is diagonal and $\Pi_j=e_je_j^T$, where $e_j$ is an element of the canonical basis of $\reals^n$ --
as discussed in the classical coin examples. For instance, if in the dice example Alice assumes that the outcome of the dice thrown is a number between $1$ and $6$, then she does not change her set of desirable gambles because
this information does not change her knowledge about the outcome of the experiment.
In QM, it is different: $\{\Pi_j\}_{j=1}^n$ is informative evidence. 

We can also compute the conditional quantum credal set $\mdesirs_{\Pi_J}$ by applying the duality
transformation~\eqref{eq:credal1}. In this case too, 
the relationship between $\mdesirs_{\Pi_J}$
and the unconditional quantum credal set $\mdesirs$, which is the dual of $\domain$, can be obtained as 
\begin{equation}
\label{eq:rhobayesJ}
 \mdesirs_{\Pi_J}=\left\{\dfrac{\sum_{j\in J} \Pi_j \rho \Pi_j}{\sum_{j\in J} Tr(\Pi_j \rho \Pi_j)} \Big|  \rho \in 
\mdesirs\right\},
\end{equation} 
 provided that $\sum_{j=1}^n Tr(\Pi_j \rho \Pi_j)>0$  for every $\rho \in \mdesirs$.\footnote{
 Also in this case it should be observed that in general $\sum_{j=1}^n \Pi_j \rho \Pi_j\neq \rho$
 and, therefore, QM differers from classical probability for which we have that: $\sum_{j=1}^n p(y|x_i)p(x_i)=p(y)$, which is called the law of total probability. A violation of the law of total probability implies incoherence in classical probability. However, we have shown that the updating rule~\eqref{eq:rhobayesJ}
 is coherent, as it transforms SDG into SDG. This means that there is no violation of coherence in QM
 by using~\eqref{eq:rhobayesJ}.}
 The proof is similar to that for~\eqref{eq:rhobayes}.
 
 In QM, there are other type of ``measurements'' apart from the ones determined by projectors,
 such as \textit{Positive Operator Valued Measures} (POVM) and variants such as SIC-POVM,
 see for instance \citet{nielsen2010quantum}.  However POVM can be seen as projective measurements on a higher dimensional space
 and, therefore, they can be taken into account by the updating rule developed in this section.

\subsection{Conditioning on zero probability events}
As a side note, we can show with an example that the formalisation in terms of gambles is more general than that
in terms of density matrices.
Consider the case in which 
$$
\mdesirs=\{\rho\}=\{\Pi_j\},
$$
that is, the quantum state coincides with a projector, so it is a \emph{pure state}.
Let us assume that Alice considers the information ``the event $\Pi_i$ has been observed''
with $\Pi_i\Pi_j=0$.
Since  $Tr(\Pi_i\rho\Pi_i)=0$, we cannot apply~\eqref{eq:rhobayes}. Considered that $p_i=Tr(\Pi_i\rho\Pi_i)$ is the probability of observing $\Pi_i$, we have in particular that~\eqref{eq:rhobayes} is not defined for zero probability events -- the second axiom of QM is not defined in this case.

Conversely, the framework of desirable gambles allows us to compute the resulting set $\mdesirs_{\Pi}$
also in this case.
First, we start by defining the dual of $\mdesirs=\{\rho\}=\{\Pi_j\}$: by~\eqref{eq:credal2}, it is
\begin{equation}
\label{eq:condzero0}
\domain=\left\{G\in \CH \mid G\gneq0 \right\}\cup\left\{G\in \CH \mid G\cdot \rho >0, \forall \rho\in 
\mdesirs\right\}.
\end{equation} 
We can now compute $\domain_{\Pi_i}$, i.e., Alice's conditional SDG:
\begin{equation}
\label{eq:condzero}
\begin{array}{rcl}
\domain_{\Pi_i}&=&\left\{G\in \CH \mid G\gneq0 \textit{ or }\Pi_iG\Pi_i \in \domain\right\}\\
&=&\left\{G\in \CH \mid G\gneq0 \right\}\cup\left\{G\in \CH \mid \Pi_iG\Pi_i\cdot \rho >0, \forall \rho\in 
\mdesirs=\{\Pi_j\}\right\}.\end{array}
\end{equation} 
Now observe that
$$
\Pi_iG\Pi_i\cdot \rho =Tr(G^\dagger\Pi_i\rho \Pi_i)=0,
$$
and so the second set in~\eqref{eq:condzero} is empty, whence
$$
\domain_{\Pi_i}=\left\{G\in \CH \mid G  \gneq0\right\}.
$$
In this case, Alice only accepts gambles such that $G\gneq0$, i.e.,
she is in a state of full ignorance about the experiment. Hence, quantum desirability naturally addresses a ``singularity'' in second axiom of QM within the desirability framework itself rather than resorting to undefined quantities. 

The question of conditioning on zero probability events has a long history in probability, it is for instance at the heart of Borel-Kolmogorov's paradox -- see, e.g.,  \citet[Sec. 15.7]{jaynes2003probability}. It falls in the area of non-Archimedean problems. These problems exhibit a kind of discontinuity for which Archimedean models like probabilities or SDGs are not suited: these models are such that the unconditional beliefs they represent are not informative enough to constrain (by coherence) conditional beliefs in the case of zero probability events. It is precisely for this reason that $\domain_{\Pi_i}$ is vacuous. On the technical side, this happens because the information needed to constrain conditional beliefs under zero-probability events has to lie in the border of the cone of desirable gambles: and the border is not part of an SDG by definition. Using coherent sets of desirable gambles that drop the openness condition (S3), and hence are more general than SDGs, allows in fact non-vacuous conditional beliefs to be obtained also in the non-Archimedean case (we do not pursue this path any further as it is out of the scope of the present paper).

\subsection{Strong temporal coherence -- part I}\label{sec:stc1}
So far we have implicitly assumed that Alice has assessed her  beliefs about the quantum experiment at a certain defined time (say, present time, $t_0$). We have taken care not to make any statement about the evolution of Alice's beliefs in time. This may sound strange in that we have considered updated beliefs. However, strictly speaking, also updated beliefs are present-time beliefs: they are Alice's current beliefs under the assumption that the conditional event occurs. They bear no implication on Alice's actual future beliefs \emph{after} the event has occurred.

This is quite a subtle point in probability that goes back to its very foundations with Bayes and De Moivre trying to provide a justification of conditioning as a rule to compute future beliefs through a sort of ``temporal Dutch-book'' argument -- see the historical account portrayed by \citet{shafer1982,shafer1985}. In more recent times, this question appears to have been explicitly recognised first by \citet[Sec.~3, p.~316]{hacking1967}. It has later been deeply studied by a number of authors, in particular by \citet{goldstein1983} in statistics and \citet{fraassen1984} in philosophy. Their standpoint has recently been taken up also in QM by \citet{fuchs2012}. Our standpoint is related but not quite the same; we refer the reader to \citet{zaffalon2013a} for a detailed comparison of the different approaches.

In short, the question is that the evolution of beliefs in time is not part of the axioms of probability: in particular, a coherent cone of gambles expresses Alice's beliefs only at time $t_0$. So we need to impose some extra structure on the belief system if we want to formalise the dynamics of Alice's beliefs in time. Remember that the second axiom of QM explicitly refers to the state of the system \emph{after} the measurement. By doing so it forces us to consider what actually happens in the future. Therefore, if we want to be fully aligned with it in our desirability approach to QM, we need also to talk about future beliefs: and in particular, we have to require that future beliefs, after the measurement, i.e., the observation, equal updated beliefs, as established at present time $t_0$ through conditioning. This requirement is called \emph{strong temporal coherence} by \citet[Sec.~6.4]{zaffalon2013a}, since it indeed represents a strong form of Dutch-book coherence through time: it means that Alice cannot be made a partial loser even by combining assessments that she expresses at different points in time (at the same point in time, coherence was already guaranteed by the desirability axioms). 

Strong temporal coherence is assumed throughout the paper. Yet, we will treat it in a transparent way after completing the discussion of strong temporal coherence in Section~\ref{sec:tc2}.

\subsection{Interpretation of the second axiom as updating}
We have shown that the second Axiom of QM can be recovered as the updating of Alice's belief model. \textbf{It is formally and structurally equivalent to Bayes' rule -- the only actual difference
is the domain used, $\CH$ instead of $\reals^n$}. 

Under this interpretation, the ``collapse of the wave function'' is nothing else but the outcome of updating Alice's 
beliefs. Similarly to what happens in the experiment of the classical coin,
when she learns that the result is Head, her probability that the output of the experiment is Head becomes one and so there is no more uncertainty.
We have also shown that subjects (Alice A, B and C) with different initial beliefs about the possible outcomes of the experiment
can possibly agree on their updated beliefs.


\section{Third axiom of QM}
\label{sec:3ndaxiom}
In the previous sections, we have argued that a notion of coherence in time is needed to fully identify probabilistic conditioning with measurement in QM. Temporal coherence becomes even more fundamental when it comes to derive the temporal axiom of QM:

   \begin{SnugshadeF}
\begin{MethodF}[Axiom III:]
the evolution of a closed quantum system is described by a unitary
transformation. That is, the state $\rho$ of the system at time $t_0$ is related 
to the state
$\rho'$ of the system at time $t_1>t_0$ by a unitary operator $U$ which depends only 
on the
times $t_0$ and $t_1$, $\rho' = U \rho U^\dagger$.
\end{MethodF}
\end{SnugshadeF}

\subsection{Strong temporal coherence -- part II}\label{sec:tc2}
Let us consider the dynamics of Alice's beliefs between present time $t_0$ and future time $t_1$ under the assumption that she receives no information at all during such an interval of time (i.e., we have a closed quantum system). Similarly to the discussion in Section~\ref{sec:stc1}, the focus is on characterising the coherence of Alice's beliefs in this time. 

To this end, we add the following rule to the betting framework of Section~\ref{sec:foundations}:
\begin{enumerate}
\setcounter{enumi}{4}
 \item At time $t_0$, Alice declares which gambles $G_0$ she accepts. Bookie can perform the experiment at any time $t_1\geq t_0$. After having performed the experiment, Bookie determines Alice's payoff by  ``moving time-forward'' Alice's gambles ($G_0 \xhookrightarrow{\phi}G_1$),  using a map $\phi(\cdot,t_1,t_0)$ from $\CH$ to itself that satisfies the following properties: 
 \begin{itemize}
 \item[(i)] $\phi(\cdot,t_0,t_0)$ is the identity map; 
 \item[(ii)]  $\phi(\cdot,t_1,t_0)$ is onto; 
 \item[(iii)] $\phi(\gambles^+,t_1,t_0)=\gambles^+$;
\item[(iv)] $\phi(\cdot,t_1,t_0)$ is linear and constant preserving.
\end{itemize}
 \end{enumerate}
The rationale behind these conditions is the following.

Condition~(i) is obvious: when $t_1=t_0$, nothing changes with respect to what we have discussed in the previous sections. 

Condition~(ii) is a way of stating that Alice's beliefs are only established at present time $t_0$, since
any gamble $G_1$ at time $t_1$ corresponds to an element $G_0$  at time $t_0$. 

Condition~(iii)  means that if Alice is in a state of complete ignorance at time $t_0$, she remains in such a state at any later time: in other words, we are assuming that Alice does not receive any further information from time  $t_0$ to time  $t_1$.

Finally, condition~(iv) states once again that the utility scale is linear, that is, if Alice finds $G_0,H_0$ desirable at time $t_0$, then we know from Section \ref{sec:intro-clp} that she also finds $G_0+H_0$ desirable. Because of linearity of   $\phi(\cdot,t_1,t_0)$, this implies that $G_1+H_1=\phi(G_0+H_0,t_1,t_0)$, with $G_1=\phi(G_0,t_1,t_0)$ and $H_1=\phi(H_0,t_1,t_0)$, is desirable at time $t_1$.  Finally, $\phi(\cdot,t_1,t_0)$ is constant preserving.

Rule~5 leads to the following (the proof is in the Appendix, Theorem \ref{th:temp}):
   \begin{SnugshadeB}
\begin{MethodB}[Subjective formulation of the third axiom of QM]\\
(1) All the transformations $\phi(\cdot,t_1,t_0)$ defined above
are of the following form
$$
G  \xhookrightarrow{\phi} H=U^\dagger G U ,
$$
for some unitary or anti-unitary matrix $U \in \CM$, which only depends on the times $t_1,t_0$ and is equal to the identity
for $t_1=t_0$.\\
(2)  The transformation $\phi(\cdot,t_1,t_0)$  preserves coherence:\vspace{0.3cm}\\
\centerline{if $\domain_0 $ is SDG, then $\domain_1=\{G \in \CH \mid U^\dagger G U \in \domain_0\}$ is  SDG too.}
\end{MethodB}
\end{SnugshadeB}

That is, in spite of the weakness of the conditions required for $\phi$, we have obtained that:
\begin{itemize}
\item At any fixed time instant, Alice is rational. In fact, by point~(2) above, at time $t_1\geq t_0$ Alice's beliefs are modelled by an SDG.
\item The payoff of gambles $G_0$ and $G_1=\phi(G_0,t_1,t_0)$ are the same, for any gamble $G_0$.  That is, the gamble may change but the payoff is time-invariant (this is a consequence of the fact that $U$ is just a rotation matrix).

\end{itemize}
We can think of these two points as a slight weakening of the notion of strong temporal coherence given by \citet[Def.~25]{zaffalon2013a}. That notion eventually requires the cones to be equal at different time instants; here we just obtain the payoffs to be equal. The rationale does not really change in the two cases: intuitively, here we allow the two cones to be isomorphic to each other instead of just being equal. For this reason, with a small abuse of terminology, we call rule~5 above strong temporal coherence too. 

By exploiting duality, we can also reformulate the above results  in terms of credal quantum sets (see the Appendix, Proposition~\ref{prop:tempcoherencecreal}).
Indeed, let $\mdesirs_0$ be the credal quantum set associated to $\domain_0$.
Because of the relationship $\domain_1=\{G \in \CH \mid U^\dagger G U \in \domain_0\}$ 
and the duality between  $\mdesirs_0$ and $\domain_0$, it is possible to derive that:
\begin{equation}
 \domain_1=\left\{G \in \CH \mid G \gneq 0 \right\} \cup 
 \left\{G \in \CH \mid (U^\dagger G U)\cdot \rho >0, ~~\rho \in 
 \mdesirs_0\right\}.
\end{equation}

Notice that
\[
 (U^\dagger G U)\cdot \rho= Tr(U^\dagger G^\dagger U \rho)=Tr( G^\dagger U \rho U^\dagger),
\]
and moreover $Tr(U \rho U^\dagger)=Tr(\rho U^\dagger U)=Tr(\rho)=1$. Therefore
\begin{equation}
\label{eq:tempcredal}
 \mdesirs_1=\left\{U \rho U^\dagger \Big| \rho \in  \mdesirs_0 \right\}.
\end{equation}
As time forward is implemented by a unitary operator, while an anti-unitary operator implements time reversal, we have derived the third axiom of QM as a direct consequence of strong temporal coherence.

\begin{SnugshadeE}
\begin{example}[Classical coin.]
 \label{ex:coinE}
 Let us go back to our coin example.   First observe that any $2 \times 2$ unitary matrix can be written as
 \begin{equation}  
    U = \left[\begin{matrix} a & b \\ -e^{i\theta} b^\dagger & e^{i\theta} a^\dagger \\ \end{matrix}\right] , 
 \end{equation} 
 with $\theta \in \reals$, $a,b \in \complexs$ and  $ \left| a \right| ^2 + \left| b \right| ^2 = 1$.
 We have seen that in the classical case we are only interested in diagonal gambles
 $G_0=diag(g_1,g_2)$. Assume Alice finds $G_0$ desirable at time $t_0$, then
 $$
  G_1=U^\dagger G_0 U=\left[\begin{matrix} \left|a\right|^2g_1+\left|b\right|^2g_2 & a^\dagger b (g_1 -  g_2)  \\  ab^\dagger (g_1-g_2) & \left|b\right|^2g_1+\left|a\right|^2g_2  \\ \end{matrix}\right]. 
 $$
 Since we are only interested in diagonal gambles $G_1$ we must only consider transformations $U$ that transform
 diagonal matrices into diagonal matrices. This means that $a^\dagger b=ab^\dagger=0$.
 This together  with the constraint  $ \left| a \right| ^2 + \left| b \right| ^2 = 1$ implies that,  in the classical case, the only valid matrices are
 (i)~the \textbf{identity map} $U_1=I$ and (ii)~the \textbf{permutation matrix}
 $$
U_2=\begin{bmatrix} 0 &  1 \\  1 & 0\end{bmatrix}.
 $$
 There are many other valid matrices, like $U=[0,-1;-1,0]$ for instance, but their actions
 on $G$ are either the identity map  or a flipping $U_2^\dagger G_0 U_2=diag(g_2,g_1)$.
 Therefore, in the classical case, the only aspect of the experiment that is allowed to change with time is the name of the events: $\Omega=\{Head,Tail\}$ can become  $\Omega=\{Tail,Head\}$, but that is all. It is then clear that if Alice's set of strictly desirable gambles is coherent at time $t_0$ then it is also coherent at time $t_1$ (after the permutation).
 
 \centering
 \begin{minipage}{0.23\textwidth}
      \begin{tikzpicture}[scale=1.5]
    \draw[->] (-1,0) coordinate (xl) -- (1,0) coordinate (xu) node[right] {$g_1$};
    \draw[->] (0,-1) coordinate (yl) -- (0,1) coordinate (yu) node[above] {$g_2$};
    \draw (0,0) circle (1.pt);
    \begin{pgfonlayer}{background}
      \draw[dashed] (0,0) -- (-0.66,1) coordinate (a1away);
      \draw[dashed] (0,0) -- (1,-0.25) coordinate (a2away);
      \fill[blue!50,  opacity=0.5] (0,0) -- (a1away) -| (a2away) --(0,0);
    \end{pgfonlayer}
  \end{tikzpicture}
     \captionof{subfigure}{}
  \end{minipage}\quad
   \begin{minipage}{0.23\textwidth}
      \begin{tikzpicture}[scale=1.5]
    \draw[->] (-1,0) coordinate (xl) -- (1,0) coordinate (xu) node[right] {$g_2$};
    \draw[->] (0,-1) coordinate (yl) -- (0,1) coordinate (yu) node[above] {$g_1$};
    \draw (0,0) circle (1.pt);
    \begin{pgfonlayer}{background}
      \draw[dashed] (0,0) -- (1,-0.66) coordinate (a1away);
      \draw[dashed] (0,0) -- (-0.25,1) coordinate (a2away);
      \fill[blue!50,  opacity=0.5] (0,0) -- (a2away) -| (a1away) --(0,0);
    \end{pgfonlayer}
  \end{tikzpicture}
     \captionof{subfigure}{}
  \end{minipage}
  \captionof{figure}{Alices' sets of coherent strictly desirable gambles at time $t_0$  (left) and at time $t_1$ (right)
  after action $U_2$.\label{fig:coinflip}}
\end{example}
\end{SnugshadeE}
 A similar thing happens for the quantum coin. $U$ is a rotation matrix and, therefore, its only effect on Alice's SDG is to rotate the cone in a way that $\gambles^+$ is always mapped into $\gambles^+$ and that the payoffs stay the same. 
 
At this point it should be clear why there is a strong form of coherence in time in our formulation: because, by ``moving time forward'' through $\phi$, we can always think that Alice assesses her beliefs at time $t_1$ instead of $t_0$; and hence, that she is implicitly dealing always with the very same cone of desirable gambles. The same reasoning can be applied to the question of updating discussed in Section~\ref{sec:stc1}: if the measurement is done at time $t_1$, we can equivalently think of moving time forward up to $t_1$ and then use (fixed-time) conditioning to consequently update Alice's beliefs.\footnote{It is also possible to show that the same argument holds also in the case of multiple future time points, possibly with measurements in between.} 

Overall, this means that Alice is always using the same global probabilistic model, her SDG, irrespective of time. And that model cannot be made inconsistent by definition of SDG. Once again, this is the very essence of the idea of strong coherence in time. For these reasons, and to make things simpler, in the next sections we will assume that $t_1=t_0$ so as to avoid overcomplicating the treatment to explicitly account for time.

\subsection{Interpretation of the third axiom as temporal coherence}
 In this section we have derived the third axiom of QM by strong temporal coherence considerations. We have achieved this through
a simple rule that describes how a gamble at time $t_1$ corresponds to a gamble at time $t_0$. This rule implies that the future is completely determined by the past,
 that the utility scale is linear and that a state of complete ignorance is mapped into a state of
 complete ignorance. 
 Then we have shown as a consequence that the relationship between Alice's coherent set of desirable gambles  at time $t_0$  and Alice's coherent set of desirable gambles at time $t_1$ must follow a \textbf{unitary transformation} -- those that preserve coherence in time.
 
 
\section{Fourth axiom of QM}
\label{sec:4ndaxiom}
Let us turn now to consider coherence for composite systems. In particular, we focus on a composite 
system $AB$ made by two subsystems denoted as $A$ and $B$. Recall that QM gambles are Hermitian matrices whose dimension depends on the number of outcomes of the experiment; thus, if $A$ has $n$ outcomes and $B$ has $m$ outcomes, the composite systems has $nm$ outcomes and the corresponding gambles $G$ are matrices in $\mathbb{C}_h^{nm \times nm}$. We aim, as usual, at empolying  coherence-preserving operations to derive the fourth axiom of QM:

   \begin{SnugshadeF}
\begin{MethodF}[Axiom IV:]
the state space of a {separable} composite physical system is the 
tensor product of the state spaces of the component physical systems. Moreover, if we have
systems numbered $1$ through $\ell$, and system number $i$ is in the state $\rho_i$,
then the joint state of the total system is $\rho_1 \otimes \rho_2 \otimes \cdots \otimes \rho_{\ell}$.
\end{MethodF}
\end{SnugshadeF}

We will define two operations that are useful when dealing with composite systems: marginalisation and extension. Moreover, we will define independence concepts. To make things simpler, we will only consider the case $\ell=2$, but the next results hold for any $\ell$.

\subsection{Marginalisation}
\label{sec:marginalisation}
Let us denote Alice's SDG on the composite system $AB$ as $\domain^{AB}$.
Given $\domain^{AB}$, Alice may wish to focus on one component only of the composite system $AB$, meaning that she neglects one aspect of the experiment's outcomes. For instance, she can focus on $A$ -- which means ignoring $B$.
Then the interesting gambles are those whose payoffs do not depend on the $B$ component, which are those that are constant with respect to $B$.
It is easy to see that these gambles are of the form $G \otimes I_m$, with $G \in \mathbb{C}_h^{n \times n}$ and $I_m$ being the $m\times m$ identity matrix.

\begin{definition}\label{def:margin}
Let $\domain^{AB}$ be an SDG in $\mathbb{C}_h^{nm \times nm}$. The {\bf marginalisation} of $\domain^{AB}$ w.r.t.\  $A$ 
is defined as:
\begin{equation}
\label{eq:marginal}
\marg_A(\domain^{AB})=\left\{G \in \CH \mid G \otimes I_m \in \domain^{AB}\right\}.
\end{equation} 
Analogously, the marginalisation of $\domain^{AB}$ w.r.t.\  $B$ 
is defined as:
$$
\marg_B(\domain^{AB})=\left\{G \in \mathbb{C}_h^{m \times m} \mid I_n \otimes G \in \domain^{AB}\right\},
$$ 
where $I_n$ is the $n\times n$ identity matrix. 
\end{definition}

It can be proven that if $\domain^{AB}$ is an SDG in $\mathbb{C}_h^{nm \times nm}$, then both
$\marg_A(\domain^{AB})$ and $\marg_B(\domain^{AB})$ are SDGs (Appendix, first part of Proposition \ref{pro:marginal}): i.e., marginalisation \textbf{preserves coherence}. 
Moreover,
$\marg_A(\domain^{AB})$ and  $\marg_B(\domain^{AB})$ are both maximal SDGs whenever $\domain^{AB}$ is a maximal SDG (Appendix, second part of Proposition \ref{pro:marginal}).

In the case of a maximal SDG, Definition \ref{def:margin} provides a way to compute the marginal density matrices $\rho^A,\rho^B$ from  $\rho^{AB}$ (in particular from the dual $\domain^{AB}$ using~\eqref{eq:marginal}).
The standard way to obtain $\rho^{A}$ from $\rho^{AB}$ in QM is by using the \textbf{partial trace operator} \cite[Sec. 2.4.3]{nielsen2010quantum}:
$$
\rho^A=Tr_B(\rho^{AB}).
$$

In QM, $\rho^{A}$ is called  a \textit{reduced density operator}. By keeping  our analogy between
density matrices and probabilities, we simply call it a \textit{marginal density matrix}.
It is interesting that the motivations behind the use of the partial trace in QM is that it is the unique operation
that gives rise to the correct description of observable quantities for subsystems of a composite system \cite[Box 2.6]{nielsen2010quantum}.
Since our definition of marginalisation is consistent with that of partial tracing --~see in particular \citet[Box 2.6]{nielsen2010quantum}~--, the use of partial tracing has been justified from the perspective of our gambling system:  Alice wishes to focus on one of the components of the composite system $AB$, meaning that she neglects one aspect of the experiment's outcome.

From the fact that marginalisation preserves both coherence and maximality, we can also derive that
if the dual of a maximal SDG on $AB$, the density matrix $\rho^{AB}$, is
equal to the tensor product $\rho^{A}\otimes \rho^{B}$ (Appendix, Proposition \ref{prop:marginalstates}) then  the marginal SDG computed via~\eqref{eq:marginal},  $\marg_A(\domain^{AB})$, has a dual that only includes the density matrix $\rho^A$ --~and similarly
for $B$. In QM, a density matrix $\rho^{AB}$ such that $\rho^{AB}=\rho^{A}\otimes \rho^{B}$ is called \textit{separable}.
In this case too, by keeping  our analogy between density matrices and probabilities, we simply say
that the density matrix $\rho^{AB}$ factorises in the product of $\rho^{A}$ and $\rho^{B}$.
In probability, when a joint probability on $AB$ can be decomposed as the product of its marginals on $A$ and $B$, 
we talk about \textit{independence}. We will show in Section \ref{sec:independence} that we can do
the same in QM.

\begin{SnugshadeE}
 \begin{example}[Two classical coins.]
\label{ex:coinF}
Let us consider two classical coins $A$ and $B$.
A gamble on the composite system is in this case  a Hermitian matrix on $\mathbb{C}_h^{4 \times 4}$.
We know that in the classical case we are only interested on diagonal gambles: $G=diag(g)$ with $g \in \reals^4$, that is
$g=(g_1,g_2,g_3,g_4)$, where $g_1$ is the payoff of the outcome $(H,H)$ (Head on  coin $A$ and Head on coin $B$), and similarly $g_2$ for $(H,T)$, $g_3$ for $(T,H)$, $g_4$ for $(T,T)$.

Alice finds the gambles  $f=(1,-1,1,-1)$, $h=(-1,1,-1,1)$, $l=(1,1,-1,-1)$, $e=(-1,-1,1,1)$ desirable but not strictly so
(besides, as usual, the gambles $G=diag(g)\gneq 0$).
Hence her SDG $\domain^{AB}$ is:
$$
\domain^{AB}=\{G=diag(g) \in \mathbb{C}_h^{4 \times 4}  \mid g_{11}+g_{22}+g_{33}+g_{44}>0\}.
$$
Let us see what we can conclude about her beliefs on the probability of the four outcomes of the experiment.
By applying~\eqref{eq:credal1}, we derive that
$$
\mdesirs^{AB}=\{\rho \in \mathcal{D}_h^{4 \times 4} \text{ diagonal} \mid G\gneq0 \text{ or } G\cdot \rho>0~\forall G\in \domain^{AB}\}.
$$
Hence, by exploiting the fact that $Tr(\rho)=1$, we can easily derive that 
$G\cdot \rho=Tr(G^\dagger \rho)=\sum_i g_{ii} \rho_{ii}>0$ for any $g \in \domain$, which implies that
$\rho_{11}=\rho_{22}=\rho_{33}=\rho_{44}=1/4$.
Therefore, Alice believes that the four outcomes have all the same probability $1/4$.

Let us now compute the marginal $\marg_A(\domain^{AB})$.
By definition, we have that
$$
\marg_A(\domain^{AB})=\left\{G \in \mathbb{C}_h^{2 \times 2} \mid G \otimes I_2 \in \domain^{AB}\right\}.
$$ 
Since we are focusing on diagonal gambles, i.e., $G=diag(g_1,g_2)$, then $G \otimes I_m=(g_1,g_1,g_2,g_2)$.
Therefore, we have that 
$$
\marg_A(\domain^{AB})=\left\{G=diag(g) \in \mathbb{C}_h^{2 \times 2} \mid (g_1,g_1,g_2,g_2) \in \domain^{AB}\right\},
$$ 
and $(g_1,g_1,g_2,g_2) \in \domain^{AB}$ if and only if $g_1+g_1+g_2+g_2>0$ or, equivalently,
$g_1+g_2>0$. Hence:
$$
\marg_A(\domain^{AB})=\left\{G=diag(g) \in \mathbb{C}_h^{2 \times 2} \mid g_1+g_2>0\right\}.
$$ 
Its dual is clearly 
$$
\mdesirs^{A}=\{\rho \in \mathcal{D}_h^{2 \times 2} \text{ diagonal} \mid  \rho=diag(1/2,1/2)\},
$$
meaning that Alice believes the coin $A$ is fair too.
The same holds for $B$. We have derived that the marginal of the joint probability $(1/4,1/4,1/4,1/4)$ is $(1/2,1/2)$ both for $A$ and $B$.
\end{example}
\end{SnugshadeE}

\begin{SnugshadeE}
\begin{example}[Two quantum coins.]
\label{ex:spinF}
We assume that Alice's SDG on the composite system $AB$ is
$$
\domain^{AB}=\{G\in \mathbb{C}_h^{4 \times 4}  \mid G\gneq0 \}\cup\{G\in \mathbb{C}_h^{4 \times 4}  \mid G\cdot D >0\}, \text{ with }D=\frac{1}{2}\left[
 \begin{matrix}
1 & 0 &0 & 1\\  
0 & 0 &0 & 0\\ 
0 & 0 &0 & 0\\ 
1 & 0 &0 & 1\\ 
 \end{matrix}
 \right].
$$
By~\eqref{eq:credal2}, we have that $\mdesirs^{AB}=\{\rho=D\}$.
Note that $D$ is an entangled  state.  Alice's marginal SDG on the subsystem $A$ is given by
$$
\marg_A(\domain^{AB})=\left\{G \in \mathbb{C}_h^{2 \times 2} \mid G\otimes I_2 \in \domain^{AB}\right\}, \textit{ where } G \otimes I_2= \begin{bmatrix} g_{11} & 0 & g_{12} & 0 \\
 0 & g_{11} & 0 & g_{12} \\
 g_{12}^\dagger & 0 & g_{22} & 0 \\
 0 & g_{12}^\dagger & 0 & g_{22} \\
\end{bmatrix},
$$ 
where we have used the notation $G=[g_{11},g_{12};g_{12}^\dagger,g_{22}]$ for the elements of $G$.
$G \otimes I_2$ belongs to $\domain^{AB}$ provided that
$G \otimes I_2 \cdot D>0$, but
$G \otimes I_2 \cdot D=Tr(G^\dagger \otimes I_2  D)=g_{11}+g_{11}+g_{22}+g_{22}$.
Therefore, $\marg_A(\domain^{AB})=\left\{G \in \mathbb{C}_h^{2 \times 2} \mid g_{11}+g_{22}>0\right\}$ and by~\eqref{eq:credal2}:
$$
\mdesirs^A=\left\{\rho=\left[\begin{matrix}1/2 & 0\\0 & 1/2\end{matrix}\right]\right\}.
$$
So the marginal SDG on $A$ is maximal. $\rho=diag(1/2,1/2)$ is the marginal of the entangled state $D$.
The same holds for $B$.
\end{example}
\end{SnugshadeE}

\subsection{Coherent extension}
Let us assume that Alice assesses her beliefs separately about the systems $A$ and $B$ and study how she can coherently extend her beliefs to $AB$. This extension operation can be regarded as the opposite of marginalisation
discussed in Section \ref{sec:marginalisation}.

\begin{definition}\label{def:naturalextension}
 Let $\domain^A$ and $\domain^B$ be two SDGs in  $\CH$ and, respectively, $\mathbb{C}_h^{m \times m}$. Their
  \textbf{natural extension}  to $\mathbb{C}_h^{nm \times nm}$ is the smallest, i.e., the least-committal, SDG on $AB$, denoted with $\ext_{AB}(\domain^A,\domain^B)$, such that:
 \begin{equation}
  \marg_A( \ext_{AB}(\domain^A,\domain^B)) =\domain^A \text{ and } \marg_B(\ext_{AB}(\domain^A,\domain^B))=\domain^B.
 \end{equation} 
 \end{definition}
 The natural extension $\ext_{AB}(\domain^A, \domain^B)$ always exists (Appendix, Proposition \ref{pro:NE}).

In case  $\domain^A$ and $\domain^B$  are maximal SDGs with dual density matrices
$\rho^A$ and $\rho^B$, the natural extension reduces to find all density matrices $\rho^{AB}$ such that $\rho^A=Tr_B(\rho^{AB})$ and $\rho^B=Tr_A(\rho^{AB})$.
It is clear that $\rho^{AB}$ may be not unique in general, as shown in the next example. 
 
 \begin{SnugshadeE}
 \begin{example}[Two quantum coins.]
\label{ex:spinG}
Let us assume that $\domain^{A}=\domain^{B}=\left\{G \in \mathbb{C}_h^{2 \times 2} \mid g_{11}+g_{22}>0\right\}$,
which are the ones we have found in Example \ref{ex:spinG}. The set $\ext_{AB}(\domain^A, \domain^B)$ is the smallest SDG $\domain^{AB}$ such that $ \marg_A(\domain^{AB})=\domain^{A}$ and $ \marg_B(\domain^{AB})=\domain^{B}$.
In other words, we have to search for  the smallest SDG $\domain^{AB}$ such that:
 \begin{equation}
  G \otimes I_n\in \domain^{AB} \text{ and } I_n \otimes H \in \domain^{AB},
 \end{equation} 
 for all $G \in \domain^A$ and $H \in \domain^B$.
Note that
\begin{equation}
\label{eq:goog}
G \otimes I_2= \left[\begin{array}{cccc}
 g_{11} & 0 & g_{12} & 0 \\
 0 & g_{11} & 0 & g_{12} \\
 g_{12}^\dagger & 0 & g_{22} & 0 \\
 0 & g_{12}^\dagger & 0 & g_{22} \\
\end{array}\right], ~~I_2 \otimes H= \left[\begin{array}{cccc}
 h_{11} &  h_{12} & 0 & 0 \\
  h_{12}^\dagger  & h_{22} & 0 & 0 \\
 0 & 0& h_{11} &  h_{12}  \\
   0 & 0 & h_{12}^\dagger  & h_{22} \\
\end{array}\right],
\end{equation} 
with $g_{11}+g_{22}>0$, $h_{11}+h_{22}>0$. 
By duality (that $G \otimes I_2 \in \domain^{AB}$ provided that $G \otimes I_2 \cdot \rho^{AB}$ for any $\rho^{AB} \in \mdesirs^{AB}$)   it is clear that the valid density matrices $\rho^{AB}$ must satisfy:
$$
G \otimes I_2 \cdot \rho^{AB}>0 \text{ and }  I_2 \otimes H \cdot \rho^{AB}>0.
$$
For instance, all the following matrices satisfy the above constraints:
$$
\begin{array}{l}
\rho_1^{AB}=\frac{1}{2}\left[
 \begin{matrix}
1 & 0 &0 & 0\\  
0 & 0 &0 & 0\\ 
0 & 0 &0 & 0\\ 
0 & 0 &0 & 1\\ 
 \end{matrix}
 \right], ~~\rho_2^{AB}=\frac{1}{2}\left[
 \begin{matrix}
1 & 0 &0 & 1\\  
0 & 0 &0 & 0\\ 
0 & 0 &0 & 0\\ 
1 & 0 &0 & 1\\ 
 \end{matrix}
 \right],\\
 \rho_3^{AB}=\frac{1}{2}\left[
 \begin{matrix}
1 & 0 &0 & \iota\\  
0 & 0 &0 & 0\\ 
0 & 0 &0 & 0\\ 
-\iota & 0 &0 & 1\\ 
 \end{matrix}
 \right],~~\rho_4^{AB}=\frac{1}{2}\left[
 \begin{matrix}
1 & 0 &0 & \tfrac{1}{\sqrt{2}}(1-\iota)\\  
0 & 0 &0 & 0\\ 
0 & 0 &0 & 0\\ 
\tfrac{1}{\sqrt{2}}(1+\iota) & 0 &0 & 1\\ 
 \end{matrix}
 \right],\end{array}
 $$
and they are not the only ones. 
\end{example}
\end{SnugshadeE}
The natural extension has some resemblance with the operation of \textit{purification} in QM.
Given a density matrix $\rho^A$ of a quantum system, the goal is to find another subsystem $B$
and a pure state $\rho^{AB}$ such that $\rho^A=Tr_B(\rho^{AB})$ \cite[Sec.~2.5]{nielsen2010quantum}.
However, in the case of the natural extension $\rho^A$ and $\rho^B$ are given and the goal
is to find  all $\rho^{AB}$ compatible with these marginals. Therefore, the natural extension is just the inverse of marginalisation.

\subsection{Irrelevance and independence}
\label{sec:independence}
Let $\domain^{AB}$ be Alice's SDG for the composite system $AB$. Let us assume that Alice receives some information only related to the subsystem $B$. 
We address the case in which information concerning the subsystem $B$ does not change Alice's beliefs about system $A$ and/or vice versa.
To this end, we define the concepts of \textbf{irrelevance and independence}.

First of all, we consider the following result from linear algebra.
Let us consider a projection measurement on $B$, $\Pi^B=\{\Pi_i^B\}_{i=1}^m$,
then  $I_n \otimes \Pi_i^B$ is a projection measurement for the composite system $AB$. 
The proof of this result is immediate.  First observe that
 $ (I_n \otimes \Pi_i^B)^\dagger=I_n \otimes \Pi_i^B=(I_n \otimes \Pi_i^B)^2$ (idempotent), and
$ (I_n \otimes \Pi_i^B)(I_n \otimes \Pi_j^B)=I_n \otimes (\Pi_i^B \Pi_j^B)=0$ (orthogonal),
and finally  $ \sum_i  I_n \otimes \Pi_i^B=I_n \otimes I_m =I_{nm}$ (identity). 
However, it should be observed that $ I_n \otimes \Pi_i^B$ has not rank $1$.

We now define the concept of epistemic irrelevance.
\begin{definition}\label{def:epirrelev}
An SDG $\domain^{AB}$ is said to satisfy {\bf epistemic irrelevance}
of $A$ to $B$ when $\marg_A(\domain^{AB})=\marg_A(\domain^{AB}_B)$ where
$$
\domain_B^{AB}=\left\{H \in \mathbb{C}_h^{nm \times nm} \mid H \gneq0 \textit{ or } (I_n  \otimes \Pi^B_i) H  (I_n \otimes \Pi^B_i) \in 
\domain^{AB}\right\},
$$
 for each projector $\Pi^B_i$ and each projection measurement $\Pi^B$. 
 \end{definition}
 
Let us briefly explain this definition. $\domain_B^{AB}$ is the SDG conditional on the event indicated 
by $I_n  \otimes \Pi^B_i$, as it follows from its definition and~\eqref{eq:condition}. Thus,  $\marg_A(\domain^{AB})=\marg_A(\domain^{AB}_B)$  means that
Alice's marginal SDG $\marg_A(\domain^{AB})$ on the subsystem $A$ and the marginal on $A$ of Alice's
SDG updated with the information ``the event indicated by $\Pi^B_i$ has happened'', which is $\marg_A(\domain^{AB}_B)$, coincide.
If this holds for all possible $\Pi^B_i$ then it means that any information on $B$ does not change Alice's beliefs on $A$:
$\domain^{AB}$ is said to satisfy  epistemic irrelevance of $A$ to $B$.
By exchanging the roles of $A$ and $B$, we define epistemic irrelevance of $B$ to $A$. From this we can get independence as a symmetrisation of irrelevance:

\begin{definition}\label{def:epindep}
 An SDG $\domain^{AB}$ is said to satisfy {\bf epistemic independence} of $A$ and $B$, when it satisfies epistemic 
irrelevance both of $A$ to $B$ and of $B$ to $A$.
 \end{definition}

If the density matrix of the composite system  $\rho^{AB}$ is the tensor product of the 
density matrices $\rho^{A}$ and $\rho^{B}$, then $\domain^{AB}$ satisfies epistemic independence of $A$ and $B$ (Appendix, Proposition \ref{prop:independence1}).
This result is similar to what we have in probability: if a joint probability $p_{AB}$  factorises
as $p_{AB}=p_Ap_B$, we say that the subsystems (variables) $A$ and $B$ are independent.
Actually also the inverse implication holds (Appendix, proposition \ref{prop:independence2}).
From these facts, we can finally derive the fourth axiom of QM:
    \begin{SnugshadeB}
\begin{MethodB}[Subjective formulation of the fourth axiom of QM]\\
Given a maximal SDG $\domain^{AB}$ that satisfies epistemic independence of $A$ and $B$, then
its dual density matrix $\rho^{AB}$ is equal to 
$$
\rho^{AB}=\rho^{A}\otimes \rho^{B},
$$
where $\rho^{A}$ is the  dual density matrix of $\marg_A(\domain^{AB})$ and
$\rho^{B}$ is the  dual density matrix of $\marg_B(\domain^{AB})$ and vice versa.
In other words, the last axiom of QM follows by \textbf{epistemic independence}
of $A$ and $B$. 
\end{MethodB}
\end{SnugshadeB}

\subsection{Separable natural extension}
The aim of this section is to show that Bell's inequality violation can easily be framed in our theory of desirable gambles.

Before doing that, let us characterise a possible feature of the natural extension as defined in Section~\ref{sec:4ndaxiom}: 

\begin{definition}
 Let $\domain^A$ and $\domain^B$ be two SDGs in  $\CH$ and, respectively, $\mathbb{C}_h^{m \times m}$. We say that their natural extension 
 $\ext_{AB}(K^A, K^B)$ is   \textbf{separable}  if its dual only includes
 separable density matrices, i.e., 
 \[
 \mdesirs^{AB}=\left\{ \sum_i w_i \rho_i^A \otimes \rho_i^B \Big| \sum_i w_i = 1, w_i \geq 0, \rho_i^A  \in \mathcal{D}_h^{n \times n}, \rho_i^B\in   \mathcal{D}_h^{m \times m}    \right\},
 \]
 where the sum can   be over a countable set of indices.
 \end{definition}

Now, assume that  $\domain^A$ and $\domain^B$ are maximal SDGs, so that they uniquely define
 $\rho^A$ and  $\rho^B$. Assume  that their natural extension is separable. Then the following inequalities hold (Appendix, Theorem \ref{th:frechet})
 for each $ \rho^{AB} \in  \mdesirs^{AB}$:
 \begin{equation}
   \rho^{AB} \leq  \rho^{A} \otimes I_m \;\;\text{ and }\;\; \rho^{AB} \leq  I_n \otimes \rho^{B},
 \end{equation} 
 and
 \begin{equation}
   \rho^{AB} \geq  \rho^A \otimes I_m + I_n \otimes \rho^B-I_{nm} \;\;\text{ and }\;\; \rho^{AB} \gneq 0.
   \end{equation}

 The upper bound is know in QM as ``reduction criterion'' for density matrices; it was
first proven by \citet{horodecki1999reduction} and independently formulated by \citet{cerf1999reduction}. 
Structurally the above inequalities are analogs of the Fr\'echet bounds in probability theory.
   \begin{SnugshadeF}
\begin{MethodF}[Fr\'echet bounds:] they are bounds for the probabilities of the conjunction of two events given the probabilities of the individual events.
If $A,B$ are  events, the Fr\'echet inequalities are
$$
\max(0,P(A) + P(B)-1) \leq P(A \cap B) \leq  \min(P(A),P(B)).
$$
If $P(A) = 0.7$ and $P(B)  = 0.6$, then the probability of the conjunction, i.e., $A \cap B$, is in the interval
$$
\begin{array}{rcl}
P(A \cap B) &\in& [\max(0,P(A) + P(B)-1), \min(P(A),P(B))]~\\
&=& [\max(0, 0.7 + 0.6-1), \min(0.7, 0.6)]= [0.3, 0.6].\end{array}
$$
\end{MethodF}
\end{SnugshadeF}

   \begin{SnugshadeB}
\begin{MethodB}[Entangled states violate the Fr\'echet bounds for density matrices.]

 Consider for instance the entangled density matrix
 $$
\rho^{AB}=\frac{1}{2}\left[
 \begin{matrix}
1 & 0 &0 & 1\\  
0 & 0 &0 & 0\\ 
0 & 0 &0 & 0\\ 
1 & 0 &0 & 1\\ 
 \end{matrix}
 \right],
 $$
which has marginal  $\rho^A=\rho^B=diag(1/2,1/2)$ as shown in Example \ref{ex:spinF}.  
Entangled states are not separable and it can easily be verified that 
 $$
 \rho^A \otimes I_m-\rho^{AB} \ngeqslant0 ~~\text{ and } ~~ I_n \otimes \rho^B -\rho^{AB} \ngeqslant0,
 $$ 
 since the  resulting matrices have one negative eigenvalue.
\end{MethodB}
\end{SnugshadeB}
In other words, from the viewpoint of our desirability foundations of QM, the motivation behind Bell's inequality violation is that entangled states exhibit a form of ``stochastic'' dependence stronger than the strongest classical dependence: and in fact they violate Fr\'echet bounds.
Explaining  Bell's inequality violation through credal sets (imprecise probability) was originally suggested by \citet{Hartmann2010}. Here, we have shown that this is in fact the case by working directly in $\CH$.

\subsection{Interpretation of the fourth axiom in terms of independence}
 In this section we have derived the fourth axiom of QM and shown that it preserves coherence.
 Moreover, by exploiting the semantics provided by the theory of desirable gambles,
 we have been able to extend to QM probabilistic concepts such as:
 \begin{itemize}
  \item  marginalisation;
  \item  natural extension;
  \item epistemic irrelevance and independence.
 \end{itemize}
 By duality, we have shown that marginalisation  corresponds  to
 partial tracing for density matrices.
For maximal SDGs, the epistemic independence of $A$ and $B$  implies that  $\rho^{AB}=\rho^{A} \otimes \rho^{B}$ -- as it happens in probability theory, independence means that the joint  $\rho^{AB}$ factorises as the product
 of its marginals $\rho^{A}, \rho^{B}$.
 Moreover, by defining  the concept of separable natural extension, we have also shown that in QM it is possible
 to derive bounds for the joint $\rho^{AB}$ that are similar to Fr\'echet bounds in probability theory
  and that  entangled states violate these bounds.
 This is all further evidence that QM is just a generalised theory of probability.
 
 
\section{Similarity and difference with QBism}\label{sec:qbysm}
The foundation of quantum mechanics through desirability presented in this paper shares a goal similar to that of QBism \mbox{-- }a classical-Bayesian interpretation of quantum mechanics\mbox{ --,} which is to justify quantum mechanics by using coherence arguments \cite{Caves02,Fuchs01,Schack01,Fuchs02,Fuchs03,Schack04,Fuchs04,Caves07,Appleby05a,Appleby05b,Timpson08,longPaper, 
Fuchs&SchackII,mermin2014physics}.

To this end, QBism aims at rewriting quantum mechanics in a way that 
does not involve the density operator $\rho$  (i.e., that is not defined on $\mathbb{C}_h^{n \times n}$)
and use classical probabilistic Bayesian arguments to prove coherence of quantum mechanics.
For instance, \citet{Fuchs&SchackII} have shown that it is possible 
to rewrite Born's rule in a way that does not include the density operator $\rho$ anymore, so as to give it a classical-Bayesian interpretation.
For this purpose, they consider as measure a \textit{symmetric informationally complete} (SIC) POVM  \cite{renes2003symmetric,appleby2007symmetric}, 
which allows Born's rule to be rewritten as:
\begin{equation} \label{eq:bornruleBayes0}
\begin{array}{rcl}
   p_j = \sum\limits_{i=1}^{n^2} \left((n+1)q_i - \dfrac{1}{n}\right) r(j|i) .\\
\end{array} 
\end{equation}
Here, $q_i$ represent the probability  of the outcomes of an experiment (a SIC measure with $n^2$ outcomes), which
is only imagined and never performed. The terms $r(j|i)$ are instead the probabilities of seeing the outcome $j$ in the 
actually performed experiment, given  the  outcome  $i$ in the imagined one. Finally, $ p_j$ is the probability of the 
outcomes in the actual experiment.  Fuchs and Shack  call~\eqref{eq:bornruleBayes0} the ``quantum law of total probability'', because they interpret it  as a modification  of the standard law of total probability:
\begin{equation} \label{eq:lawprob}
   p_j=\sum_{i=1}^{n^2} r(j|i) q_i,
\end{equation}
which is well known to follow from coherence, i.e., Dutch-book arguments  \cite{savage1954,finetti1970}.
Thus, according to Fuchs and Shack, the rule~\eqref{eq:bornruleBayes0} can be regarded as an addition to the standard Dutch-book coherence, a sort of ``coherence plus'', which defines the valid quantum 
states: 
\begin{quote}``It expresses a kind of empirically extended coherence -- not implied by Dutch-book coherence alone, but formally similar to the kind of relation one gets from Dutch-book coherence'' \cite{Fuchs&SchackII}.
\end{quote} 
Under this view, Equation~\eqref{eq:bornruleBayes0} can be interpreted as a consistency requirement, i.e., a law that 
must hold to produce a consistent quantum world. 
 Based on this idea, they suggest that~\eqref{eq:bornruleBayes0} should be taken \textit{as one of the basic axioms of quantum 
theory,}
since it provides a particularly clear way of thinking of Born's rule as an addition to Dutch-book coherence.
 This axiom formulates as ``everything in~\eqref{eq:bornruleBayes0} that should be a probability -- i.e., $p_j$, $q_i$, and 
$r(j|i)\,$) -- actually is a probability''. Using this axiom, they
 show that it is possible to derive some properties of quantum states directly in terms of probabilities.
 
While we definitely find this interpretation interesting, we regard it as portraying a partial view: in fact, the law~\eqref{eq:bornruleBayes0} is not total probability, and hence, strictly speaking, is incoherent according to de Finetti's (classical) definition of coherence. 

In contradistinction to QBism,  we do not aim at demonstrating that the probabilities \emph{derived} from QM are Bayesian, but rather that \emph{the Bayesian theory itself is QM} once it is generalised to work in $\mathbb{C}_h^{n\times n}$. This is achieved by bringing to light a duality relation between sets of gambles defined on the complex  space of hermitian matrices $\mathbb{C}_h^{n\times n}$ and density matrices, in the same way as there is a duality relation between sets of real-valued gambles and probabilities in the classical case. This also establishes in a definite sense that QM is strongly self-consistent. 

{Finally notice that, while in QBism Alice's beliefs are modelled by a single density matrix,  in our approach Alice can represent her beliefs by a closed convex set of density matrices: the more Alice is ignorant about the experiment the bigger is her credal quantum set (dually, the smaller is her cone of desirable gambles) and vice versa.
For instance, the proposed framework allows us  to represent cases in which Alice does not have any knowledge about the state of the quantum system. 
In these cases Alice should only accept  gambles given by PSDNZ matrices. The dual of her set of desirable gambles is therefore the set of all the
possible density matrices -- she does not know $\rho$ and, thus, all the $\rho$'s are possible candidates to be the actual one for her.}


\section{Conclusions and Future Work}\label{sec:conc}
In this paper we have derived the four postulates of QM from a mathematical framework made of rational gambling on a quantum experiment. Technically speaking, such a framework is the dual of the Bayesian theory extended to the space of Hermitian matrices. To say it with a slogan, we have obtained that QM is the Bayesian theory in the complex numbers. Our results mean, in other words, that QM is a theory of probability -- not just that probabilities can be derived from QM.

This result has implications that we find worth considering. On the one hand, it means that QM is a self-consistent theory, in a very strong sense: there is no way to draw incoherent conclusions by using QM. On the other, it means that the common operations employed with quantum mechanics can be regarded as rules of probability; and hence, that we can take advantage in a novel way of well-known probabilistic viewpoints to handle QM. We have given a number of examples to show that this is in fact the case, where, for instance, entanglement turns out to be a form of stochastic dependence, and Bell's inequalities become Fr\'echet bounds. It also gives us a new tool, which is coherence, to orient ourselves in the quantum world: we can in fact directly use the gambling framework to understand whether or not some approaches or proposals are consistent with QM. For example,
it can be shown that dispersion free probability measures, used as counter-examples to Gleason's theorem in dimension two, are incoherent (see \citet{BenavoliTODO}).

We should also at this point better clarify one peculiarity of the Bayesian theory we are interpreting QM with. The coherent cones of gambles that represent Alice's beliefs enable us to deal with probability (that is, QM) in a very flexible way: in fact, they allow us to smoothly deal with one or more than one density matrix at the same time. We have used this feature, for example, every time we had to express the case of Alice being ignorant about a quantum experiment. And it is just thanks to using sets of density matrices that some aspects of the theory of quantum mechanics -- like the violation of Bell's inequalities -- have been easily explained. They have also allowed us to solve the ``singularity'' corresponding to conditioning on an event of null probability and to define the natural extension of all entangled states compatible with some marginals.

Sets of Bayesian probabilities are traditionally part of so-called \emph{Bayesian robustness}, and they are a very active field of research nowadays under the generic name of \emph{imprecise probability}. Therefore, to be fully clear, we should say that QM in our present formulation is a theory of Bayesian robustness (see, e.g., \citet{berger1984,ber94}), one that deals with a set of ``probabilities'' and that becomes truly Bayesian in the limit of a single matrix in the set. 

However, the opportunities of using the flexible, set-based, formulation seem to be potentially many and worth exploring. One of these could be checking whether there is room in QM for cones of desirable gambles that are not strictly desirable and that for this reason are far more expressive than those we have used in this paper.

\appendix

\section{Propositions and proofs}\label{sec:app}

 \subsection{Results in Section \ref{sec:intro-clp}}

  \begin{Proposition}
  \label{pro:CGC>0}
  Let $G \in \CH$. It holds that 
  \[G\geq 0 \text{ iff }
  CGC^\dagger \geq0
 \text{ for any matrix } C \in \CM.\] 
 Moreover, if $G\gneq 0$ and $\Pi=\{\Pi_1,\dots,\Pi_n\}$ is a projection measurement, then there 
is $j\in\{1,\dots,n\}$
 such that $ \Pi_jG\Pi_j \gneq0$.
 \end{Proposition}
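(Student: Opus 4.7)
The iff statement is essentially immediate and I would dispose of it first. For ($\Leftarrow$), taking $C=I$ in the hypothesis gives $G=IGI^\dagger\geq 0$ directly. For ($\Rightarrow$), I would use the quadratic-form characterisation of PSD: for any $v\in\mathbb{C}^n$,
$$v^\dagger (CGC^\dagger) v = (C^\dagger v)^\dagger G (C^\dagger v) \geq 0,$$
since $G\geq 0$. As $v$ is arbitrary, $CGC^\dagger\geq 0$.

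The second part is where the content lies. Since each $\Pi_j\in\CM$, the first part already yields $\Pi_j G\Pi_j\geq 0$ for every $j$, so the only question is whether all of these can simultaneously be the zero matrix. My plan is to argue by contradiction: assume $\Pi_j G\Pi_j=0$ for every $j$ and derive $G=0$.

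The key algebraic step is to pass from $\Pi_j G\Pi_j=0$ to $G\Pi_j=0$. For this, I would invoke the spectral theorem to take a PSD square root $G^{1/2}$ of $G$, and then use $\Pi_j^\dagger=\Pi_j$ to rewrite
$$\Pi_j G\Pi_j = (G^{1/2}\Pi_j)^\dagger(G^{1/2}\Pi_j).$$
Taking the trace turns the hypothesis into $\|G^{1/2}\Pi_j\|_F^2=0$, forcing $G^{1/2}\Pi_j=0$ and hence $G\Pi_j=G^{1/2}(G^{1/2}\Pi_j)=0$. This is the only mildly non-trivial step; the rest is direct.

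Once this is in hand, the completeness relation $\sum_{j=1}^n\Pi_j=I$ for a projection measurement closes the argument:
$$G = G\sum_{j=1}^n \Pi_j = \sum_{j=1}^n G\Pi_j = 0,$$
contradicting $G\neq 0$. Therefore there exists at least one $j$ with $\Pi_j G\Pi_j\neq 0$, and combined with $\Pi_j G\Pi_j\geq 0$ this yields $\Pi_j G\Pi_j\gneq 0$, as required.
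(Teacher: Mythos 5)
Your proof is correct and follows essentially the same route as the paper's: both factorise $G$ as a product ($G^{1/2}G^{1/2}$ in your case, a Cholesky factor $LL^\dagger$ in the paper's) and then use the completeness relation $\sum_j \Pi_j = I$ to rule out all the $\Pi_j G \Pi_j$ vanishing simultaneously. The only cosmetic differences are that you prove the forward implication of the iff via the quadratic form $v^\dagger (CGC^\dagger) v = (C^\dagger v)^\dagger G (C^\dagger v)$ rather than via the Cholesky factorisation, and that your write-up of the second part spells out the details that the paper compresses into a one-line sketch.
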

 \begin{proof}
   For the direct implication, it is enough to consider $C=I$. For the converse we reason as follows. By hypothesis $G\geq 0$, and therefore we can write $G=LL^\dagger$ using Cholesky factorisation.
 Then observe that
 $$
  CGC^\dagger=CL(CL)^\dagger,
 $$
 which is clearly PSD. 
 
 To prove the second part, we use the previous result to get
 $$
 \Pi_iL(\Pi_iL)^\dagger\geq0
 $$
 and observe that it must be  $j$ such that $ \Pi_jG\Pi_j \gneq0$, since
 $\Pi$ is a basis.
 \end{proof}

 \subsection{Results in Section \ref{sec:1staxiom}}
 
\begin{Proposition}
\label{pro:rho}
The set $\domain^\bullet$ is a closed convex cone.
\end{Proposition}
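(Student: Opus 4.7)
The plan is to verify the three defining properties separately: cone, convex, closed. All three follow from the linearity of the inner product $G \cdot R = \mathrm{Tr}(G^\dagger R)$ together with standard facts about the positive semidefinite cone.

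First I would check the cone property. Pick $R \in \domain^\bullet$ and $\nu > 0$. Then $\nu R \geq 0$ since scaling by a positive number preserves PSD, and for every $G \in \domain$ we have $G \cdot (\nu R) = \nu\, (G \cdot R) \geq 0$ by linearity of the trace inner product. Hence $\nu R \in \domain^\bullet$.

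Next I would verify convexity. Take $R_1, R_2 \in \domain^\bullet$ and $w \in [0,1]$. Since the PSD cone is convex, $w R_1 + (1-w) R_2 \geq 0$, and for each $G \in \domain$,
$$
G \cdot (w R_1 + (1-w) R_2) = w\,(G \cdot R_1) + (1-w)(G \cdot R_2) \geq 0,
$$
so $w R_1 + (1-w) R_2 \in \domain^\bullet$. Combined with the previous point this already gives convex cone.

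Finally I would argue closedness. For each fixed $G \in \CH$, the map $R \mapsto G \cdot R = \mathrm{Tr}(G^\dagger R)$ is a continuous linear functional on $\CH$ (with its standard Euclidean topology), so the half-space $H_G = \{R \in \CH \mid G \cdot R \geq 0\}$ is closed. The set of PSD matrices $\{R \in \CH \mid R \geq 0\}$ is also well known to be closed (it is the intersection over all unit vectors $u$ of the closed half-spaces $\{R \mid u^\dagger R u \geq 0\}$). Therefore
$$
\domain^\bullet = \{R \in \CH \mid R \geq 0\} \cap \bigcap_{G \in \domain} H_G
$$
is an intersection of closed sets, hence closed. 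No step here is really an obstacle; the only mild subtlety is recognising that we may restrict attention to the topology of the finite-dimensional real vector space $\CH$, so that continuity of the inner product and closedness of the PSD cone are immediate.
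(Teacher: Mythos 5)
Your proof is correct and follows essentially the same route as the paper's: closure under positive scaling and sums (equivalently, convex combinations) via linearity of the trace inner product, and closedness via continuity of $R \mapsto \mathrm{Tr}(G^\dagger R)$, which you merely spell out more explicitly as an intersection of closed half-spaces. No substantive difference.
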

\begin{proof}
 $\domain^\bullet$ is a cone since if $R_1,R_2 \in \domain^\bullet$, i.e., $G\cdot R_1\geq 0$ and $G\cdot R_2\geq 
0$ for all $G \in \domain$, then $R_1+R_2 \in \domain^\bullet$
 because 
 $$
 G \cdot (R_1+R_2)= G\cdot R_1+ G\cdot R_2 \geq 0 ~ \forall G \in \domain,
 $$
 and $\lambda R_1 \in \domain$ for all $\lambda > 0$, since $\lambda G \cdot R_1 \geq 0$. 
 Closedness of $\domain^\bullet$ follows by the continuity of trace. 
\end{proof}
 
 \begin{Proposition}\label{prop:credal1equiv}
 Let $K$ be an SDG. The sets 
 $$
\mdesirs_1=\{ \rho \in \CD \mid G\cdot \rho \geq  0~ \forall G \in \domain\}
 $$
 and
 $$
\mdesirs_2=\{ \rho \in \CD \mid   G\gneq0 \text{ or } G\cdot \rho > 0~ \forall G \in \domain\}
$$
coincide.
\end{Proposition}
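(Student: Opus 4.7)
The plan is to establish both inclusions $\mdesirs_2 \subseteq \mdesirs_1$ and $\mdesirs_1 \subseteq \mdesirs_2$, using the defining properties (S1)–(S3) of an SDG in Definition \ref{def:sdg}, together with the elementary fact that $\operatorname{Tr}(AB) \geq 0$ whenever $A, B$ are PSD, and $\operatorname{Tr}(AB) > 0$ whenever $A$ is PD and $B$ is PSDNZ.

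For $\mdesirs_2 \subseteq \mdesirs_1$ the argument is immediate: take $\rho \in \mdesirs_2$ and $G \in \domain$. If $G \gneq 0$, then since $\rho \geq 0$ we have $G \cdot \rho = \operatorname{Tr}(G^\dagger \rho) \geq 0$ (this is the trace–pairing of two PSD matrices, which one can verify by writing $G = LL^\dagger$ via Cholesky and computing $\operatorname{Tr}(L^\dagger \rho L) \geq 0$, exactly as in Proposition \ref{pro:CGC>0}); if instead $G \cdot \rho > 0$, then \emph{a fortiori} $G \cdot \rho \geq 0$. Hence $\rho \in \mdesirs_1$.

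For the reverse inclusion $\mdesirs_1 \subseteq \mdesirs_2$, suppose $\rho \in \mdesirs_1$ and take any $G \in \domain$ which is not $\gneq 0$; the goal is to upgrade the inequality $G \cdot \rho \geq 0$ to a strict one. Here is where the openness property (S3) enters: there exists $\Delta \in \CH$ with $\Delta > 0$ (PD) such that $G - \Delta \in \domain$. Then $(G-\Delta) \cdot \rho \geq 0$, which rearranges to
\begin{equation*}
G \cdot \rho \;\geq\; \Delta \cdot \rho \;=\; \operatorname{Tr}(\Delta \rho).
\end{equation*}
Since $\Delta$ is PD and $\rho$ is a density matrix (hence PSD with $\operatorname{Tr}(\rho)=1$, so in particular $\rho \gneq 0$), we have $\operatorname{Tr}(\Delta \rho) > 0$: indeed, diagonalising $\rho = \sum_i \lambda_i u_i u_i^\dagger$ with $\lambda_i \geq 0$ and at least one $\lambda_i > 0$, each term $\lambda_i\, u_i^\dagger \Delta u_i > 0$ because $\Delta$ is PD. Therefore $G \cdot \rho > 0$, and $\rho \in \mdesirs_2$.

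I do not expect any real obstacle: the whole argument hinges on the openness axiom (S3), which is precisely what converts the weak inequality defining $\mdesirs_1$ into the strict inequality demanded by $\mdesirs_2$ on gambles that are not already positive. The only point that deserves care is the observation that the ``slack'' $\Delta$ in (S3) is positive definite (not merely PSDNZ), so that pairing with the density matrix $\rho$ yields a strictly positive trace.
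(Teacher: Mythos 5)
Your proof is correct and follows essentially the same route as the paper's: the easy inclusion uses positivity of the trace pairing of PSD matrices, and the reverse inclusion invokes the openness axiom (S3) to produce a positive-definite slack $\Delta$ with $G-\Delta\in\domain$, whence $G\cdot\rho\geq\Delta\cdot\rho>0$. The only (immaterial) difference is that you argue directly while the paper phrases the same step as a contradiction from $G\cdot\rho=0$; your explicit justification that $\operatorname{Tr}(\Delta\rho)>0$ for PD $\Delta$ and a density matrix $\rho$ is a welcome detail the paper glosses over.
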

\begin{proof}
First, notice that $G\cdot \rho \geq  0$ for every $G\gneq0$ and every $\rho \in \CD$. This means that $\mdesirs_2 \subseteq \mdesirs_1$.
For the other direction, we reason as follows. Let $\rho \in \mdesirs_1$. We have to verify that $G\cdot \rho >  0$  for every $G \in \domain \setminus \gambles^+$. Consider such a $G \in \domain \setminus \gambles^+$ and assume that $G\cdot \rho = 0$. Since $\domain$ satisfies (S3), i.e. openness, there is $\Delta > 0$  such that $G-\Delta \in \domain$. In particular $\Delta$ can be chosen such that $\Delta\cdot \rho > 0$, whence $(G-\Delta)\cdot \rho = - \Delta\cdot \rho < 0$. \end{proof}

\noindent {\bf Theorem \ref{thm:dualityopen}} [Representation theorem]
\textit{ It holds that
\begin{itemize}
\item $(\domain^\bullet)^\circ=\domain$ for every SDG $\domain$, and 
\item $(\mdesirs^\circ)^\bullet=\mdesirs$ for every credal quantum set $\mdesirs$. 
\end{itemize}}

\begin{proof}
For the first item, we reason as follows.
By construction, $(\domain^\bullet)^\circ \supseteq \domain$. 
It remains to show that $(\domain^\bullet)^\circ -K=\emptyset$. Assume that there is $H \in 
(\domain^\bullet)^\circ -\domain$;
since $\domain$ is a convex set we can use the Hahn-Banach theorem to separate $\{H\}$ from $\domain$.
If $H$ is not in the closure of $\domain$, then by the  Hahn-Banach theorem there is a  Hermitian operator $\sigma$ such that 
$Tr(H \cdot \sigma) < 0$ and $Tr(G \cdot \sigma) \geq 0$ for all $G\in \domain$.
So there is $\sigma\in \domain^\bullet$ such that  $Tr(H \cdot \sigma) < 0$ and so
$H\notin (\domain^\bullet)^\circ$, a contradiction.
Now assume that $H\notin \domain$ is in the border of the closure of $\domain$, then since $\domain$ is a non-pointed 
 cone satisfying the openness property (S3)
that includes $G\gneq0$, it means that $H\gneq0$ is impossible. Then we can consider the convex open set $O \supset \domain$ that excludes $H$ and apply the Hahn-Banach theorem so that there is a  Hermitian operator $\sigma$ such that $Tr(H 
\cdot \sigma) \leq 0$ and $Tr(G \cdot \sigma) > 0$ for all $G\in O$. Again this implies that $H\notin 
(\domain^\bullet)^\circ$, a contradiction.

To prove the second point ($(\mdesirs^\circ)^\bullet=\mdesirs$), we exploit the fact that $\mdesirs$ is a closed convex cone
and we use Hahn-Banach similarly to the previous point.
\end{proof}

\begin{Proposition}
\label{prop:linearprev} Let $\domain$ be an SDG.
If $\overline{P}(G)=\underline{P}(G)$ for any $G \in  \domain$, then  
its dual credal set $  \mdesirs$ includes a single density matrix.
\end{Proposition}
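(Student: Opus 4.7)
The plan is to show that if the lower and upper previsions agree on $\domain$, then they must agree on all of $\CH$, and then to exhibit a specific gamble whose upper prevision strictly exceeds its lower prevision whenever $\mdesirs$ contains two distinct density matrices.

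First I would extend the assumption from $\domain$ to the whole space $\CH$. Given any $H \in \CH$, pick a real constant $c$ large enough that $H + cI \gneq 0$ (for instance, $c > -\lambda_{\min}(H)$). Then $H + cI \in \gambles^+ \subseteq \domain$ by (S2). Using Proposition \ref{prop:previsions-and-cqs} together with $Tr(\rho) = 1$ for every $\rho \in \mdesirs$, one computes
\begin{equation*}
\underline{P}(H + cI) = \underline{P}(H) + c, \qquad \overline{P}(H + cI) = \overline{P}(H) + c.
\end{equation*}
The hypothesis $\underline{P}(H + cI) = \overline{P}(H + cI)$ therefore forces $\underline{P}(H) = \overline{P}(H)$ for every $H \in \CH$.

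Next I would argue by contradiction: suppose $\mdesirs$ contains two distinct density matrices $\rho_1 \neq \rho_2$, and set $H := \rho_1 - \rho_2$. This $H$ is Hermitian and nonzero, so $H \in \CH$. By Proposition \ref{prop:previsions-and-cqs},
\begin{equation*}
\overline{P}(H) - \underline{P}(H) \;\geq\; Tr(H^\dagger \rho_1) - Tr(H^\dagger \rho_2) \;=\; Tr\bigl((\rho_1 - \rho_2)^2\bigr).
\end{equation*}
Since $\rho_1 - \rho_2$ is a nonzero Hermitian matrix, its square is PSD with trace equal to the sum of the squared eigenvalues, which is strictly positive. Hence $\overline{P}(H) > \underline{P}(H)$, contradicting the extended equality established in the previous step. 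Therefore $\mdesirs$ can contain at most one density matrix; since $\mdesirs$ is nonempty (by Theorem \ref{thm:dualityopen}), it consists of a single density matrix.

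I do not expect any serious obstacle: the only subtle point is the extension from $\domain$ to $\CH$, which hinges on the affine behaviour of previsions with respect to constants together with (S2). Once that is in place the key step is the choice of witness gamble $H = \rho_1 - \rho_2$, which directly turns the distinctness of two density matrices into a strict gap between $\underline{P}$ and $\overline{P}$.
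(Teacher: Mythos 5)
Your proof is correct, but it takes a genuinely different and rather cleaner route than the paper's. The paper argues entry by entry: it assumes $\rho^a\neq\rho^b$ differ in some matrix entry and then hand-constructs explicit PSDNZ witness gambles already lying in $\gambles^+\subseteq\domain$ (the diagonal unit $\operatorname{diag}(0,\dots,1,\dots,0)$ for a diagonal discrepancy, and $I+S$, $I+T$ with symmetric real and imaginary off-diagonal perturbations for an off-diagonal one), checking in each case that the two expectations differ. You instead avoid the case analysis entirely: you first transport the hypothesis from $\domain$ to all of $\CH$ by the shift $H\mapsto H+cI\in\gambles^+$ together with the affine behaviour $\underline{P}(H+cI)=\underline{P}(H)+c$ (which follows from Proposition \ref{prop:previsions-and-cqs} and $Tr(\rho)=1$), and then use the single coordinate-free witness $H=\rho_1-\rho_2$, for which the prevision gap is bounded below by $Tr\bigl((\rho_1-\rho_2)^2\bigr)$, the squared Frobenius norm, manifestly positive. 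What your approach buys is brevity and independence of the matrix representation; it also makes transparent that the separating gamble is essentially self-dual. What the paper's approach buys is that its witnesses are PSDNZ from the outset, so it never needs the extension step and leans only on (S2) plus the envelope formula on $\gambles^+$. Your closing remark on nonemptiness of $\mdesirs$ (via the representation theorem) is a point the paper's proof silently omits, and it is needed to conclude ``exactly one'' rather than ``at most one''.
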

\begin{proof}
Let us assume that this is not true and $  \mdesirs$ includes $\rho^a,\rho^b$ 
and $\rho^a, \rho^b$ differ for one element.
If this element is  the $i$-th element of the diagonal, then we can take $G=diag(0,\dots,0,1,0,\dots,0)$, where 
$1$ is in the $i$-th row.
Since $G\gneq0$, it is included in $\domain$ and moreover
$$
Tr(G^\dagger \rho^a)=\rho_{ii}^a \neq \rho_{ii}^b=Tr(G^\dagger \rho^b).
$$
Assume now that the element is in the $i$-th row and in the $j$-th column, that is
$\rho_{ij}^a \neq \rho_{ij}^b$ and because they are Hermitian matrices this also means that
$\rho_{ji}^a \neq \rho_{ji}^b$.
Then take the following gamble $G$:
$$
G= I + S, ~~~H= I + T,
$$
where $S$ is a matrix of zeros with $S_{ij}=S_{ji}=1$ and $T$ is a matrix of zeros with $T_{ij}^\dagger=T_{ji}=\iota$. 
The gambles $G,H$  are PSDNZ. Indeed, for instance in the case of $G$, for any $0 \neq u\in \mathbb{C}^n$ we have that $u^\dagger G u=u^\dagger u 
+u_iu_j^\dagger+u_ju_i^\dagger=(u_i+u_j)(u_i+u_j)^\dagger+\sum_{k\neq i,j} u_{kk}u_{kk}^\dagger\gneq 0$.   By (S2), both gambles $G$ and $H$ are 
in $\domain$.
Moreover
$$
\begin{array}{l}
Tr(G^\dagger \rho^a)=\sum_i \rho_{ii}^a+\rho_{ij}^a+\rho_{ji}^a, ~~~Tr(G^\dagger \rho^b)=\sum_i 
\rho_{ii}^a+\rho_{ij}^b+\rho_{ji}^b,\\
Tr(H^\dagger \rho^a)=\sum_i \rho_{ii}^a+\iota\rho_{ij}^a-\iota\rho_{ji}^a, ~~~Tr(H^\dagger \rho^b)=\sum_i 
\rho_{ii}^a+\iota\rho_{ij}^b-\iota\rho_{ji}^b.\\
\end{array}
$$
Write $\rho_{ij}^a=a+\iota b$ and $\rho_{ij}^b=c+\iota d$, we have that 
$$
\begin{array}{l}
 \rho_{ij}^a+\rho_{ji}^a=a+\iota b+a-\iota b=a, ~~~\rho_{ij}^b+\rho_{ji}^b=c+\iota d+c-\iota d=c,\\
  \iota\rho_{ij}^a-\iota\rho_{ji}^a=\iota a + b-\iota a+ b=b, ~~~ \iota\rho_{ij}^b- \iota\rho_{ji}^b=d,
\end{array}
$$
and one of them must be different. 
\end{proof}

\begin{Proposition}
 \label{pro:eigenvalue}
 Let $\mdesirs=\CD$,
 then we have that
   $$
  \underline{P}(G)= \inf_{\rho \in \CD } Tr(G^\dagger \rho)=\lambda_{min}(G), ~~  \overline{P}(G)= \sup_{\rho \in \CD } Tr(G^\dagger \rho)=\lambda_{max}(G),
  $$
  where $\lambda_{min}(G)$ (resp. $\lambda_{max}(G)$) is the minimum (resp. maximum) eigenvalue of $G$.
\end{Proposition}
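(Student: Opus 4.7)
The proof plan relies on the spectral decomposition of Hermitian matrices and a change of variables by conjugation with a unitary operator. The intuition is that the trace $Tr(G^\dagger \rho)$ is nothing other than a weighted average of the eigenvalues of $G$, where the weights are determined by $\rho$ expressed in the eigenbasis of $G$, and hence the extremal values are exactly the extremal eigenvalues, attained when $\rho$ is a pure state aligned with the corresponding eigenvector.

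First I would fix $G \in \CH$ and use the fact (recalled in Section~\ref{sec:intro-clp}) that $G$ admits a spectral decomposition $G = U\Lambda U^\dagger$, where $U \in \CM$ is unitary and $\Lambda = \mathrm{diag}(\lambda_1,\dots,\lambda_n)$ collects the real eigenvalues of $G$. Since $G^\dagger = G$, for any $\rho \in \CD$ we can rewrite
\begin{equation*}
Tr(G^\dagger \rho) = Tr(U\Lambda U^\dagger \rho) = Tr(\Lambda\, U^\dagger \rho\, U) = Tr(\Lambda\, \tilde\rho),
\end{equation*}
where $\tilde\rho := U^\dagger \rho\, U$. Because conjugation by a unitary preserves positive semi-definiteness and the trace, $\tilde\rho$ ranges over $\CD$ exactly as $\rho$ does; in particular $\tilde\rho \geq 0$ and $\sum_{i=1}^n \tilde\rho_{ii}=1$.

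Next I would compute $Tr(\Lambda\, \tilde\rho) = \sum_{i=1}^n \lambda_i\, \tilde\rho_{ii}$, which shows that $Tr(G^\dagger \rho)$ is a convex combination of the eigenvalues of $G$, with weights $(\tilde\rho_{11},\dots,\tilde\rho_{nn})$ lying in the probability simplex. Hence
\begin{equation*}
\lambda_{\min}(G) \leq Tr(G^\dagger \rho) \leq \lambda_{\max}(G) \qquad \forall \rho \in \CD,
\end{equation*}
which provides the two inequalities $\underline P(G) \geq \lambda_{\min}(G)$ and $\overline P(G) \leq \lambda_{\max}(G)$ via~\eqref{eq:lowupprev}.

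Finally I would show both bounds are attained (so the infimum and supremum are in fact minima and maxima). Let $u_{\min}$ be a unit eigenvector of $G$ associated with $\lambda_{\min}(G)$, and consider the pure state $\rho_{\min} = u_{\min} u_{\min}^\dagger \in \CD$. Then $G u_{\min} = \lambda_{\min}(G)\, u_{\min}$ gives
\begin{equation*}
Tr(G^\dagger \rho_{\min}) = Tr(G\, u_{\min} u_{\min}^\dagger) = u_{\min}^\dagger G\, u_{\min} = \lambda_{\min}(G),
\end{equation*}
and the analogous choice $\rho_{\max} = u_{\max} u_{\max}^\dagger$ yields $Tr(G^\dagger \rho_{\max}) = \lambda_{\max}(G)$. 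Combining the two inequalities with these explicit realisers gives the claimed equalities. There is no real obstacle here; the only point to be careful about is the reduction from $\rho$ to $\tilde\rho$ and the observation that the map $\rho \mapsto U^\dagger \rho\, U$ is a bijection of $\CD$ onto itself, which is immediate from unitarity.
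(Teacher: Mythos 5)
Your proof is correct and complete, but it takes a different technical route from the one in the paper. You diagonalise $G=U\Lambda U^{\dagger}$ and push the unitary onto the density matrix, observing that $\rho\mapsto U^{\dagger}\rho U$ is a bijection of $\CD$ onto itself, so that the objective collapses to $\sum_i\lambda_i\tilde{\rho}_{ii}$, a linear functional over the probability simplex formed by the diagonal of $\tilde{\rho}$; the bounds and their attainment by the pure states $u_{\min}u_{\min}^{\dagger}$, $u_{\max}u_{\max}^{\dagger}$ then follow in one step. The paper instead diagonalises $\rho=\sum_i\lambda_i u_iu_i^{\dagger}$, expands the eigenvectors of $G$ in the eigenbasis of $\rho$, and arrives at $Tr(G^{\dagger}\rho)=\sum_i\lambda_i\hat{\delta}_i$ with $\hat{\delta}_i=\sum_j\delta_j\,c_i^{(j)}(c_i^{(j)})^{\dagger}$, which it then minimises in two stages: first over the weights $\lambda_i$ and then over the (doubly stochastic) coefficients $c_i^{(j)}(c_i^{(j)})^{\dagger}$. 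The underlying idea is the same in both cases --- the expectation is a convex combination of the eigenvalues of $G$ --- but your change of variables avoids the double expansion and the two-stage optimisation, makes the extremisers explicit, and is arguably the cleaner argument; the paper's version has the minor advantage of exhibiting the computation entirely in the eigenbasis of $\rho$, which it reuses elsewhere. No gap either way.
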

\begin{proof}
 Let us assume that
 $$ 
\rho=\sum_{i=1}^k \lambda_i u_iu^\dagger_i.
 $$
 Let  $G=\sum_{j=1}^n \delta_j v_jv_j^\dagger$,
now if we decompose $v_j=\sum_{i=1}^n c_i^{(j)} u_i$ (w.r.t.\ the basis $\{u_i\}$), we can rewrite $G$ as
\begin{equation}
\label{eq:changebasis0}\begin{array}{l}
G=\sum\limits_{j=1}^n \delta_j v_jv_j^\dagger=\sum\limits_{j=1}^n \delta_j (\sum\limits_{i=1}^n c_i^{(j)} u_i)(\sum\limits_{k=1}^n c_k^{(j)} u_k)^\dagger\\
=\sum\limits_{j=1}^n \delta_j \left(\sum\limits_{i=1}^n c_i^{(j)}(c_i^{(j)})^\dagger u_iu_i^\dagger+\sum\limits_{i\neq  k, i,k=1}^n c_i^{(j)}(c_k^{(j)})^\dagger u_iu_k^\dagger\right).\\
\end{array}
\end{equation}
We exploit the fact that
$$
Tr(G^\dagger\rho)=\sum_{i=1}^k \lambda_i Tr(G^\dagger u_iu^\dagger_i)=\sum_{i=1}^k \lambda_i Tr( u_iu^\dagger_i G^\dagger u_iu^\dagger_i).
$$
Hence, observe that
$$
\begin{array}{l}
Gu_lu_l^\dagger=\sum\limits_{j=1}^n \delta_j \left( c_l^{(j)}(c_l^{(j)})^\dagger u_lu_l^\dagger +\sum\limits_{l \neq i=1}^n c_i^{(j)}(c_l^{(j)})^\dagger u_iu_l^\dagger\right)\\
= \left(\sum\limits_{j=1}^n \delta_j c_l^{(j)}(c_l^{(j)})^\dagger\right) u_lu_l^\dagger +\sum\limits_{l \neq i=1}^n \left(\sum\limits_{j=1}^n \delta_j c_i^{(j)}(c_l^{(j)})^\dagger \right)u_iu_l^\dagger
\end{array}
$$
and so 
\begin{equation}
\label{eq:changebasis}
\begin{array}{l}
u_lu_l^\dagger Gu_lu_l^\dagger=\left(\sum\limits_{j=1}^n\delta_j  c_l^{(j)}(c_l^{(j)})^\dagger\right) u_lu_l^\dagger=\hat{\delta}_l u_lu_l^\dagger,
\end{array}
\end{equation}
 where $\hat{\delta}_l=\sum\limits_{j=1}^n \delta_jc_l^{(j)}(c_l^{(j)})^\dagger $.
 Therefore 
 $$
 \sum_{i=1}^k \lambda_i Tr( u_iu^\dagger_i G^\dagger u_iu^\dagger_i)=\sum_{i=1}^k \lambda_i \hat{\delta}_i.
 $$ 
 Now since $\sum_{i=1}^k \lambda_i =1$, the minimum is obtained
 by choosing $\lambda_k=1$ where $k=\arg\min_i \hat{\delta}_i$, whence
 $$
  \sum_{i=1}^k \lambda_i Tr( u_iu^\dagger_i G^\dagger u_iu^\dagger_i)=\sum\limits_{j=1}^n \delta_jc_k^{(j)}(c_k^{(j)})^\dagger.
 $$
 Since $c_k^{(j)}(c_k^{(j)})^\dagger\leq 1$ (this follows from the fact that $1=v_j^\dagger v_j=\sum_{i=1}^n c_i^{(j)}(c_i^{(j)})^\dagger$ for each $j$)
 and $\sum\limits_{j=1}^n c_k^{(j)}(c_k^{(j)})^\dagger\leq 1$, the minimum value is obtained
 by putting all the mass on the index $j$ corresponding to $\arg\min_i \delta_i$
 and zero otherwise.
\end{proof}

 \subsection{Results in Section \ref{sec:2ndaxiom}}
\begin{Proposition}
\label{prop:conditioning} Let $\domain$ be an SDG. Then the set of desirable gambles conditional on  $\Pi_i$ \[\domain_{\Pi_i}=\left\{G \in \CH \mid  G \gneq0 \textit{ or }\Pi_i G \Pi_i \in \domain \right\}\]
is also an SDG.
\end{Proposition}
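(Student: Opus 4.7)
The plan is to verify the three defining conditions (S1)--(S3) of an SDG for $\domain_{\Pi_i}$, using the corresponding properties of $\domain$ together with the elementary fact that $\Pi_i$ is a rank-one projector, so that $\Pi_i G \Pi_i = \gamma \Pi_i$ for some $\gamma\in\reals$ whenever $G\in\CH$.

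I would start with the easy pieces. Property (S2) is immediate from the definition: if $G\gneq 0$ then $G\in\domain_{\Pi_i}$ by the first clause. Non-pointedness of $\domain_{\Pi_i}$ follows because $0$ is not PSDNZ and $\Pi_i 0 \Pi_i = 0\notin\domain$ (since $\domain$ is non-pointed). Positive homogeneity is a short case split: for $G\in\domain_{\Pi_i}$ and $\nu>0$, either $G\gneq 0$ and so $\nu G\gneq 0$, or $\Pi_i G\Pi_i\in\domain$ and so $\Pi_i(\nu G)\Pi_i = \nu\,\Pi_i G\Pi_i\in\domain$ by positive homogeneity of $\domain$.

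Additivity requires a slightly more careful case analysis. Take $G_1,G_2\in\domain_{\Pi_i}$. If both are PSDNZ, their sum is PSDNZ and lands in $\domain_{\Pi_i}$ via the first clause. If both satisfy $\Pi_i G_k\Pi_i\in\domain$, then additivity of $\domain$ gives $\Pi_i(G_1+G_2)\Pi_i = \Pi_i G_1\Pi_i + \Pi_i G_2\Pi_i\in\domain$. In the mixed case where, say, $G_1\gneq 0$ and $\Pi_i G_2\Pi_i\in\domain$, note that $\Pi_i G_1\Pi_i\geq 0$; either it is PSDNZ (and hence in $\domain$ by (S2)), in which case additivity of $\domain$ yields the result, or it is zero, in which case $\Pi_i(G_1+G_2)\Pi_i = \Pi_i G_2\Pi_i\in\domain$ directly.

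The main obstacle is openness (S3), but it becomes transparent once the rank-one structure is exploited. Suppose $G\in\domain_{\Pi_i}$ and $G$ is not PSDNZ; then $\Pi_i G\Pi_i\in\domain$. Writing $\Pi_i G\Pi_i = \gamma\Pi_i$ with $\gamma\in\reals$, we note that $\gamma\Pi_i$ has rank one, so it is PSDNZ iff $\gamma>0$; since $\domain$ is an SDG, the remark after Definition~\ref{def:sdg} excludes $\gamma<0$ (avoiding partial loss) and non-pointedness excludes $\gamma=0$, forcing $\gamma>0$. Now pick any $0<\epsilon<\gamma$ and set $\Delta = \epsilon I\in\CH$, which is PD, hence $\Delta>0$. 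Then
\[
\Pi_i(G-\Delta)\Pi_i \;=\; \gamma\Pi_i - \epsilon\Pi_i \;=\; (\gamma-\epsilon)\Pi_i \;\gneq\; 0,
\]
so $\Pi_i(G-\Delta)\Pi_i\in\domain$ by (S2), and therefore $G-\Delta\in\domain_{\Pi_i}$. This establishes (S3) and completes the verification that $\domain_{\Pi_i}$ is an SDG.
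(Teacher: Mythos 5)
Your proof is correct, and on the crucial openness step it takes a genuinely different --- and more elementary --- route than the paper. The paper verifies (S3) by splitting on whether $\Pi_i G\Pi_i\gneq 0$ or not: in the first case it subtracts $\Pi_j$ with $j\neq i$ (note, though, that $\Pi_j$ is only PSDNZ, not positive definite as (S3) literally requires, whereas your $\Delta=\epsilon I$ is genuinely PD), and in the second case it invokes the duality with $\mdesirs$ via the representation theorem to exhibit a suitable $\alpha\Pi_i$. Your observation that $\Pi_i G\Pi_i=\gamma\Pi_i$ with $\gamma>0$ forced by avoiding partial loss and non-pointedness shows that the paper's second case is in fact vacuous, which is what permits your one-line construction of $\Delta$. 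The price of this shortcut is that it leans on $\Pi_i$ having rank one: the paper's duality-based case is the part of the argument whose structure survives when the conditioning operator is $G\mapsto\sum_{j\in J}\Pi_j G\Pi_j$ for a non-elementary event (Corollary~\ref{cor:nonelementary}), where the image is no longer a scalar multiple of a rank-one projector; with your proof that corollary would need a separate argument rather than being an ``immediate consequence''. On the other hand, your treatment of additivity is slightly more complete than the paper's, which only records the case $\Pi_i G_1\Pi_i,\Pi_i G_2\Pi_i\in\domain$ and leaves implicit the mixed case in which one summand is PSDNZ but $\Pi_i G_1\Pi_i$ may vanish.
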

\begin{proof}
 By definition $\gambles^+ \subseteq \domain_{\Pi_i}$, thus $\domain_{\Pi_i}$ satisfies (S2).
The fact that $\domain_{\Pi_i}$ is a non-pointed convex cone (S1) can be shown as follows. Firstly 
$$
  \Pi_i G_1 \Pi_i,  \Pi_i G_2 \Pi_i\in \domain  ~~\Rightarrow~~  \Pi_i G_1 \Pi_i+ \Pi_i G_2 \Pi_i=  \Pi_i (G_1 +G_2) 
\Pi_i \in \domain
$$
and so $G_1 +G_2 \in \domain_{\Pi_i}$. Similarly
$$
  \Pi_i G \Pi_i \in \domain ~~\Rightarrow~~ \lambda \Pi_i G \Pi_i= \Pi_i (\lambda G) \Pi_i \in \domain.
$$
Finally $  \Pi_i 0 \Pi_i=0 \notin \domain$, meaning that $\domain_{\Pi_i}$ is non-pointed.

We finally verify that $\domain_{\Pi_i}$ satisfies openness (S3). Let $G \in \domain_{\Pi_i} \setminus \gambles^+$. It holds that $\Pi_i G \Pi_i \in \domain $.
We have two cases to consider, depending on whether $\Pi_i G \Pi_i \gneq 0$ or not. Assume first $\Pi_i G \Pi_i \gneq 0$, and consider $\Pi_j$, $j \neq i$. By definition $\Pi_j \gneq 0$. Moreover $\Pi_i \Pi_j \Pi_i =0$. Then $\Pi_i G \Pi_i - \Pi_i \Pi_j \Pi_i = \Pi_i (G -  \Pi_j) \Pi_i \gneq 0$. This means that $\Pi_i (G -  \Pi_j) \Pi_i \in \domain$ and therefore $(G -  \Pi_j) \in \domain_{\Pi_i}$, with $\Pi_j \gneq 0$. 
We now check the case that $\Pi_i G \Pi_i \notin \gambles^+$. Since $\domain = \mdesirs^\circ$, it holds that, for $\alpha > 0$,  
\begin{center}
\begin{tabular}{ l l l l }
  $\Pi_i (G  - \alpha \Pi_i)\Pi_i  \in \domain$ & $\Leftrightarrow$ & $\Pi_i G \Pi - \alpha \Pi_i \in \domain$ &  \\
    & $\Leftrightarrow$ & $Tr( (\Pi_i G^\dagger \Pi_i) \rho) - \alpha Tr(\Pi_i \Pi_i \Pi_i \rho) > 0$ & for every $\rho \in \mdesirs$ \\
     & $\Leftrightarrow$ & $Tr( (\Pi_i G^\dagger \Pi_i) \rho) - \alpha Tr(\Pi_i \rho \Pi_i) > 0$ & for every $\rho \in \mdesirs$. \\
\end{tabular}
\end{center}

Consider $\Delta > 0$. This implies that $\Delta= \sum_{j} \alpha_j \Pi_j$, with $\alpha_j > 0$. Assume  $\Pi_i G \Pi_i - \Delta  \in \domain$. We have that $\Pi_i G \Pi_i -\sum_{j} \alpha_j \Pi_j  \in \domain$ and then $Tr( (\Pi_i G^\dagger \Pi_i) \rho) - \sum_{j}\alpha_j Tr(\Pi_i \Pi_j \Pi_i \rho) > 0$ for every $\rho \in \mdesirs$. But this implies that $Tr( (\Pi_i G^\dagger \Pi_i) \rho) - \alpha_i Tr(\Pi_i \rho \Pi_i) > 0$ for every $\rho \in \mdesirs$. By the representation theorem, we conclude that for $\alpha_i\Pi_i \gneq 0$,  $\Pi_i (G  - \alpha \Pi_i)\Pi_i  \in \domain$ and thus $G  - \alpha \Pi_i \in \domain_{\Pi_i}$.
\end{proof}

The next results is an immediate consequence of  Proposition \ref{prop:conditioning}. Remember that conditioning for non-elementary events is defined as
$$
\domain_{\Pi_J}=\left\{G \in \CH \Bigg| G \gneq 0 \textit{ or }\sum_{j\in J} \Pi_j G  \Pi_j\in \domain 
\right\},
$$
for some subset $J$ of the indexes  $\{1,\dots,n\}$. 

\begin{Corollary}\label{cor:nonelementary}
 $\domain_{\Pi_J}$ is SDG.
\end{Corollary}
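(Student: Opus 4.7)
The plan is to mimic the proof of Proposition \ref{prop:conditioning} verbatim, with the single-projector operation $G\mapsto \Pi_iG\Pi_i$ replaced throughout by the pinching map
\[
T_J(G)\;:=\;\sum_{j\in J}\Pi_j G \Pi_j.
\]
First I would record the algebraic properties of $T_J$ that drive the argument: it is linear on $\CH$; it sends PSD matrices to PSD matrices (each summand $\Pi_j G \Pi_j$ does); it is idempotent because $\Pi_j\Pi_k=\delta_{jk}\Pi_j$; it satisfies $T_J(0)=0$; and for $k\notin J$ we have $T_J(\Pi_k)=0$. These are exactly the features of $G\mapsto \Pi_iG\Pi_i$ that were used in Proposition \ref{prop:conditioning}.

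Next I would verify the three axioms (S1)--(S3) of Definition \ref{def:sdg} for $\domain_{\Pi_J}$. Axiom (S2) holds by construction. Non-pointedness is immediate since $0\notin \gambles^+$ and $T_J(0)=0\notin \domain$. Positive homogeneity follows from the linearity of $T_J$ together with positive homogeneity of $\domain$. For additivity, given $G_1,G_2\in\domain_{\Pi_J}$, one splits into cases according to whether each $G_i$ is in $\gambles^+$ or has $T_J(G_i)\in\domain$; using linearity of $T_J$, (S2), and additivity of $\domain$, one obtains $T_J(G_1+G_2)=T_J(G_1)+T_J(G_2)\in\domain$, with the marginal subcases (where some $T_J(G_i)$ vanishes or is merely PSD) handled exactly as in the proof of Proposition \ref{prop:conditioning}.

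Finally, openness (S3) is obtained by the same strategy as in Proposition \ref{prop:conditioning}. Given $G\in\domain_{\Pi_J}\setminus\gambles^+$, we have $T_J(G)\in\domain$; when $J\neq\{1,\dots,n\}$ one picks any $k\notin J$ and sets $\Delta=\epsilon\Pi_k$, so that $T_J(\Delta)=0$ and hence $T_J(G-\Delta)=T_J(G)\in\domain$, which gives $G-\Delta\in\domain_{\Pi_J}$ for every $\epsilon>0$ (the same slight abuse regarding PD versus PSDNZ perturbations as in Proposition \ref{prop:conditioning} is made here). When $J=\{1,\dots,n\}$, the ``kernel trick'' degenerates, and one instead invokes openness of $\domain$ applied to $T_J(G)\in\domain$: either $T_J(G)\gneq 0$ and one uses (S2) together with Theorem \ref{thm:dualityopen} to subtract a small PD multiple of $I$, or $T_J(G)\notin\gambles^+$ and openness of $\domain$ directly furnishes a $\Delta'>0$ with $T_J(G)-\Delta'\in\domain$; idempotency $T_J\circ T_J=T_J$ then lets this perturbation be lifted back to a $\Delta\in\CH$ with $T_J(\Delta)=\Delta'$. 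This last case is the only subtle step; the rest is a straightforward transcription of Proposition \ref{prop:conditioning} and is what justifies calling the corollary an \emph{immediate} consequence.
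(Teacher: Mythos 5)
Your plan---transcribing the proof of Proposition \ref{prop:conditioning} with $\Pi_i(\cdot)\Pi_i$ replaced by the pinching $T_J$---is the right one (the paper offers no proof beyond calling the corollary immediate), and your treatment of (S1), (S2), non-pointedness, and of openness when $J\neq\{1,\dots,n\}$ is correct to the same standard as the paper (i.e.\ accepting a PSDNZ rather than PD $\Delta$, as the paper itself does when it subtracts $\Pi_j$). The gap is in the case $J=\{1,\dots,n\}$, and it stems from a structural difference you gloss over: for a rank-one projector, $\Pi_iG\Pi_i=\gamma_i\Pi_i$ is one-dimensional, so it belongs to $\domain$ iff $\gamma_i>0$ iff it is PSDNZ, whereas $T_J(G)=\sum_{j\in J}\gamma_j\Pi_j$ lives in a $|J|$-dimensional cone and can be PSDNZ yet singular, or in $\domain$ without being PSD. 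Both of your sub-cases break on this. (a) If $T_J(G)\gneq0$ but some $\gamma_{j_1}=0$, subtracting $\epsilon I$ fails: take $n=2$, $\domain=\gambles^+$, $\Pi_j=e_je_j^T$ and $G=\left[\begin{smallmatrix}1&10\\10&0\end{smallmatrix}\right]$, so that $T_J(G)=\mathrm{diag}(1,0)\gneq 0$ while $G\notin\gambles^+$; then $T_J(G-\epsilon I)=\mathrm{diag}(1-\epsilon,-\epsilon)$ is never in $\gambles^+=\domain$. The correct move is the analogue of the paper's first case: subtract $\epsilon\Pi_{j_0}$ for some $j_0\in J$ with $\gamma_{j_0}>0$ and $\epsilon<\gamma_{j_0}$. (b) If $T_J(G)\in\domain\setminus\gambles^+$, idempotency does not let you lift $\Delta'$ back: the range of $T_J$ is $\mathrm{span}\{\Pi_j\}_{j\in J}$, so a generic PD $\Delta'$ has no preimage under $T_J$ whatsoever. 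What works is the paper's own Case-2 duality argument: $T_J(G)-\Delta'\in\domain$ cannot be PSDNZ (else $T_J(G)$ would be), so by Theorem \ref{thm:dualityopen} one has $Tr(T_J(G)^\dagger\rho)>Tr(\Delta'\rho)\geq\lambda_{\min}(\Delta')=:\epsilon>0$ for all $\rho\in\mdesirs$; hence $T_J(G)-\epsilon I\in\domain$, and since $T_J(G-\epsilon I)=\bigl(T_J(G)-\epsilon I\bigr)+\epsilon\bigl(I-\sum_{j\in J}\Pi_j\bigr)$ with the second summand PSD, additivity gives $G-\epsilon I\in\domain_{\Pi_J}$ with a genuinely PD witness.

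Note also that example (a) shows the corollary is actually \emph{false} under the literal reading of (S3) (positive definite $\Delta$): for any PD $\Delta$ one has $Tr(\Pi_{j_1}\Delta)>0$, so $T_J(G-\Delta)$ acquires a strictly negative eigenvalue and, for $\domain=\gambles^+$, the gamble $G-\Delta$ leaves $\domain_{\Pi_J}$. The statement only holds under the weaker reading $\Delta\gneq0$ that Proposition \ref{prop:conditioning} already uses implicitly; this should be stated explicitly rather than relegated to a parenthesis about a ``slight abuse''.
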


 \subsection{Results in Section \ref{sec:3ndaxiom}}
 For the sake of notation, by $h$ we denote the map $\phi(\cdot,t_1,t_0)$ introduced in Section \ref{sec:3ndaxiom}.
 \begin{definition}\label{def:tct}
A temporal coherent transformation is any map from $\gambles$ onto itself such that
  \begin{description}
    \item[C1] $h(cI)=cI$ for every $c \in \mathbb{N}$,
  \item[C2] $h(\gambles^+)= \gambles^+$,
  \item[C3] $h(\alpha G_1 + \beta G_2)=\alpha h(G_1) + \beta h(G_2)$ for every $G_1, G_2 \in \CH$ and every $\alpha, \beta > 0$.
  \end{description}
\end{definition}
  \begin{Theorem}
  \label{th:temp}
  Let $h$ be a temporal coherent transformation.
  Then there is a unitary or anti-unitary $U \in \CM$ such that \[h(G)= U^\dagger G U \] for every $G \in \CH$, which is unique up to a phase. Moreover
   if $\domain $ is an SDG, then $\domain'=\{G \in \CH \mid U^\dagger G U \in \domain\}$ is an SDG too.
 \end{Theorem}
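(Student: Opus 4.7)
The plan is to upgrade the positivity-linearity (C3) to full $\mathbb{R}$-linearity, recognise $h$ as a unital order automorphism of $\CH$, and then invoke the classical characterisation of such maps. First, C3 applied with $G_{1}=G_{2}=0$ and $\alpha=\beta=1$ gives $h(0)=2h(0)$, hence $h(0)=0$; C3 applied to $G$ and $-G$ with unit coefficients then yields $h(G)+h(-G)=h(0)=0$, i.e.\ $h(-G)=-h(G)$, which combined with positive homogeneity extends to $h(\alpha G)=\alpha h(G)$ for all real $\alpha$. So $h:\CH\to\CH$ is $\mathbb{R}$-linear, and from C1 plus linearity we also have $h(cI)=cI$ for every $c\in\mathbb{R}$. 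The surjectivity assumption (ii) from rule~5 together with linearity in a finite-dimensional space makes $h$ a linear bijection.

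Next, since $h(\gambles^{+})=\gambles^{+}$ and $h$ is bijective, $h^{-1}$ also sends $\gambles^{+}$ to itself, so $G\le H\Leftrightarrow h(G)\le h(H)$. Together with $h(I)=I$ this says $h$ is an $\mathbb{R}$-linear, unital order automorphism of $\CH$; using $\|G\|_{\mathrm{op}}=\inf\{c>0:-cI\le G\le cI\}$ it is in particular an isometry in operator norm. The main step is then to invoke the classical characterisation (Kadison/Wigner in its density-matrix form): every $\mathbb{R}$-linear unital order automorphism of $\CH$ has the form $G\mapsto U^{\dagger}GU$ with $U$ either unitary or anti-unitary, unique up to a global phase (since $U$ and $e^{i\theta}U$ induce the same conjugation). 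For a more self-contained argument, one can proceed in three small steps: (a) atoms of the cone $\gambles^{+}$, namely positive rank-one operators, are sent by $h$ to atoms, so $h$ permutes rank-one projectors up to a positive scalar, which is pinned to $1$ by applying $h$ and $h^{-1}$ to the inequality $P\le I$; (b) for rank-one projectors, orthogonality is equivalent to $P+Q\le I$ (evaluate at the ket of $P$), a property preserved by the order-isomorphism $h$, so orthogonal resolutions of $I$ go to orthogonal resolutions of $I$; (c) Uhlhorn's strengthening of Wigner's theorem then produces a unitary or anti-unitary $U$ with $h(|v\rangle\langle v|)=U^{\dagger}|v\rangle\langle v|U$ for every unit $v$, and $\mathbb{R}$-linearity propagates the formula to $h(G)=U^{\dagger}GU$ on all of $\CH$. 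The hard part is exactly this classification; the dimension-$2$ case of step (c) requires a small extra argument but is controlled once orthogonality is preserved.

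For the coherence-preservation claim, set $\domain_{1}:=\{G\in\CH:U^{\dagger}GU\in\domain_{0}\}$ and denote the map $T(G):=U^{\dagger}GU$. Since $T$ is an $\mathbb{R}$-linear bijection with $T(\gambles^{+})=\gambles^{+}$ and $T^{-1}(\gambles^{+})=\gambles^{+}$, properties (S1) (non-pointed convex cone) and (S2) (accepting partial gain) for $\domain_{1}$ transfer pointwise from the same properties of $\domain_{0}$: additivity/positive scaling pass through $T$, $0\notin\domain_{1}$ because $T(0)=0\notin\domain_{0}$, and any $G\gneq 0$ satisfies $T(G)\gneq 0$, hence $T(G)\in\domain_{0}$ and $G\in\domain_{1}$. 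For openness (S3), suppose $G\in\domain_{1}$ with $G\not\gneq 0$; then $T(G)\in\domain_{0}$ and $T(G)\not\gneq 0$ (since $T^{-1}$ preserves $\gambles^{+}$), so (S3) for $\domain_{0}$ gives $\Delta>0$ with $T(G)-\Delta\in\domain_{0}$, and $\Delta':=T^{-1}(\Delta)>0$ satisfies $T(G-\Delta')=T(G)-\Delta\in\domain_{0}$, i.e.\ $G-\Delta'\in\domain_{1}$, as required.
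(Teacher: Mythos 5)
Your proof is correct, and the second half (preservation of SDG) is essentially identical to the paper's: transfer (S1)--(S2) through the linear bijection and, for (S3), pull the witness $\Delta$ back through the map (the paper picks $\Delta'>0$ with $h(\Delta')=\Delta$ via (C2); you take $\Delta'=T^{-1}(\Delta)$ and note interior-preservation --- same content). Where you genuinely diverge is the classification step. The paper first asserts that (C3) upgrades to full $\reals$-linearity (you actually prove this, via $h(0)=0$ and $h(-G)=-h(G)$, which is a small but real improvement in rigour) and then invokes Schneider's theorem on linear maps of $\CH$ onto itself that map the PSD cone onto itself: these are $G\mapsto U^\dagger GU$ or $G\mapsto U^\dagger G^{T}U$ with $U$ nonsingular, and unitality (C1) then forces $U^\dagger U=I$. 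You instead package $h$ as a unital order automorphism and appeal to the Kadison/Jordan-automorphism classification, with a self-contained fallback via extreme rays of $\gambles^{+}$, preservation of orthogonality of rank-one projectors, and Uhlhorn--Wigner. Both routes are legitimate; Schneider's theorem is arguably the more economical citation here because it does not need unitality up front (unitality is used only afterwards to normalise $U$), whereas your route needs $h(I)=I$ from the start and, as you note, requires a separate argument in dimension $2$ where Uhlhorn's theorem does not apply (though your linearity plus the affine action on the Bloch ball closes that case). Your sketch of uniqueness up to a phase is at the same level of detail as the paper's, which likewise does not spell out that $U_1U_2^\dagger$ must be scalar.
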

\begin{proof}
 First notice that from (C3) in Definition \ref{def:tct} we get that $h$ is linear, i.e., for every $G_1, G_2 \in \CH$, and every $\alpha, \beta \in \reals$,
\[h(\alpha G_1 + \beta G_2)=\alpha h(G_1) + \beta h(G_2).\]
From \citet[Theorem~2]{schneider1965positive},
 there exists a non-singular matrix $U\in \CM$ such that either $h(G) = U^\dagger G U$ for every $G\in \CH$, or $h(G)=U^\dagger G' U$ for every $G\in \CH$. By condition~(C1) in Definition~\ref{def:tct}, $U^\dagger U =I$, meaning that $U$ is unitary or anti-unitary. Moreover, notice that, 
 up to a phase, the two operations ($h(G)=U^\dagger G U$ or $h(G)=U^\dagger G' U$) are the same.

We finally verify that $\domain'$ is an SDG. It is immediate to verify that it satisfies properties (S1),~(S2) in Definition \ref{def:sdg}. For property (S3) -- openness --, we reason as follows. 
Assume $G \in \domain'$. This means that $U^\dagger G U \in \domain$. Since $\domain$ is an SDG, it satisfies openness: $U^\dagger G U \gneq0$ or $U^\dagger G U -\Delta \in  \domain$ for some $0<\Delta \in \CH$. In the first case, notice that in this case too there is  $0<\Delta \in \CH$ such that from $U^\dagger G U -\Delta \in  \domain$. From (C2) in Definition \ref{def:tct}, we can pick $\Delta'>0$ such that $h(\Delta')=\Delta$. Thus $(G - \Delta') \in \domain'$, since $h(G-\Delta')=h(G) - h(\Delta')= h(G) - \Delta \in \domain$. 
\end{proof}

\begin{Proposition}\label{prop:tempcoherencecreal}
Consider a temporal coherent transformation $h$ represented by the unitary matrix $U$.
Let $\domain_0$ be an SDG,  $\domain_1=\{G \in \CH \mid U^\dagger G U \in \domain_0\}$ and $(\domain_0)^\bullet=\mdesirs_0$. Then  
\[
(\domain_1)^\bullet=\mdesirs_1=\left\{U \rho U^\dagger \Big| \rho \in  \mdesirs_0 \right\}.
\]
\end{Proposition}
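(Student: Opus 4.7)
The plan is to verify the claimed dual set by direct computation, exploiting the unitarity of $U$ and the cyclic property of the trace. The key algebraic identity driving everything is
\[
 G \cdot (U\rho U^\dagger) \;=\; Tr(G^\dagger U \rho U^\dagger) \;=\; Tr(U^\dagger G^\dagger U \,\rho) \;=\; (U^\dagger G U)\cdot \rho,
\]
which holds for every $G\in\CH$ and $\rho\in\CH$ (using $G^\dagger = G$ so that $U^\dagger G^\dagger U = (U^\dagger G U)^\dagger$ is again Hermitian).

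First I would check that $\mdesirs_1$ as claimed is a legitimate set of density matrices: if $\rho\in\mdesirs_0$ then $U\rho U^\dagger\geq 0$ since $U$ is unitary and $\rho\geq 0$, and $Tr(U\rho U^\dagger)=Tr(\rho U^\dagger U)=Tr(\rho)=1$, so $U\rho U^\dagger\in\CD$. Next, for the forward inclusion, take any $\rho\in\mdesirs_0$ and any $G\in\domain_1$; by definition $U^\dagger G U\in\domain_0$, and by the identity above $G\cdot(U\rho U^\dagger)=(U^\dagger G U)\cdot\rho\geq 0$. Hence $U\rho U^\dagger\in(\domain_1)^\bullet$.

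For the reverse inclusion I would use the same identity in the other direction. Given $\sigma\in(\domain_1)^\bullet$, set $\rho := U^\dagger \sigma U$, which is again a density matrix by the same unitary-invariance argument. For any $H\in\domain_0$, the gamble $G:=UHU^\dagger$ satisfies $U^\dagger G U = H\in\domain_0$, hence $G\in\domain_1$; then
\[
 H\cdot\rho \;=\; (U^\dagger G U)\cdot(U^\dagger \sigma U) \;=\; G\cdot\sigma \;\geq\;0,
\]
so $\rho\in\mdesirs_0$ and $\sigma = U\rho U^\dagger$. This gives the two inclusions, hence equality.

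The proof is essentially a one-line calculation and I do not expect a real obstacle: the only thing to be careful about is that one must work with the dual expressed in the $\geq 0$ form (as in Definition \ref{def:dual-set} intersected with $\CD$) rather than the strict form~\eqref{eq:credal1}, because the sharp inequalities are awkward to carry through the bijection; equivalence of the two formulations was already established in Proposition~\ref{prop:credal1equiv}, so this poses no difficulty. The anti-unitary case is handled verbatim by replacing $U^\dagger G U$ with $U^\dagger G' U$ and noting that the trace-identity above still holds.
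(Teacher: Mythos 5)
Your proof is correct, and it rests on exactly the same key identity as the paper's: the cyclicity of the trace giving $G\cdot(U\rho U^\dagger)=(U^\dagger G U)\cdot\rho$, together with unitary invariance of positivity and of the trace. The only organizational difference is that the paper passes through the primal side: it first rewrites $\domain_1$ as $\{G \gneq 0\}\cup\{G \mid (U^\dagger G U)\cdot\rho>0\ \forall\rho\in\mdesirs_0\}$ by invoking the representation theorem (so that $\domain_0=(\mdesirs_0)^\circ$, with the strict inequalities), and then reads off the dual; you instead prove the two inclusions directly at the level of the non-strict dual $(\cdot)^\bullet$ restricted to $\CD$, using that $U$ induces a bijection of $\domain_0$ onto $\domain_1$ via $H\mapsto UHU^\dagger$. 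Your route is marginally more self-contained, since it does not lean on Theorem~\ref{thm:dualityopen} and sidesteps the strict-inequality bookkeeping, which you correctly flag as the only delicate point; the paper's route has the small advantage of simultaneously exhibiting the primal form of $\domain_1$, which it uses in the surrounding discussion. Either way the content is the same one-line calculation.
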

\begin{proof}
First of all, notice that since the transformation is represented by the unitary matrix $U$, it preserves the eigenvalues, meaning that 
\begin{equation}\label{eq:unitary}
G \gneq 0 \Leftrightarrow U^\dagger G U \gneq 0.
\end{equation}
Let $\mdesirs_0$ be the credal quantum set associated to $\domain_0$.
If we verify that:
\begin{equation}\label{eq:tempodual}
 \domain_1=\left\{G \in \CH \mid G \gneq 0 \right\} \cup 
 \left\{G \in \CH \mid (U^\dagger G U)\cdot \rho >0, ~~\forall\rho \in 
 \mdesirs_0\right\},
\end{equation}
we are done. Indeed, note that 
\[
 (U^\dagger G U)\cdot \rho= Tr(U^\dagger G^\dagger U \rho)=Tr( G^\dagger U \rho U^\dagger),
\]
and observe that $Tr(U \rho U^\dagger)=Tr(\rho U^\dagger U)=Tr(\rho)=1$ and, thus, we can derive that
\begin{displaymath}
 \mdesirs_1=\left\{U \rho U^\dagger \Big| \rho \in  \mdesirs_0 \right\}.
\end{displaymath}
We exploit the relationship $\domain_1=\{G \in \CH \mid U^\dagger G U \in \domain_0\}$ 
and the duality between  $\mdesirs_0$ and $\domain_0$ to derive Equation~\eqref{eq:tempodual}:
\begin{center}
\begin{tabular}{ c c l c r }
  $G \in \domain_1$ & $\Leftrightarrow$ & $U^\dagger G U \in \domain_0$ & \; & (definition of $\domain_1$) \\
     & $\Leftrightarrow$ & $U^\dagger G U \gneq 0  \lor (U^\dagger G U)\cdot \rho >0, ~~\forall\rho \in 
 \mdesirs_0$ & \; & (definition of $(\cdot)^\circ$) \\
     & $\Leftrightarrow$ & $G \gneq 0  \lor (U^\dagger G U)\cdot \rho >0, ~~\forall\rho \in 
 \mdesirs_0$ & \; & (Equation~\eqref{eq:unitary}). \\
\end{tabular}
\end{center}

\end{proof}

 \subsection{Results in Section \ref{sec:4ndaxiom}}

\begin{Proposition}
\label{pro:marginal}
Let $\domain^{AB}$ be an SDG in $\mathbb{C}_h^{nm \times nm}$. Then: 
\begin{enumerate}
\item Both
$\marg_A(\domain^{AB})$ and $\marg_B(\domain^{AB})$ are SDGs.
\item Moreover, they are both maximal SDGs whenever $\domain^{AB}$ is a maximal SDG too.
\end{enumerate}
\end{Proposition}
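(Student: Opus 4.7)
My plan is to verify, in turn, the three SDG axioms (S1)--(S3) for $\marg_A(\domain^{AB})$ using the bilinearity of the tensor product, and then to deduce maximality by passing through the representation theorem of Section~\ref{sec:1staxiom}. The argument for $\marg_B$ is perfectly symmetric, so I only write it for $A$. The main small obstacle is the openness property (S3), since the openness guaranteed by $\domain^{AB}$ gives an $nm\times nm$ Hermitian ``slack'' $\Delta'$, while we need a genuinely $n\times n$ slack.

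For (S1), the map $G\mapsto G\otimes I_m$ is linear, so additivity and positive scaling of $\marg_A(\domain^{AB})$ follow directly from the corresponding properties of $\domain^{AB}$; the set is non-pointed because $0\otimes I_m=0\notin\domain^{AB}$. For (S2), note that $G\gneq0$ iff $G\otimes I_m\gneq0$, since the eigenvalues of $G\otimes I_m$ are those of $G$ each with multiplicity $m$; hence (S2) for $\domain^{AB}$ transfers to $\marg_A(\domain^{AB})$. For (S3), take $G\in\marg_A(\domain^{AB})$ with $G\not\gneq0$. Then $G\otimes I_m\in\domain^{AB}$ is not PSDNZ, and by openness of $\domain^{AB}$ there exists $\Delta'>0$ in $\mathbb{C}_h^{nm\times nm}$ with $G\otimes I_m-\Delta'\in\domain^{AB}$. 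Choose $\epsilon:=\tfrac12\lambda_{\min}(\Delta')>0$ and set $\Delta:=\epsilon I_n>0$. Then $\Delta\otimes I_m=\epsilon I_{nm}$, and $\Delta'-\epsilon I_{nm}\gneq0$, so by (S2) and additivity of $\domain^{AB}$,
\[
(G-\epsilon I_n)\otimes I_m=(G\otimes I_m-\Delta')+(\Delta'-\epsilon I_{nm})\in\domain^{AB},
\]
which is precisely $G-\Delta\in\marg_A(\domain^{AB})$, establishing openness.

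For the maximality claim, I invoke the representation theorem (Theorem~\ref{thm:dualityopen}): $\domain^{AB}$ is maximal iff its dual quantum credal set is a singleton, say $\mdesirs^{AB}=\{\rho^{AB}\}$. Let $\rho^A:=Tr_B(\rho^{AB})$; it is PSD with unit trace, hence a density matrix. Using the partial-trace identity
\[
(G\otimes I_m)\cdot\rho^{AB}=Tr\bigl((G^\dagger\otimes I_m)\rho^{AB}\bigr)=Tr\bigl(G^\dagger\,Tr_B(\rho^{AB})\bigr)=G\cdot\rho^A,
\]
and recalling $G\gneq0\iff G\otimes I_m\gneq0$, the characterisation~\eqref{eq:credal2} of $\domain^{AB}=(\mdesirs^{AB})^\circ$ gives
\[
\marg_A(\domain^{AB})=\{G\in\mathbb{C}_h^{n\times n}\mid G\gneq0\}\cup\{G\in\mathbb{C}_h^{n\times n}\mid G\cdot\rho^A>0\}=(\{\rho^A\})^\circ.
\]
By Theorem~\ref{thm:dualityopen} this identifies the dual quantum credal set of $\marg_A(\domain^{AB})$ with the singleton $\{\rho^A\}$, and hence $\marg_A(\domain^{AB})$ is a maximal SDG, proving the second assertion. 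The same reasoning with $\rho^B=Tr_A(\rho^{AB})$ handles $\marg_B$.
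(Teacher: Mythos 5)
Your proof is correct, but on the two substantive points it takes a genuinely different route from the paper's. For openness (S3), the paper passes to the dual: it sets $\epsilon=\min_{\rho\in\mdesirs^{AB}}(G\otimes I_m)\cdot\rho$ and subtracts $\alpha\epsilon I_n$, which implicitly relies on the quantum credal set being compact so that the minimum is attained and strictly positive. You instead stay entirely primal: you take the slack $\Delta'>0$ supplied by openness of $\domain^{AB}$, shrink it to the scalar matrix $\tfrac12\lambda_{\min}(\Delta')I_{nm}$, and absorb the PSDNZ difference $\Delta'-\epsilon I_{nm}$ back into the cone via (S2) and additivity. This is more elementary and sidesteps the compactness issue altogether. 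For maximality, the paper argues directly on the cone (the ``degenerate cone'' argument: for any candidate gamble $H=F-\epsilon I_n$ outside the marginal, $-H$ already lies inside it, so no coherent enlargement is possible), whereas you compute the marginal explicitly via the identity $(G\otimes I_m)\cdot\rho^{AB}=G\cdot Tr_B(\rho^{AB})$, identify it as $(\{\rho^A\})^\circ$ with $\rho^A=Tr_B(\rho^{AB})$, and invoke the representation theorem together with the correspondence between maximal SDGs and singleton credal sets. Your version buys something extra: it exhibits the dual of the marginal as the partial trace of $\rho^{AB}$ for an arbitrary maximal $\domain^{AB}$, which is the link to reduced density operators that the paper only establishes later (and only for product states) in Proposition~\ref{prop:marginalstates}; the cost is that you lean on the maximality--singleton correspondence, which the paper asserts after Theorem~\ref{thm:dualityopen} but also uses itself, so nothing is circular.
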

\begin{proof}
We just check the case of $\marg_A(\domain^{AB})$, the remaining one is analogous.

Given two matrices $N,M$, the eigenvalues of $N \otimes M$ are equal  to the product of the eigenvalues
of $N$ and $M$.
Hence,   if $G\gneq 0$ then $G \otimes I_m \gneq 0$ and, thus, $ G \otimes I_m \in \domain^{AB}$,
since $\domain^{AB}$ is an SDG (it accepts partial gain).
Since   $ (G_1+G_2) \otimes I_m= G_1 \otimes I_m+ G_2 \otimes I_m$ 
and $(\lambda G) \otimes I_m=\lambda (G \otimes I_m)$, it follows that 
$\marg_A(\domain^{AB})$ is a cone. It is also non-pointed since $ 0 \otimes I_m=0$
and $0\notin \domain^{AB}$. Moreover, for any $G \in \marg_A(\domain^{AB})$ either
$G\gneq0 $ or $G-\Delta\in \marg_A(\domain^{AB})$ for some $\Delta>0$ (openness).
To prove the last property, consider  a $G$ that is not PSDNZ but $G\otimes I_m \in \domain^{AB}$.
If such a $G$ does not exist then the openness property holds.
If it exists, then consider $(G-\Delta)\otimes I_m=G\otimes I_m -\Delta\otimes I_m$ with $\Delta>0$.
Take the dual of $\domain^{AB}$,  $\mdesirs^{AB}$; then, since  $G\otimes I_m$ is not PSDNZ, the constraint $G\otimes I_m \in \domain^{AB}$ means 
that 
$$
G\otimes I_m \cdot \rho>0 ~~\forall \rho\in \mdesirs^{AB}.
$$
Let $\epsilon=\min_{\rho\in \mdesirs^{AB}} G\otimes I_m \cdot \rho$, then take
$\Delta=\alpha \epsilon I_{n}>0$ with $0<\alpha<1$, it follows that
$$
G\otimes I_m \cdot \rho - \Delta\otimes I_m \cdot \rho>0 \;\;\;~~~\forall~ 0<\alpha<1,
$$
which proves that $\marg_A(\domain^{AB})$ is open.

We verify that $\marg_A(\domain^{AB})$ is maximal whenever $\domain^{AB}$ is maximal.
Since by hypothesis  $\domain^{AB}$  is maximal, its dual $\mdesirs^{AB}$ includes  only  a single density matrix $\hat{\rho}^{AB}$.
Let $H=F-\epsilon I_n \in \CH$ a gamble such that   $Tr(F^\dagger\otimes I_m  \hat{\rho}^{AB}) \leq 0$ 
and so $Tr(F^\dagger \otimes I_m \hat{\rho}^{AB})-\epsilon Tr(I_n\otimes I_m \hat{\rho}^{AB})=Tr(F^\dagger \otimes I_m \hat{\rho}^{AB})-\epsilon  < 0$
for any $\epsilon>0 $. Then consider the gamble $G=-H$. $G$  is in $\marg_A(\domain^{AB})$ because  $Tr(G^\dagger\otimes I_m \hat{\rho}^{AB}) =-Tr(H^\dagger\otimes I_m \hat{\rho}^{AB})> 0$ and so $(F-\epsilon I)\otimes I_m \in \domain^{AB}$  for any  $\epsilon>0$. Since $H-H=0$, $\marg_A(\domain^{AB})$ cannot include any further gamble while being an SDG (for any  $\epsilon>0$, openness).
\end{proof}

\begin{Proposition}\label{prop:marginalstates}
 Let $\domain^{AB}$ be a maximal SDG. Assume that 
 $$
 \mdesirs^{AB}=\{\rho^{AB} \in \mathcal{D}_h^{nm\times nm}: ~\rho^{AB}=\rho^{A}\otimes \rho^{B}\}, 
 $$
 with $\rho^{A}\in \mathcal{D}_h^{n\times n}$ and $\rho^{B}\in \mathcal{D}_h^{m\times m}$.
 Then we have that  
 $$
 \marg_A(\domain^{AB})=\{G \in \CH \mid G \gneq 0\} \cup \{G \in \CH \mid G \cdot \rho^{A}>0\},
 $$ 
 so that $ \mdesirs^{A}=\{\rho^{A}\}$. A similar result holds for $B$. 
\end{Proposition}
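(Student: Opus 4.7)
The plan is to unfold the definition of $\marg_A(\domain^{AB})$ from Definition~\ref{def:margin} and combine it with the dual characterisation of a maximal SDG to reduce everything to a direct tensor-product computation.

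First, since $\domain^{AB}$ is maximal with dual $\mdesirs^{AB}=\{\rho^A\otimes\rho^B\}$, the inverse duality~\eqref{eq:credal2} gives
\[
\domain^{AB}=\{H\in\mathbb{C}_h^{nm\times nm}\mid H\gneq 0\}\cup\{H\in\mathbb{C}_h^{nm\times nm}\mid H\cdot(\rho^A\otimes\rho^B)>0\}.
\]
Applying Definition~\ref{def:margin}, a gamble $G\in\CH$ lies in $\marg_A(\domain^{AB})$ iff $G\otimes I_m$ belongs to the above union.

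Next, I reduce each condition to one on $G$ alone. For the first, recall the eigenvalue-of-tensor-product fact already used in Proposition~\ref{pro:marginal}: the eigenvalues of $G\otimes I_m$ are precisely the eigenvalues of $G$, each repeated $m$ times. Hence $G\otimes I_m\gneq 0$ iff $G\gneq 0$. For the second, I compute using $Tr(A\otimes B)=Tr(A)Tr(B)$ and $(G\otimes I_m)^\dagger=G^\dagger\otimes I_m$:
\[
(G\otimes I_m)\cdot(\rho^A\otimes\rho^B)=Tr\bigl((G^\dagger\rho^A)\otimes(I_m\rho^B)\bigr)=Tr(G^\dagger\rho^A)\,Tr(\rho^B)=G\cdot\rho^A,
\]
since $Tr(\rho^B)=1$. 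Combining these two reductions yields exactly
\[
\marg_A(\domain^{AB})=\{G\in\CH\mid G\gneq 0\}\cup\{G\in\CH\mid G\cdot\rho^A>0\},
\]
which is the first claim.

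Finally, to conclude $\mdesirs^A=\{\rho^A\}$, I invoke the representation theorem (Theorem~\ref{thm:dualityopen}) applied to the SDG $\marg_A(\domain^{AB})$, already shown to be an SDG by Proposition~\ref{pro:marginal}: the expression above is precisely $\{\rho^A\}^\circ$ via~\eqref{eq:credal2}, and by the bijection between SDGs and quantum credal sets its dual is the singleton $\{\rho^A\}$. No significant obstacle is foreseen; the only care needed is to invoke the maximality hypothesis to guarantee that the dual of $\domain^{AB}$ is truly a singleton (so that $H\cdot\rho^{AB}>0$ can replace the universal quantification over $\mdesirs^{AB}$ in~\eqref{eq:credal2}), and to use $Tr(\rho^B)=1$ to drop the $\rho^B$ factor. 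The statement for $B$ follows by symmetry.
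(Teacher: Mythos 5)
Your proof is correct and follows essentially the same route as the paper's: unfold $\domain^{AB}$ via the inverse duality \eqref{eq:credal2} using maximality, reduce the membership of $G\otimes I_m$ to conditions on $G$ through the eigenvalue fact and the identity $(G\otimes I_m)\cdot(\rho^A\otimes\rho^B)=Tr(G^\dagger\rho^A)Tr(\rho^B)$. If anything, you are slightly more explicit than the paper in noting that $G\otimes I_m\gneq 0$ iff $G\gneq 0$ (both directions, needed for the set equality) and in closing the argument with Theorem~\ref{thm:dualityopen} to identify $\mdesirs^A=\{\rho^A\}$.
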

\begin{proof}
Let us start from $\mdesirs^{AB}$, its dual is defined via~\eqref{eq:credal2} as
$$
\domain^{AB}=\{H \in \complexs_h^{nm\times nm} \mid H\gneq0\} \cup \{H \in \complexs_h^{nm\times nm} \mid H \cdot \rho^A \otimes \rho^B>0\}.
$$
By applying the definition of marginal SDG, we have that
$$
\marg_A(\domain^{AB})=\left\{G \in \CH \mid G \otimes I_m \in \domain^{AB}\right\}.
$$ 
Since $G\gneq0$ implies $ G \otimes I_m \gneq0$, it follows that $G \otimes I_m \in \domain^{AB}$ (it includes all PSDNZ matrices).
If it is not the case, then  $H=G \otimes I_m$ is in $\domain^{AB}$ provided that
$$
0<G \otimes I_m \cdot \rho^A \otimes \rho^B=Tr((G^\dagger \otimes I_m) (\rho^A \otimes \rho^B))=Tr((G^\dagger \rho^A) \otimes (I_m \otimes \rho^B))=Tr(G^\dagger \rho^A),
$$
which proves the theorem.

\end{proof}

\begin{Proposition}
\label{pro:NE}
 The natural extension $\ext_{AB}(\domain^A, \domain^B)$ always exists.
\end{Proposition}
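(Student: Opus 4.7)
The plan is to build the natural extension explicitly on the dual side, using the representation theorem to transfer the construction back to a coherent set of desirable gambles. Let $\mdesirs^A = (\domain^A)^\bullet$ and $\mdesirs^B = (\domain^B)^\bullet$ be the quantum credal sets dual to the two marginals, and define
\[
\mdesirs^{AB} := \left\{\rho \in \mathcal{D}_h^{nm \times nm} : \mathrm{tr}_B(\rho) \in \mdesirs^A,\; \mathrm{tr}_A(\rho) \in \mdesirs^B\right\}.
\]
Since $\mdesirs^A$ and $\mdesirs^B$ are non-empty, closed and convex, and the partial trace is continuous, linear and maps density matrices to density matrices, $\mdesirs^{AB}$ is a non-empty closed convex set of density matrices; non-emptiness follows from $\rho^A \otimes \rho^B \in \mdesirs^{AB}$ for any $\rho^A \in \mdesirs^A$ and $\rho^B \in \mdesirs^B$. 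By the representation theorem (Theorem~\ref{thm:dualityopen}), $\domain^{AB} := (\mdesirs^{AB})^\circ$ is then a well-defined SDG whose dual equals $\mdesirs^{AB}$.

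Next I would verify that $\domain^{AB}$ has the required marginals. Combining the identity $(G \otimes I_m) \cdot \rho = G \cdot \mathrm{tr}_B(\rho)$ with characterisation~\eqref{eq:credal2} yields
\[
\marg_A(\domain^{AB}) = \{G : G \gneq 0\} \cup \{G : G \cdot \sigma > 0 \text{ for all } \sigma \in \mathrm{tr}_B(\mdesirs^{AB})\},
\]
so it suffices to show $\mathrm{tr}_B(\mdesirs^{AB}) = \mdesirs^A$. The inclusion $\subseteq$ is immediate from the definition of $\mdesirs^{AB}$, while the reverse inclusion follows by realising any $\rho^A \in \mdesirs^A$ as $\mathrm{tr}_B(\rho^A \otimes \rho^B)$ for any choice of $\rho^B \in \mdesirs^B$. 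Together with the symmetric statement on $B$, this gives $\marg_A(\domain^{AB}) = \domain^A$ and $\marg_B(\domain^{AB}) = \domain^B$.

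For minimality, I would argue again on the dual side. Let $\tilde\domain^{AB}$ be any SDG with $\marg_A(\tilde\domain^{AB}) = \domain^A$ and $\marg_B(\tilde\domain^{AB}) = \domain^B$. Then for each $G \in \domain^A$ one has $G \otimes I_m \in \tilde\domain^{AB}$ by definition of marginalisation, so for every $\rho \in \tilde\mdesirs^{AB}$ the quantity $G \cdot \mathrm{tr}_B(\rho) = (G \otimes I_m) \cdot \rho$ is non-negative; this forces $\mathrm{tr}_B(\rho) \in \mdesirs^A$, and symmetrically $\mathrm{tr}_A(\rho) \in \mdesirs^B$. Hence $\tilde\mdesirs^{AB} \subseteq \mdesirs^{AB}$, which by Theorem~\ref{thm:dualityopen} yields $\tilde\domain^{AB} \supseteq \domain^{AB}$. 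So $\domain^{AB}$ is the smallest SDG with the prescribed marginals, i.e., $\ext_{AB}(\domain^A, \domain^B)$, proving existence.

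The main obstacle I foresee is the careful verification that $\mathrm{tr}_B(\mdesirs^{AB})$ equals, rather than is merely contained in, $\mdesirs^A$: this rests on having the product state $\rho^A \otimes \rho^B$ available as a witness, which in turn requires $\mdesirs^A$ and $\mdesirs^B$ to be non-empty (guaranteed by the representation theorem applied to $\domain^A$ and $\domain^B$). A subsidiary point is to keep the PSDNZ clause in~\eqref{eq:credal2} consistent across the two sides of the duality, so that marginalisation on gambles dualises cleanly to partial tracing on credal sets, and to check closedness of $\mathrm{tr}_B(\mdesirs^{AB})$ where needed (which is immediate since the partial trace is continuous and $\mdesirs^{AB}$ is compact).
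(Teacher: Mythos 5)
Your proof is correct, but it takes a genuinely different route from the paper's. The paper works on the primal side: it defines $\ext_{AB}(\domain^A,\domain^B)$ explicitly as $\operatorname{posi}\left(\gambles^+\cup\{G\otimes I_m \mid G\in\domain^A\}\cup\{I_n\otimes H \mid H\in\domain^B\}\right)$, so that minimality and the inclusion of the required gambles hold by construction, and then spends its effort showing this cone is coherent -- in particular that it contains no negative semi-definite matrix, which is done via an eigenvalue computation on $\alpha\, G\otimes I_m+\beta\, I_n\otimes H$ (the eigenvalues of this sum are the pairwise sums $\alpha\lambda_G+\beta\lambda_H$, at least one of which is positive because neither $G$ nor $H$ is NSD). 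You instead work entirely on the dual side, defining $\mdesirs^{AB}$ as the set of joint density matrices whose partial traces land in $\mdesirs^A$ and $\mdesirs^B$, and pulling this back through Theorem~\ref{thm:dualityopen}; the key identity $(G\otimes I_m)\cdot\rho=G\cdot Tr_B(\rho)$ then makes both the marginal-matching and the minimality arguments one-line monotonicity observations. What each approach buys: the primal construction yields a concrete generating description of the extension (useful for computation and for seeing directly which gambles are implied), at the price of having to verify coherence by hand; your dual construction gets coherence for free from the representation theorem (a closed convex nonempty set of density matrices always has an SDG as its dual, so non-pointedness and openness need no separate check) and, unlike the paper's proof as written, explicitly verifies that the marginals of the extension equal $\domain^A$ and $\domain^B$ exactly rather than merely containing them. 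The one dependency worth flagging is that your argument leans on the ``vice versa'' direction of Theorem~\ref{thm:dualityopen}, whereas the paper's is self-contained linear algebra.
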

\begin{proof}
The natural extension is defined as
$$
\ext_{AB}(\domain^A, \domain^B)= \operatorname{posi}\left(\mathcal{H}^+ \cup \{G\otimes I_m\mid G\in \domain^A\}\cup \{I_n\otimes H\mid H \in \domain^B\}\right),
$$
where $\mathcal{H}^+ $ is the the set of PSDNZ matrices in $\mathbb{C}^{nm \times nm}$ and $ \operatorname{posi}$ denotes the conic closure:
$$
 \operatorname{posi} (S)=\left\{\sum_{i=1}^k \alpha_i H_i \;\Big|\; H_i\in S, \, \alpha_i\in \mathbb{R}, \, \alpha_i> 0, i, k=1, 2, \dots\right\}. 
$$
Therefore $\ext_{AB}(\domain^A, \domain^B)$ is a convex cone that includes all PSDNZ matrices; it remains to show
that it is non-pointed or, equivalently, that it does not include any negative semi-definite matrix (NSD).

First observe that  $G,H$ cannot be NSD matrices because $\domain^A$ and $\domain^B$ are SDG.
It follows that $G \otimes I_m$ and $I_n \otimes H $ are not NSD either.
Moreover,
 $$
\begin{array}{l}
 \alpha G \otimes I_m+ \beta I_n \otimes H \\
 =\alpha (V_G \otimes V_H)(\Lambda_G\otimes I_m)(V_G \otimes 
V_H)^\dagger+\beta (V_G \otimes V_H)(I_n\otimes \Lambda_H)(V_G \otimes V_H)^\dagger\\
=\alpha (V_G \otimes V_H)(\alpha \Lambda_G\otimes I_m+\beta I_n\otimes \Lambda_H)(V_G \otimes 
V_H)^\dagger,
\end{array}
 $$
 with $\alpha,\beta>0$ and where $G=V_G  \Lambda_G V^\dagger_G$,  $H=V_H  \Lambda_H V^\dagger_H$
 $I_m=V_HV^\dagger_H$ and $I_n=V_GV^\dagger_G$.
 Note that  $\max diagonal(\alpha \Lambda_G\otimes I_m+\beta I_n\otimes \Lambda_H) >0$
 because the vector  includes all the sums of the eigenvalues of $G$ and $H$, i.e.,  $\lambda_G+\lambda_H$.
 Thus, there is at least one positive sum.
 \end{proof}

\begin{Proposition}\label{prop:independence1}
 Let $\domain^{AB}$ be a maximal SDG. Assume that 
 $$
 \mdesirs^{AB}=\{\rho^{AB} \in \mathcal{D}_h^{nm\times nm}: ~\rho^{AB}=\rho^{A}\otimes \rho^{B}\}, 
 $$
with $\rho^{A}\in \mathcal{D}_h^{n\times n}$, and $\rho^{B}\in \mathcal{D}_h^{m\times m}$
and that $ Tr(\Pi_i^B\rho_B \Pi_i^B)>0$ for any $\Pi_i^B$ and $\Pi^B$.
 Then $\domain^{AB}$ satisfies epistemic independence of $A$ and $B$. 
 \end{Proposition}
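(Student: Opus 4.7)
The plan is to combine duality (the representation theorem) with the explicit updating formula \eqref{eq:rhobayes} applied to the composite projector $I_n\otimes\Pi^B_i$, and then use Proposition \ref{prop:marginalstates} twice. The main work reduces to one tensor-product identity; there is no real obstacle, but some care is needed to verify that the conditioning rule applies (the denominator is nonzero) and that the conditioned credal set is again a singleton of product form so Proposition \ref{prop:marginalstates} can be reapplied.

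First I would fix an arbitrary projection measurement $\Pi^B=\{\Pi^B_j\}_{j=1}^m$ on $B$ and an index $i$, and note that $\{I_n\otimes\Pi^B_j\}_{j=1}^m$ is then a projection measurement on the composite system. Since $\domain^{AB}$ is maximal with $\mdesirs^{AB}=\{\rho^A\otimes\rho^B\}$, Proposition \ref{prop:marginalstates} gives that $\marg_A(\domain^{AB})$ is maximal with dual $\{\rho^A\}$. By the hypothesis $Tr(\Pi^B_i\rho^B\Pi^B_i)>0$ and the tensor identity
\[
(I_n\otimes\Pi^B_i)(\rho^A\otimes\rho^B)(I_n\otimes\Pi^B_i)=\rho^A\otimes(\Pi^B_i\rho^B\Pi^B_i),
\]
we have $Tr\bigl((I_n\otimes\Pi^B_i)(\rho^A\otimes\rho^B)(I_n\otimes\Pi^B_i)\bigr)=Tr(\rho^A)\,Tr(\Pi^B_i\rho^B\Pi^B_i)=Tr(\Pi^B_i\rho^B\Pi^B_i)>0$, so formula \eqref{eq:rhobayes} applies to give
\[
\mdesirs^{AB}_B=\left\{\rho^A\otimes\frac{\Pi^B_i\rho^B\Pi^B_i}{Tr(\Pi^B_i\rho^B\Pi^B_i)}\right\}.
\]
Since this is still a singleton containing a product density matrix with first factor $\rho^A$, Proposition \ref{prop:marginalstates} applied to $\domain^{AB}_B$ yields that $\marg_A(\domain^{AB}_B)$ is maximal with dual $\{\rho^A\}$.

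By the representation theorem (Theorem \ref{thm:dualityopen}), two SDGs with the same dual coincide, hence $\marg_A(\domain^{AB})=\marg_A(\domain^{AB}_B)$. As $\Pi^B_i$ and $\Pi^B$ were arbitrary, this establishes epistemic irrelevance of $A$ to $B$. A symmetric argument, conditioning instead on $\Pi^A_i\otimes I_m$ and using the analogue of Proposition \ref{prop:marginalstates} for marginals on $B$, establishes epistemic irrelevance of $B$ to $A$. Together, these give epistemic independence of $A$ and $B$ (Definition \ref{def:epindep}), completing the proof.
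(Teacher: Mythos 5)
Your proof is correct, but it is organised differently from the paper's. The paper works on the primal (gamble) side: it writes out $\domain^{AB}$ explicitly from its dual, computes $\marg_A(\domain^{AB})$ and the conditional cone $\domain^{AB}_B$ by direct trace manipulations, and checks that the two marginals are cut out by the same inequality $Tr(G^\dagger\rho^A)>0$. You instead push everything through the dual (credal-set) side: Proposition \ref{prop:marginalstates} for the unconditional marginal, the updating formula \eqref{eq:rhobayes} together with the identity $(I_n\otimes\Pi^B_i)(\rho^A\otimes\rho^B)(I_n\otimes\Pi^B_i)=\rho^A\otimes(\Pi^B_i\rho^B\Pi^B_i)$ for the conditional credal set, Proposition \ref{prop:marginalstates} again, and finally the representation theorem to conclude that two SDGs with the same dual coincide. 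The underlying computation is the same tensor-product trace identity in both cases, but your version is more modular and makes explicit which previously established facts are being reused, whereas the paper's is self-contained and does not invoke \eqref{eq:rhobayes} or Proposition \ref{prop:marginalstates}. One caveat you share with the paper: the stated hypothesis only guarantees $Tr(\Pi^B_i\rho^B\Pi^B_i)>0$, so your ``symmetric argument'' for irrelevance of $B$ to $A$ requires the analogous condition $Tr(\Pi^A_j\rho^A\Pi^A_j)>0$ (which Proposition \ref{prop:independence2} does assume); without it, conditioning on a zero-probability event $\Pi^A_j\otimes I_m$ yields the vacuous cone and irrelevance of $B$ to $A$ genuinely fails. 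The paper's own proof silently omits this direction as well, so this is a defect of the statement rather than of your argument, but it is worth flagging when you write ``symmetric argument''.
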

 \begin{proof}
  From duality, we derive that
 $$
 \domain^{AB}=\{H \in  \mathbb{C}_h^{nm \times nm} \mid  H\gneq0\} \cup \{H \in  \mathbb{C}_h^{nm \times nm} \mid  H \cdot 
(\rho^{A}\otimes \rho^{B})>0\}.
 $$
 Consider the first set of gambles  and observe that for any $0 \lneq G\in \CH $, 
 $H=G \otimes I_m\gneq 0$ and so $G$ is in $ \marg_A(\domain^{AB})$.
Now consider the second set of gambles. For $H=G \otimes I_m $, we have:
 $$
 (G \otimes I_m)\cdot (\rho^{A}\otimes \rho^{B})=Tr\Big((G^\dagger \rho^{A}) \otimes  \rho^{B}\Big)=Tr(G^\dagger 
\rho^{A}) Tr(\rho^{B}).
 $$
 Since $Tr(\rho^{B})>0$, we finally  have that  
 $$
 \marg_A(\domain^{AB})=\{G \in  \mathbb{C}_h^{n \times n} \mid  G\gneq0\} \cup \{G \in  \CH \mid  G \cdot \rho^{A}> 0\}.
 $$
 Now consider
   $$
 \domain_B^{AB}=\{H \in  \mathbb{C}_h^{nm \times nm} \mid  H \gneq0 \textit{ or } (I_n  \otimes \Pi^B_i) H  (I_n \otimes \Pi^B_i)\in   
\domain^{AB}\},
 $$
 which by the definition of $\domain^{AB}$  is 
 $$
\{H \in  \mathbb{C}_h^{nm \times nm} \mid   H \gneq0 \textit{ or } \big((I_n  \otimes \Pi^B_i) H  (I_n \otimes \Pi^B_i)\big)\cdot 
(\rho^{A}\otimes \rho^{B})> 0\}.
 $$
For the second inequality observe that
 $$
 \begin{array}{l}
\big((I_n  \otimes \Pi^B_i) H  (I_n \otimes \Pi^B_i)\big)\cdot (\rho^{A}\otimes \rho^{B})\\
  =Tr\Big( H^\dagger  (I_n \otimes \Pi^B_i) (\rho^{A}\otimes \rho^{B}) (I_n  \otimes \Pi^B_i) \Big)
 \end{array}
 $$
 and
  $$
\begin{array}{l}
  (I_n \otimes \Pi_i^B)(\rho_A \otimes \rho_B)(I_n \otimes \Pi_i^B)  =(\rho_A  \otimes \Pi_i^B\rho_B)(I_n \otimes 
\Pi_i^B)  =\rho_A  \otimes \Pi_i^B\rho_B \Pi_i^B
 \end{array}
 $$
 and, thus,
  $$
\begin{array}{rcl}
 \domain_B^{AB}&=&
 \{H \in  \mathbb{C}_h^{nm \times nm} \mid H\gneq0 \textit{ or }  H  \cdot (\rho_A  \otimes \Pi_i^B\rho_B \Pi_i^B)> 0\}.
 \end{array}
 $$
 For $H=G \otimes I_m $, we know that if $G\gneq0$ then  $H=G \otimes I_m \gneq 0$. 
 For the inequality, we have
 $$
G \otimes I_m  \cdot (\rho_A  \otimes \Pi_i^B\rho_B \Pi_i^B)=Tr(G^\dagger \rho^{A}) Tr(\Pi_i^B\rho_B \Pi_i^B).
 $$
Since $Tr(\Pi_i^B\rho_B \Pi_i^B)>0$ by hypothesis, we obtain that 
$$
\marg_A(\domain^{AB}_B)=\left\{G \in \CH \mid G \otimes  I_m \in \domain_B^{AB}\right\}.
$$
 \end{proof}

 \begin{Proposition}\label{prop:independence2} 
Let $ \domain^{AB}$ be an SDG satisfying independence of $A$ and $B$ and let  $\marg_A(\domain^{AB})$ and $\marg_B(\domain^{AB})$ be maximal SDGs
with associated credal quantum sets $\mdesirs_{\marg_A(\domain^{AB})}=\{\rho^{A}\}$ and $ \mdesirs_{\marg_B(\domain^{AB})}=\{\rho^{B}\}$
such that $ Tr(\Pi_i^A\rho_A \Pi_i^A)>0$ for any $\Pi_i^A$ and $\Pi^A$ and $ Tr(\Pi_i^B\rho_B \Pi_i^B)>0$ for any $\Pi_i^B$ and $\Pi^B$. Then 
\[
 \mdesirs^{AB}=\{\rho^{AB}=\rho^{A}\otimes \rho^{B}\}, 
 \]
 i.e.,  $ \domain^{AB}$ is maximal with dual density matrix $\rho^{A}\otimes \rho^{B}$.
 \end{Proposition}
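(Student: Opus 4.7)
The plan is to show directly that every $\rho^{AB}\in\mdesirs^{AB}$ must equal $\rho^{A}\otimes \rho^{B}$, working at the level of expectations via duality. The route has three moves: (i) use the singleton-marginal hypothesis to fix the partial traces, (ii) use epistemic independence (conditioning on $I_n\otimes \Pi_i^B$) to fix expectations of elementary tensors $G\otimes \Pi_i^B$, and (iii) extend by linearity to all Hermitian matrices in $\mathbb{C}_h^{nm\times nm}$.

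First, I would unpack the marginal hypothesis. By Definition~\ref{def:margin} and the representation theorem, any $\rho^{AB}\in\mdesirs^{AB}$ satisfies $(G\otimes I_m)\cdot \rho^{AB}\geq 0$ for each $G\in \marg_A(\domain^{AB})$. Using the identity $(G\otimes I_m)\cdot \rho^{AB}=G\cdot \mathrm{Tr}_B(\rho^{AB})$, it follows that $\mathrm{Tr}_B(\rho^{AB})$ belongs to the dual of $\marg_A(\domain^{AB})$, which by hypothesis is $\{\rho^{A}\}$. Hence $\mathrm{Tr}_B(\rho^{AB})=\rho^{A}$ for every $\rho^{AB}\in \mdesirs^{AB}$, and symmetrically $\mathrm{Tr}_A(\rho^{AB})=\rho^{B}$. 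In particular, for every $G\in \mathbb{C}_h^{n\times n}$ and every projector $\Pi_i^B$,
\begin{equation*}
\mathrm{Tr}((G\otimes I_m)\rho^{AB})=\mathrm{Tr}(G\rho^{A}),\qquad \mathrm{Tr}((I_n\otimes \Pi_i^B)\rho^{AB})=\mathrm{Tr}(\Pi_i^B\rho^{B}),
\end{equation*}
and the positivity assumption on $\Pi_i^B\rho^B\Pi_i^B$ ensures the denominator in the conditioning rule \eqref{eq:rhobayes} is strictly positive.

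Second, I would apply epistemic independence. By Definition~\ref{def:epirrelev}, $\marg_A(\domain^{AB}_B)=\marg_A(\domain^{AB})$ for every choice of $\Pi_i^B$ in every measurement $\Pi^B$. Using \eqref{eq:rhobayes} with the event $I_n\otimes \Pi_i^B$, this says that the partial trace over $B$ of the conditional density matrix
\begin{equation*}
\rho^{AB}_{i}=\frac{(I_n\otimes \Pi_i^B)\rho^{AB}(I_n\otimes \Pi_i^B)}{\mathrm{Tr}((I_n\otimes \Pi_i^B)\rho^{AB})}
\end{equation*}
equals $\rho^{A}$ again. Testing against an arbitrary $G\in \mathbb{C}_h^{n\times n}$ and using cyclicity of the trace together with $(\Pi_i^B)^2=\Pi_i^B$, this reduces to the identity
\begin{equation*}
\mathrm{Tr}\bigl((G\otimes \Pi_i^B)\rho^{AB}\bigr)=\mathrm{Tr}(G\rho^{A})\,\mathrm{Tr}(\Pi_i^B\rho^{B})=\mathrm{Tr}\bigl((G\otimes \Pi_i^B)(\rho^{A}\otimes \rho^{B})\bigr),
\end{equation*}
where the last equality uses the tensor factorisation of the trace.

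Third, I would upgrade this from projectors to arbitrary Hermitian $H$ on $B$: every $H\in \mathbb{C}_h^{m\times m}$ admits a spectral decomposition $H=\sum_i \lambda_i \Pi_i^H$ with orthogonal rank-one projectors summing to $I_m$, so taking real linear combinations of the above identities yields $\mathrm{Tr}((G\otimes H)\rho^{AB})=\mathrm{Tr}((G\otimes H)(\rho^{A}\otimes \rho^{B}))$ for every $G,H$. Since elementary tensors $G\otimes H$ span $\mathbb{C}_h^{nm\times nm}$ (e.g.\ via Pauli-like bases on each factor), the Hilbert--Schmidt inner product separates $\mdesirs^{AB}$ from $\rho^{A}\otimes \rho^{B}$ to zero, forcing $\rho^{AB}=\rho^{A}\otimes \rho^{B}$. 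Consequently $\mdesirs^{AB}=\{\rho^{A}\otimes \rho^{B}\}$ and $\domain^{AB}$ is maximal.

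The only delicate step is the second one: translating the set-theoretic equality $\marg_A(\domain^{AB}_B)=\marg_A(\domain^{AB})$ into the pointwise statement $\mathrm{Tr}_B(\rho^{AB}_i)=\rho^{A}$ for every $\rho^{AB}\in\mdesirs^{AB}$. This requires applying the reasoning of the first step to the \emph{conditional} credal set, which by Proposition~\ref{prop:conditioning} is again an SDG, and relies on the fact that maximality of the marginal SDG is preserved (Proposition~\ref{pro:marginal}) so that its dual remains a singleton; this is where the hypotheses on positivity of $\mathrm{Tr}(\Pi_i^B\rho^{B}\Pi_i^B)$ are essential to ensure the conditioning in \eqref{eq:rhobayes} is well defined.
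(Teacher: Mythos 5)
Your proof is correct and follows the same basic route as the paper's: extract $\mathrm{Tr}_B(\rho^{AB})=\rho^{A}$, $\mathrm{Tr}_A(\rho^{AB})=\rho^{B}$ from the singleton marginals, and then use epistemic independence to impose the same condition on the post-measurement states $(I_n\otimes\Pi_i^B)\rho^{AB}(I_n\otimes\Pi_i^B)/\mathrm{Tr}(\cdot)$ for every projective measurement on $B$. The difference is in the finish: the paper stops at the resulting system of equations and asserts, without argument, that the arbitrariness of $\Pi_i^A,\Pi_i^B$ forces $\rho^{AB}=\rho^{A}\otimes\rho^{B}$ (it explicitly says it only gives the "intuition"), whereas you actually close the gap by observing that the conditional-marginal condition is equivalent to $\mathrm{Tr}((G\otimes\Pi_i^B)\rho^{AB})=\mathrm{Tr}(G\rho^{A})\,\mathrm{Tr}(\Pi_i^B\rho^{B})$, extending to arbitrary Hermitian $H$ on $B$ by spectral decomposition, and invoking the fact that elementary tensors $G\otimes H$ span $\mathbb{C}_h^{nm\times nm}$ over $\reals$ so that the Hilbert--Schmidt pairing separates points. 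Your handling of the delicate step is also right: conditioning on $I_n\otimes\Pi_i^B$ preserves SDG-ness, the hypothesis $\mathrm{Tr}(\Pi_i^B\rho^{B}\Pi_i^B)>0$ (together with $\mathrm{Tr}_A(\rho^{AB})=\rho^{B}$, which must be established first) guarantees the denominators are positive for every $\rho^{AB}\in\mdesirs^{AB}$, and the duality argument of your first step applies verbatim to the conditional cone since its $A$-marginal is again the maximal SDG with dual $\{\rho^{A}\}$. In short, your write-up is a complete proof of a statement the paper only sketches.
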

  \begin{proof}
  The proof is quite tedious and, thus, we will mostly give just the  intuition behind it. Let assume there is $\rho^{AB} \neq \rho^{A}\otimes \rho^{B}$.
Since the marginal $\marg_A(\domain^{AB})$  of the composite system has density matrix $ \rho^A$, we know that 
$$
Tr(G^\dagger \rho^A)=Tr(G^\dagger \otimes I_m \rho^{AB}) ~~\forall G \in \marg_A(\domain^{AB}),
$$
 which implies that
$$
\rho^A=Tr_B(\rho^{AB}),
$$
where $Tr_B$ denotes the partial trace.
Similarly, because of epistemic independence we know that
$$
\rho^A=Tr_B((I_n \otimes \Pi_i^B)\rho^{AB}(I_n \otimes \Pi_i^B))/Tr((I_n \otimes \Pi_i^B)\rho^{AB}(I_n \otimes \Pi_i^B))
$$
for any $\Pi_i^B$. Similarly for the subsystem $B$.
This defines the system of equations
$$
\left\{\begin{array}{l}
 \rho^A=Tr_B(\rho^{AB})\\
 \rho^A=Tr_B((I_n \otimes \Pi_i^B)\rho^{AB}(I_n \otimes \Pi_i^B))/Tr((I_n \otimes \Pi_i^B)\rho^{AB}(I_n \otimes \Pi_i^B)) ~~~\forall \Pi_i^B\\
 \rho^B=Tr_A(\rho^{AB})\\
 \rho^B=Tr_A((\Pi_j^A \otimes I_m)\rho^{AB}(\Pi_j^A \otimes I_m))/Tr((\Pi_j^A \otimes I_m)\rho^{AB}(\Pi_j^A \otimes I_m))~~~\forall \Pi_j^A\\
 Tr(\rho^{AB})=Tr(\rho^{A})=Tr(\rho^{B})=1
\end{array}\right.
$$
and given the arbitrariness of $\Pi_i^B$ and $\Pi_j^B$, this allows us to always find
a worst case and so to show  that $\rho^{AB}=\rho^{A}\otimes \rho^{B}$.
  \end{proof}

\begin{Theorem}
\label{th:frechet}
Assume that  $\domain^A$ and $\domain^B$ are maximal SDGs and so they uniquely define
 $\rho^A$ and  $\rho^B$ and assume that their natural extension is separable. Then the following inequalities hold:
 \begin{equation}
   \rho^{AB} \leq  \rho^{A} \otimes I_m \;\;\text{ and }\;\; \rho^{AB} \leq  I_n \otimes \rho^{B}
 \end{equation} 
 and
 \begin{equation}
   \rho^{AB} \geq  \rho^A \otimes I_m + I_n \otimes \rho^B-I_{nm} \;\;\text{ and }\;\; \rho^{AB} \gneq 0.
   \end{equation}
 \end{Theorem}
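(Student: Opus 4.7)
The plan is to exploit the separability hypothesis directly. By assumption, every $\rho^{AB}\in\mdesirs^{AB}$ admits a decomposition $\rho^{AB}=\sum_i w_i\,\rho_i^A\otimes\rho_i^B$ with $w_i\ge0$, $\sum_i w_i=1$ and $\rho_i^A,\rho_i^B$ density matrices. Since $Tr(\rho_i^B)=Tr(\rho_i^A)=1$, partial tracing gives $\rho^A=\sum_i w_i\rho_i^A$ and $\rho^B=\sum_i w_i\rho_i^B$. The proof then reduces to checking two algebraic identities and invoking two elementary facts about PSD matrices: (a)~the tensor product of two PSD matrices is PSD, and (b)~a positive linear combination of PSD matrices is PSD.

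For the upper bounds, I would compute
\[
\rho^A\otimes I_m-\rho^{AB}=\sum_i w_i\,\rho_i^A\otimes(I_m-\rho_i^B).
\]
Each $\rho_i^B$ is a density matrix, so its eigenvalues lie in $[0,1]$ (they are nonnegative and sum to $1$); hence $I_m-\rho_i^B\ge 0$. Combined with $\rho_i^A\ge0$ and $w_i\ge0$, facts~(a) and~(b) yield $\rho^A\otimes I_m-\rho^{AB}\ge0$. The bound $I_n\otimes\rho^B-\rho^{AB}\ge0$ follows by the symmetric calculation.

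For the lower bound, the key identity is
\[
(I_n-\rho_i^A)\otimes(I_m-\rho_i^B)=I_{nm}-\rho_i^A\otimes I_m-I_n\otimes\rho_i^B+\rho_i^A\otimes\rho_i^B,
\]
which, after weighting by $w_i$ and summing, gives
\[
\rho^{AB}-\rho^A\otimes I_m-I_n\otimes\rho^B+I_{nm}=\sum_i w_i\,(I_n-\rho_i^A)\otimes(I_m-\rho_i^B).
\]
Since $I_n-\rho_i^A\ge0$ and $I_m-\rho_i^B\ge0$ for the same reason as above, the right-hand side is PSD by (a) and (b), yielding $\rho^{AB}\ge\rho^A\otimes I_m+I_n\otimes\rho^B-I_{nm}$. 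Finally, $\rho^{AB}\gneq0$ is simply the defining property of a density matrix (PSD and nonzero since $Tr(\rho^{AB})=1$).

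There is no real obstacle here once the separability decomposition is in hand: the entire content of the theorem is the two algebraic identities together with the observation that $0\le\rho\le I$ for every density matrix $\rho$. The only subtlety worth flagging is that the sum in the separable decomposition may be countable; this causes no trouble because the cone of PSD matrices on a finite-dimensional space is closed, so convergent positive combinations of PSD matrices remain PSD.
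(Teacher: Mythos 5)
Your proof is correct. For the two upper bounds you follow essentially the same route as the paper: decompose $\rho^{AB}=\sum_i w_i\,\rho_i^A\otimes\rho_i^B$, recover the marginals by partial tracing, and reduce everything to $0\le\rho_i^B\le I_m$; the paper verifies the resulting inequality $\sum_i w_i\,\rho_i^A\otimes\rho_i^B\le\sum_i w_i\,\rho_i^A\otimes I_m$ by passing to eigenvalue decompositions $\Lambda_i^A\otimes\Lambda_i^B\le\Lambda_i^A\otimes I_m$, whereas you invoke the equivalent facts that tensor products and positive combinations of PSD matrices are PSD — same content, slightly different packaging. Where you genuinely diverge is the lower bound. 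The paper argues it by a rather indirect reduction ("since $I_n\otimes\rho^B-\rho^{AB}\ge0$ we can assume it equals $0$, whence $I_{nm}\ge\rho^A\otimes I_m$..."), which is hard to parse as a rigorous deduction. Your argument via the termwise identity
\begin{equation*}
(I_n-\rho_i^A)\otimes(I_m-\rho_i^B)=I_{nm}-\rho_i^A\otimes I_m-I_n\otimes\rho_i^B+\rho_i^A\otimes\rho_i^B\ \ge\ 0,
\end{equation*}
summed against the weights $w_i$, is the standard proof of the Fr\'echet/Bonferroni lower bound, is fully rigorous, and is cleaner than what the paper offers; it also makes transparent that the lower bound is exactly the statement that each factorised term dominates its own inclusion--exclusion defect. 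Your closing remarks — that $\rho^{AB}\gneq0$ is just the density-matrix property, and that countable sums are handled by closedness of the PSD cone in finite dimension — are both correct and address a point the paper leaves implicit.
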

\begin{proof}
  Note that, since 
  $$
  Tr_B(\sum_i w_i \rho_i^A \otimes \rho_i^B)=  \sum_i w_i  Tr_B(\rho_i^A \otimes \rho_i^B)=\sum_i w_i \rho_i^A Tr( 
\rho_i^B)=\sum_i w_i \rho_i^A,
  $$
  it follows that $\sum_i w_i \rho_i^A=\rho^A$ and, similarly, $\sum_i w_i \rho_i^B=\rho^B$. 
  Since $\rho_i^B \leq I_m$ (it can easily be verified using the eigenvalue decomposition) we have that
  $$
 \rho^{AB}=\sum_i w_i \rho_i^A \otimes \rho_i^B \leq \sum_i w_i \rho_i^A \otimes I_m=\rho^A \otimes I_m.
 $$
 The inequality can be proven by using the eigenvalue decomposition of $\rho_i^A$ and $\rho_i^B$ and
 $$
 \rho_i^A \otimes \rho_i^B= (V_i^A \otimes V_i^B)(\Lambda_i^A \otimes \Lambda_i^B) (V_i^A \otimes V_i^B)^\dagger;
 $$
 note that $(V_i^A \otimes V_i^B)(V_i^A \otimes V_i^B)^\dagger=I_{nm}$ and so $\Lambda_i^A \otimes \Lambda_i^B$
 are the eigenvalues of $ \rho_i^A \otimes \rho_i^B$. Now, since
 $$
 \Lambda_i^A \otimes \Lambda_i^B \leq \Lambda_i^A \otimes I_m,
 $$
 the inequality follows. The proof for $B$ is analogous. 
 For the second part,  we use the first part to derive that
 $$
 I_{nm} \geq \rho^A \otimes I_m +I_n \otimes \rho^B - \sum_i w_i \rho_i^A \otimes \rho_i^B \geq0.
 $$
In fact, since $I_n \otimes \rho^B - \sum_i w_i \rho_i^A \otimes \rho_i^B\geq 0$, we can assume that
 $$
 I_n \otimes \rho^B - \sum_i w_i \rho_i^A \otimes \rho_i^B=0,
 $$
whence it follows that $I_{nm} \geq \rho^A \otimes I_m$, which is obviously true.
 Similarly for $\rho^A \otimes I_m  - \sum_i w_i \rho_i^A \otimes \rho_i^B=0$.
Then we have that
 $$
 \sum_i w_i \rho_i^A \otimes \rho_i^B \geq \rho^A \otimes I_m +I_n \otimes \rho^B -  I_{nm}.
 $$
 Finally, $\rho^{AB}$ is by definition greater than zero.
\end{proof}


\bibliographystyle{abbrvnat}
\bibliography{qmb}

\end{document}